\newtheorem{thm}{Theorem}
\newtheorem{cor}[thm]{Corollary}
\newtheorem{lem}[thm]{Lemma}
\newtheorem{prop}[thm]{Proposition}
\newtheorem{defn}[thm]{Definition}
\newtheorem{rmk}[thm]{Remark}
\newcommand{\tensor}{\otimes}
\newcommand{\p}{\prime}
\newcommand{\R}{\mathbb{R}}
\newcommand{\Z}{\mathbb{Z}}
\newcommand{\Q}{\mathbb{Q}}
\newcommand{\C}{\mathbb{C}}
\newcommand{\isom}{\cong}
\DeclareMathOperator{\Gal}{Gal}
\DeclareMathOperator{\Det}{Det}
\newcommand{\beeta}[1]{{\beta^{(#1)}}}
\newcommand{\beeeta}[2]{{\beta_{#1}^{(#2)}}}
\newcommand{\ex}[1]{{x^{(#1)}}}
\newcommand{\ol}{\overline}
\newcommand{\coop}[1]{\mathfrak{C}(#1)}
\newcommand{\Pj}{\mathbb{P}}
\newcommand{\surj}{\twoheadrightarrow}
\newcommand{\inj}{\hookrightarrow}
\newcommand{\comment}[1]{}
\newcommand{\bi}{\begin{itemize}}
\newcommand{\ei}{\end{itemize}}
\newcommand{\ben}{\begin{enumerate}}
\newcommand{\een}{\end{enumerate}}
\newcommand{\be}{\begin{equation}}
\newcommand{\ee}{\end{equation}}
\newcommand{\bea}{\begin{eqnarray}}
\newcommand{\eea}{\end{eqnarray}}
\newcommand{\bal}{\begin{align}}
\newcommand{\eal}{\end{align}}
\newcommand{\ba}{\begin{array}}
\newcommand{\ea}{\end{array}}
\newcommand{\nn}{\nonumber}
\newcommand{\sm}{{\begin{scriptsize}\mbox{sm}\end{scriptsize}}}
\newcommand{\I}{I}
\newcommand{\J}{J}
\DeclareMathOperator{\GL}{GL}
\DeclareMathOperator{\M}{M}
\begin{document}
\title{Robust Coin Flipping}
\author{Gene S. Kopp\\\texttt{gkopp@umich.edu} \and John D. Wiltshire-Gordon\thanks{It is our pleasure to thank Tom Church, for helping simplify our original proof of algebraicity of mystery-values; L\'aszl\'o Babai, for providing guidance with respect to publication; L\'aszl\'o Csirmaz, for discussing secret-sharing with us; Victor Protsak, for pointing us to Lind's article \cite{lind}; and Matthew Woolf, Nic Ford, Vipul Naik, and Steven J. Miller for reading drafts and providing helpful comments.  We are also grateful to several anonymous referees for their suggestions.}\\\texttt{johnwg@umich.edu}}




\maketitle

\begin{abstract}
Alice seeks an information-theoretically secure source of private random data.  Unfortunately, she lacks a personal source and must use remote sources controlled by other parties.  Alice wants to simulate a coin flip of specified bias $\alpha$, as a function of data she receives from $p$ sources; she seeks privacy from any coalition of $r$ of them.  We show: If $p/2 \leq r < p$, the bias can be any rational number and nothing else; if $0 < r < p/2$, the bias can be any algebraic number and nothing else.  The proof uses projective varieties, convex geometry, and the probabilistic method.    Our results improve on those laid out by Yao, who asserts one direction of the $r=1$ case in his seminal paper \cite{yao}.  We also provide an application to secure multiparty computation.
\end{abstract}


\[  \]\section{Introduction}

Alice has a perfectly fair penny---one that lands heads exactly 50\% of the time.  Unfortunately, the penny is mixed in with a jar of ordinary, imperfect pennies.  The truly fair penny can never be distinguished from the other pennies, since no amount of experimentation can identify it with certainty.  Still, Alice has discovered a workable solution.  Whenever she needs a fair coin flip, she flips all the pennies and counts the Lincolns; an even number means heads, and an odd number means tails.

Alice's technique is an example of ``robust coin flipping.''  She samples many random sources, some specified number of which are unreliable, and still manages to simulate a desired coin flip.  Indeed, Alice's technique works even if the unreliable coin flips somehow fail to be independent.

Bob faces a sort of converse problem.  He's marooned on an island, and the nearest coin is over three hundred miles away.  Whenever \emph{he} needs a fair coin flip, he calls up two trustworthy friends who don't know each other, asking for random equivalence classes modulo two.  Since the sum of the classes is completely mysterious to either of the friends, Bob may safely use the sum to make private decisions.

Bob's technique seems similar to Alice's, and indeed we shall see that the two predicaments are essentially the same.  We shall also see that the story for biased coin flips is much more complex.

\subsection{Preliminaries and Definitions}
Informally, we think of a random source as a (possibly remote) machine capable of sampling from certain probability spaces.  Formally, a \textbf{random source} is a collection $\mathcal{C}$ of probability spaces that is closed under quotients.  That is, if $X \in \mathcal{C}$ and there is a measure-preserving map\footnote{A measure-preserving map (morphism in the category of probability spaces) is a function for which the inverse image of every measurable set is measurable and has the same measure.  Any measure-preserving map may be thought of as a quotient ``up to measure zero.''} $X \to Y$, then $Y \in \mathcal{C}$.  Random sources are partially ordered by inclusion: We say that $\mathcal{C}$ is \textbf{stronger than} $\mathcal{D}$ iff $\mathcal{C} \supset \mathcal{D}$.

The quotients of a probability space $X$ are precisely the spaces a person can model with $X$.  For example, one can model a fair coin with a fair die: Label three of the die's faces ``heads" and the other three ``tails."  Similarly, one can model the uniform rectangle $[0,1]^2$ with the uniform interval $[0,1]$: Take a decimal expansion of each point in $[0,1]$, and build two new decimals, one from the odd-numbered digits and one from the even-numbered digits.\footnote{In fact, this defines an isomorphism of probability spaces between the rectangle and the interval.}  Thus, forcing $\mathcal{C}$ to be closed under quotients is not a real restriction; it allows us to capture the notion that ``a fair die is more powerful that a fair coin.''\footnote{It would also be natural (albeit unnecessary) to require that $\mathcal{C}$ is closed under finite products.}

We define an \textbf{infinite random source} to be one that contains an infinite space.\footnote{An infinite space is one that is not isomorphic to any finite space.  A space with exactly 2012 measurable sets will always be isomorphic to a finite space, no matter how large it is as a set.}  A \textbf{finite random source}, on the other hand, contains only finite probability spaces.  Further, for any set of numbers $\mathbb{S}$, we define an \textbf{$\mathbb{S}$-random source} to be one which is forced to take probabilities in $\mathbb{S}$.  That is, all the measurable sets in its probability spaces have measures in $\mathbb{S}$.

Sometimes we will find it useful to talk about the strongest random source in some collection of sources.  We call such a random source \textbf{full-strength} for that collection.  For instance, a full-strength finite random source can model any finite probability space, and a full-strength $\mathbb{S}$-random source can model any $\mathbb{S}$-random source.

In practice, when $p$ people simulate a private random source for someone else, they may want to make sure that privacy is preserved even if a few people blab about the data from their random sources or try to game the system.  Define an \textbf{$r$-robust} function of $p$ independent random variables to be one whose distribution does not change when the joint distribution of any $r$ of the random variables is altered.  Saying that $p$ people simulate a random source $r$-robustly is equivalent to asserting that the privacy of that source is preserved unless someone learns the data of more than $r$ participants.  Similarly, to simulate a random source using $p$ sources, at least $q$ of which are working properly, Alice must run a $(p-q)$-robust simulation.

By a \textbf{robust} function or simulation, we mean a $1$-robust one.

We use $J$ to denote the all-ones tensor of appropriate dimensions.  When we apply $J$ to a vector or hypermatrix, we always mean ``add up the entries.''

\subsection{Results}\label{results}
This paper answers the question ``When can a function sampling from $p$ independent random sources be protected against miscalibration or dependency among $p-q$ of them?'' (Alice's predicament), or equivalently, ``When can $p$ people with random sources simulate a \emph{private} random source for someone else\footnote{Later, we give an application to secure multiparty computation in which the output of the simulated random source has no single recipient, but is utilized by the group without any individual gaining access; see Section \ref{application}.} in a way that protects against gossip among any $p-q$ of them?'' (Bob's predicament).  In the first question, we assume that at least $q$ of the sources are still functioning correctly, but we don't know which.  In the second question, we assume that at least $q$ of the people keep their mouths shut, but we don't know who.  In the terminology just introduced, we seek a $(p-q)$-robust simulation.

Consider the case of $p$ full-strength finite random sources.  We prove: If $1 \leq q \leq p/2$, the people may simulate any finite $\mathbb{Q}$-random source and nothing better; if $p/2 < q < p$, they may simulate any finite $\overline{\mathbb{Q}}$-random source and nothing better.  The proof uses projective varieties, convex geometry, and the probabilistic method.  We also deal briefly with the case of infinite random sources, in which full-strength simulation is possible, indeed easy (see Appendix \ref{infapp}).

\subsection{Yao's robust coin flipping}
Our work fits in the context of secure multiparty computation, a field with roots in A. C. Yao's influential paper \cite{yao}.  
In the last section of his paper, entitled ``What cannot be done?'', Yao presents (a claim equivalent to) the following theorem:

\begin{thm}[A. C. Yao]
Alice has several finite random sources, and she wants to generate a random bit with bias $\alpha$.  Unfortunately, she knows that one of them may be miscalibrated, and she doesn't know which one.  This annoyance actually makes her task impossible if $\alpha$ is a transcendental number.
\end{thm}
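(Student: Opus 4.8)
The plan is to establish the contrapositive in its natural strong form: if Alice can produce a bit of bias $\alpha$ as a robust function of finitely many finite random sources, then $\alpha$ is algebraic. First I fix notation. Say there are $p$ sources, source $i$ having outcome set $[n_i]$ with true distribution $\mu_i=(\mu_{i,1},\dots,\mu_{i,n_i})$, so $\sum_j\mu_{i,j}=1$; and say Alice declares ``heads'' on the set $H\subseteq\prod_i[n_i]$ of outcome tuples (she has no private randomness, so the rule must be deterministic). By independence of the sources, the heads probability, as a function of the distributions, is the value at $(\mu_1,\dots,\mu_p)$ of
\[
F(\mathbf{x}_1,\dots,\mathbf{x}_p)\;=\;\sum_{(j_1,\dots,j_p)\in H}x_{1,j_1}x_{2,j_2}\cdots x_{p,j_p},
\]
a polynomial with coefficients in $\{0,1\}$ that is multilinear: of degree at most one in each $x_{i,j}$. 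The hypothesis is that $F(\mu_1,\dots,\mu_p)=\alpha$ and that this value is robust.

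The first step is to extract clean equations from robustness. For each $i$ and each $j\in[n_i]$ let
\[
g^{(i)}_j\;=\;\sum_{\substack{(j_1,\dots,j_p)\in H\\ j_i=j}}\ \prod_{k\neq i}x_{k,j_k}
\]
be the coefficient of $x_{i,j}$ in $F$; concretely, $g^{(i)}_j$ evaluated at the remaining distributions is the heads probability obtained when source $i$ is recalibrated to output $j$ deterministically. Robustness against a miscalibration of source $i$ therefore forces $g^{(i)}_j(\mu)=\alpha$ for \emph{every} $i$ and \emph{every} $j\in[n_i]$, where $\mu$ abbreviates the full tuple of true distributions. (If one insists that a recalibrated source retain full support, the same conclusion follows by continuity, since $\sum_j\nu_j\bigl(g^{(i)}_j(\mu)-\alpha\bigr)=0$ must hold for all probability vectors $\nu$ on $[n_i]$.)

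The second step, which I expect to be the crux, is a short derivation argument over $\mathbb{Q}$. Let $K=\mathbb{Q}\bigl(\{\mu_{i,j}\},\alpha\bigr)\subseteq\mathbb{R}$, a finitely generated field extension of $\mathbb{Q}$, and let $D\colon K\to K$ be an arbitrary $\mathbb{Q}$-linear derivation. Differentiating the identity $F(\mu)=\alpha$ by the product rule, grouping the resulting terms according to which coordinate gets differentiated, and substituting $g^{(i)}_j(\mu)=\alpha$, one finds
\[
D(\alpha)\;=\;D\bigl(F(\mu)\bigr)\;=\;\sum_{i}\sum_{j}g^{(i)}_j(\mu)\,D(\mu_{i,j})\;=\;\alpha\sum_i\sum_j D(\mu_{i,j})\;=\;\alpha\sum_i D\!\Bigl(\textstyle\sum_j\mu_{i,j}\Bigr)\;=\;\alpha\sum_i D(1)\;=\;0,
\]
the last steps using $\sum_j\mu_{i,j}=1$ for each $i$. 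So every $\mathbb{Q}$-derivation of $K$ annihilates $\alpha$. In characteristic zero this forces $\alpha$ to be algebraic over $\mathbb{Q}$: were $\alpha$ transcendental, the derivation $d/d\alpha$ of $\mathbb{Q}(\alpha)$ would extend --- the extension $K/\mathbb{Q}(\alpha)$ being separably generated --- to a derivation $D$ of $K$ with $D(\alpha)=1$, a contradiction. (Equivalently: $d\alpha=0$ in $\Omega_{K/\mathbb{Q}}$, which over a field of characteristic zero puts $\alpha$ in $\overline{\mathbb{Q}}$.) Hence a transcendental bias is impossible, as claimed.

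The difficulty is conceptual rather than computational, so the main obstacle is spotting the right formulation. One has to notice that robustness pins down \emph{all} the block-coefficients $g^{(i)}_j$ of $F$ to the common value $\alpha$ on the solution set --- not merely some convex combination of them --- and then see that this, together with the affine normalizations $\sum_j\mu_{i,j}=1$, is exactly what makes the chain rule collapse $D(\alpha)$ to $0$. A more geometric packaging is possible --- on the $\mathbb{Q}$-variety cut out by the equations $\sum_j x_{i,j}=1$ and $g^{(i)}_j=t$ the computation above shows the coordinate $t$ (which on that variety equals $F$) is constant on each irreducible component, and a component of a variety defined over $\mathbb{Q}$ carries points over $\overline{\mathbb{Q}}$, so the constant is algebraic --- but that route is heavier. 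I note finally that positivity of the $\mu_{i,j}$ is never used: the argument shows that the biases attainable by a robust simulation lie in $\overline{\mathbb{Q}}$ even if ``probabilities'' are allowed to be arbitrary reals summing to one.
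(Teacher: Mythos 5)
Your proof is correct, and it takes a genuinely different route from the paper's. The paper argues projectively: it encodes the robustness equations as tangency of the hyperplane $\mathfrak{B}$ to the Segre variety at $\otimes\beeta{i}$, passes to the dual variety $X^\vee$ (defined over $\mathbb{Q}$) via biduality, stratifies $X^\vee$ into smooth locally closed strata over $\mathbb{Q}$, and then runs a cotangent-space argument to show the line through $A$ and $J$ meets the relevant stratum in a zero-dimensional $\mathbb{Q}$-variety containing $[\alpha:1]$, whence $\alpha\in\overline{\mathbb{Q}}$. You instead notice that multilinearity makes the partial derivatives $\partial F/\partial x_{i,j}$ equal to the block coefficients $g^{(i)}_j$, which robustness pins to the common value $\alpha$ at the true distributions (your continuity remark correctly disposes of the full-support objection); then for any $\mathbb{Q}$-derivation $D$ of $K=\mathbb{Q}(\{\mu_{i,j}\},\alpha)$ the chain rule together with the normalizations $\sum_j\mu_{i,j}=1$ collapses $D(\alpha)$ to $\alpha\sum_i D(1)=0$, and the standard characteristic-zero fact that an element of a finitely generated extension killed by every $\mathbb{Q}$-derivation (equivalently, $d\alpha=0$ in $\Omega_{K/\mathbb{Q}}$) is algebraic finishes the argument; the extension of $d/d\alpha$ from $\mathbb{Q}(\alpha)$ to $K$ is legitimate since $K/\mathbb{Q}(\alpha)$ is finitely generated, hence separably generated. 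What each buys: your derivation argument is far more elementary and self-contained, uses only the affine normalization rather than positivity (matching the paper's generality, since mystery-vectors need only $J(\beeta{i})\neq 0$), handles any number of sources at once, and would even subsume the paper's separate Galois-theoretic treatment of the two-player case; the paper's heavier route buys contact with the Gelfand--Kapranov--Zelevinsky theory, exhibiting $\alpha J-A$ as a point of the dual Segre variety, which motivates the hyperdeterminant formulation $\Det(\alpha J-A)=0$ and the discussion of formats used elsewhere in the paper, and it localizes $\alpha$ inside an explicit zero-dimensional $\mathbb{Q}$-variety $S\cap\ell$.
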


\noindent It does not not suffice for Alice to just program the distribution $(\alpha \hspace*{8pt} 1-\alpha)$ into one of the random sources and record the result; this fails because she might use the miscalibrated one!  We require---as in our jar of pennies example---that Alice's algorithm be robust enough to handle unpredictable results from any single source.

Unfortunately, Yao provides no proof of the theorem, and we are not aware of any in the literature.  Yao's theorem is a special case of the results we described in the previous section.

\section{Simulating finite random sources}

The following result is classical.

\begin{prop}\label{diecon}
If $p$ players are equipped with private $d$-sided dice, they may $(p-1)$-robustly simulate a $d$-sided die.
\end{prop}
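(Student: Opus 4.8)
The plan is to exhibit one explicit protocol and verify its robustness by a one-line convolution argument. Identify the faces of each die with the cyclic group $\Z/d\Z$, so that player $i$'s roll is a random variable $X_i$ uniform on $\Z/d\Z$, and let the players output the sum
\[
S = X_1 + X_2 + \cdots + X_p \Mod{d}.
\]
The output space is itself a $d$-sided die precisely when $S$ is uniform on $\Z/d\Z$. This is immediate in the unperturbed case, since a sum of independent uniform variables on a finite abelian group is uniform; the real content is that uniformity should survive when an adversary alters the joint law of some $p-1$ of the $X_i$.

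First I would isolate the key lemma: if $U$ is uniform on $\Z/d\Z$ and $T$ is \emph{any} $\Z/d\Z$-valued random variable independent of $U$, then $U+T$ is uniform. Indeed, for each $s \in \Z/d\Z$,
\[
\Pr[U + T = s] = \sum_{t \in \Z/d\Z} \Pr[T = t]\,\Pr[U = s - t] = \frac{1}{d}\sum_{t \in \Z/d\Z} \Pr[T=t] = \frac{1}{d}.
\]
Equivalently, one may observe that the uniform measure is the unique convolution idempotent, or run the character computation $\widehat{U * T}(\chi) = \widehat{U}(\chi)\,\widehat{T}(\chi)$, which vanishes off the trivial character.

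Now fix a set $R \subseteq \{1, \dots, p\}$ of size $p-1$ whose dice are perturbed, and let $j$ be the remaining index. After the perturbation the block $(X_i)_{i \in R}$ carries some arbitrary joint distribution on $(\Z/d\Z)^{p-1}$, but $X_j$ retains its original uniform law and remains independent of that block. Applying the lemma with $U = X_j$ and $T = \sum_{i \in R} X_i \Mod{d}$ shows that $S = U + T$ is uniform on $\Z/d\Z$, so the output distribution is unchanged. Since $R$ was an arbitrary $(p-1)$-subset, the simulation is $(p-1)$-robust, which is what was claimed.

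I do not anticipate a genuine obstacle here; the one point that deserves care is the bookkeeping around the definition of robustness, namely confirming that ``altering the joint distribution of $p-1$ of the variables'' leaves the remaining variable with its original uniform law and independent of the altered block — exactly the hypothesis needed to invoke the lemma. This proposition will then serve as the easy positive half of the rational case, with the substantive difficulties deferred to the converse direction and to the algebraic regime $0 < r < p/2$.
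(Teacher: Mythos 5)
Your proof is correct and follows essentially the same route as the paper: the paper fixes an arbitrary group $G$ of order $d$ (suggesting $\Z/d\Z$ as an example) and notes that invariance of the uniform measure makes any $p$-subset of $\{g_1,\ldots,g_p,\,g_1g_2\cdots g_p\}$ independent, which is exactly your convolution lemma in slightly more general clothing. Your explicit verification that a uniform summand independent of an arbitrary block forces the sum to be uniform is a fine, and somewhat more detailed, rendering of the same idea.
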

\begin{proof}
We provide a direct construction.  Fix a group $G$ of order $d$ (such as the cyclic group $\Z/d\Z$).  The $i^{\rm th}$ player uses the uniform measure to pick $g_i \in G$ at random.  The roll of the simulated die will be the product $g_1g_2 \cdots g_p$.

It follows from the $G$-invariance of the uniform measure that any $p$-subset of \be \{g_1, g_2,...,g_p, g_1g_2 \cdots g_p\} \ee is independent!  Thus, this is a $(p-1)$-robust simulation.
\end{proof}

\noindent For an example of this construction, consider how Alice and Bob may robustly flip a coin with bias $2/5$.  Alice picks an element $a \in \Z/5\Z$, and Bob picks an element $b \in \Z/5\Z$; both do so using the uniform distribution.  Then, $a,b,$ and $a+b$ are pairwise independent!  We say that the coin came up heads if $a+b \in \{0,1\}$ and tails if $a+b \in \{2,3,4\}$.

This construction exploits the fact that several random variables may be pairwise (or $(p-1)$-setwise) independent but still dependent overall.  In cryptology, this approach goes back to the one-time pad.  Shamir \cite{shamir} uses it to develop secret-sharing protocols, and these are exploited in multiparty computation to such ends as playing poker without cards \cite{poker, game}.

\begin{cor}\label{qcon}
If $p$ players are equipped with private, full-strength finite $\Q$-random sources, they may $(p-1)$-robustly simulate a private, full-strength finite $\Q$-random source for some other player.
\end{cor}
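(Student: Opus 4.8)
The plan is to reduce Corollary \ref{qcon} to Proposition \ref{diecon} by observing that a full-strength finite $\Q$-random source is, up to the relevant notion of equivalence, generated by the family of uniform $d$-sided dice as $d$ ranges over the positive integers. First I would note that any finite $\Q$-probability space $X$ has all its point-masses rational, say with common denominator $d$; then $X$ is a quotient of the uniform $d$-sided die $U_d$, obtained by partitioning the $d$ faces into blocks of the appropriate sizes. Hence to simulate an arbitrary finite $\Q$-random source it suffices to simulate $U_d$ for every $d$, and to do so $(p-1)$-robustly.

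The key steps, in order: (1) Each player is equipped with a full-strength finite $\Q$-random source, so in particular each player can sample the uniform distribution on $\Z/d\Z$ privately, for any $d$. (2) Given a target finite $\Q$-space $X$, pick $d$ so that $X$ is a quotient of $U_d$, fix the quotient map $\pi\colon U_d \to X$, and have the players run the group-based protocol of Proposition \ref{diecon} with $G = \Z/d\Z$; this produces a random variable $s = g_1 + \cdots + g_p$ uniform on $\Z/d\Z$ such that any $p$-subset of $\{g_1,\dots,g_p,s\}$ is independent. (3) The recipient (some other player, not among the $p$) outputs $\pi(s)$, which is distributed according to $X$. (4) Robustness: altering the joint distribution of any $r \le p-1$ of the $g_i$ leaves the distribution of $s$ unchanged, by the $G$-invariance argument in Proposition \ref{diecon}, hence leaves the distribution of $\pi(s)$ unchanged; so the simulation is $(p-1)$-robust. (5) Finally, observe that the simulated source is itself full-strength among finite $\Q$-random sources, since $X$ was an arbitrary finite $\Q$-space, and that no non-$\Q$ source can be simulated because every output value $\pi(s)$ is attained with rational probability regardless of the inputs' (rational) distributions.

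I do not expect any serious obstacle here; the content is essentially bookkeeping on top of Proposition \ref{diecon}. The one point requiring a word of care is the quantifier structure of ``simulate a full-strength source'': one must be clear that the \emph{protocol} may depend on which finite $\Q$-space is being targeted (the players agree on $d$ and on $\pi$ in advance), rather than a single protocol working for all targets at once — this is the natural reading and matches the way ``full-strength'' was defined in the preliminaries. A second minor point is to confirm that the recipient genuinely gains no information beyond $\pi(s)$: since the recipient sees only the announced product $s$ (or only $\pi(s)$), and the protocol reveals nothing else, privacy against coalitions of size $\le p-1$ among the $p$ contributors follows immediately from the independence statement in Proposition \ref{diecon}.
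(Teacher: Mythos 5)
Your proposal is correct and follows essentially the same route as the paper: the paper's proof is exactly the reduction to Proposition \ref{diecon} via the observation that any finite rational probability space is a quotient of a finite uniform distribution. Your additional bookkeeping (fixing $d$ and the quotient map $\pi$, and checking robustness passes through $\pi$) just makes explicit what the paper leaves implicit.
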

\begin{proof}
Follows from Proposition \ref{diecon} because any finite rational probability space is a quotient of some finite uniform distribution.
\end{proof}

\subsection{Cooperative numbers}

We define a useful class of numbers.

\begin{defn}
If $p$ people with private full-strength finite random sources can robustly simulate a coin flip with bias $\alpha$, we say $\alpha$ is \textbf{p-cooperative}.  We denote the set of $p$-cooperative numbers by $\coop{p}$.
\end{defn}

\noindent The ability to robustly simulate coin flips of certain bias is enough to robustly simulate any finite spaces with points having those biases, assuming some hypotheses about $\coop{p}$ which we will later see to be true.

\begin{lem}
Suppose that, if $\alpha, \alpha^\p \in \coop{p}$ and $\alpha < \alpha^\p$, then $\alpha/\alpha^\p \in \coop{p}$.  If $p$ people have full-strength finite random sources, they can robustly simulate precisely finite $\coop{p}$-random sources.
\end{lem}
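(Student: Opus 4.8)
The plan is to prove both directions, since the lemma asserts that $p$ people can robustly simulate \emph{precisely} the finite $\coop{p}$-random sources. The easy direction is the necessity: if they can robustly simulate a finite probability space $X$, then every measurable set of $X$ has a measure that is a convex combination of biases already available, but in fact a cleaner argument works. Given any measurable event $E \subseteq X$, the indicator of $E$ turns a simulation of $X$ into a simulation of a coin flip with bias $\mu(E)$ (post-compose the robust simulation map with the map $X \to \{0,1\}$ sending $E \mapsto 1$); robustness is preserved under post-composition, since altering the joint distribution of $r$ inputs cannot change the output distribution of the original map, hence cannot change it after applying a fixed function. Therefore $\mu(E) \in \coop{p}$ for every measurable $E$, i.e.\ $X$ is a finite $\coop{p}$-random source. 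This direction uses neither hypothesis on $\coop{p}$.

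For the sufficiency direction, I would build up an arbitrary finite $\coop{p}$-random source from coin flips. First reduce to the case of a single probability space with atoms of probabilities $\alpha_1, \ldots, \alpha_n \in \coop{p}$ summing to $1$ (a random source is determined by its full-strength member, which we may take finite). The idea is to simulate this $n$-outcome distribution by a sequence of conditional coin flips: outcome $1$ with probability $\alpha_1$; else, conditioned on not-$1$, outcome $2$ with conditional probability $\alpha_2/(1-\alpha_1)$; and so on, the $k$-th conditional probability being $\alpha_k/(\alpha_k + \alpha_{k+1} + \cdots + \alpha_n)$. Each such conditional probability is of the form $\beta/\beta'$ with $\beta = \alpha_k$ and $\beta' = \alpha_k + \cdots + \alpha_n$; here is where the hypothesis enters, but note the hypothesis as stated only gives $\alpha/\alpha' \in \coop{p}$ for $\alpha, \alpha'$ individually in $\coop{p}$, so I also need the denominators $\alpha_k + \cdots + \alpha_n$ to lie in $\coop{p}$. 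That requires knowing $\coop{p}$ is closed under the relevant sums — which is plausibly part of ``some hypotheses about $\coop{p}$ which we will later see to be true'' and may be folded in, or deduced: a coin of bias $\alpha + \alpha'$ can be simulated by flipping independent robust coins and combining, though independence across the combination must be handled carefully to preserve robustness.

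The main obstacle is precisely this bookkeeping of robustness under composition of several independent robust simulations. When we run the ``simulate coin $1$, then coin $2$, then $\ldots$'' procedure, each coin is itself a robust simulation using the $p$ players' sources, and we are feeding the \emph{same} $p$ sources repeatedly; we must check that the overall $n$-outcome function is still $1$-robust, i.e.\ that altering one player's joint distribution across \emph{all} the rounds still leaves the output distribution fixed. The clean way is: each player draws one sample from a sufficiently rich finite source at the outset (a full-strength finite source supplies a product of all the per-round samples in one shot), so all rounds use disjoint coordinates of a single sample from player $i$; then a single adversarial player controls one coordinate-block of the joint input, and since each individual round's map was $1$-robust with respect to that player, and the rounds chain deterministically, the composite distribution is unchanged. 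I would state this as a small lemma — ``a deterministic sequential composition of $1$-robust simulations, each using an independent fresh sample from each player, is $1$-robust'' — prove it by tracking the distribution one round at a time, and then apply it to the conditional-coin construction above. The remaining steps (that finitely many atoms suffice, that the construction terminates, that closure under $\alpha/\alpha'$ delivers each conditional bias) are then routine.
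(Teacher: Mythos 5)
There is a genuine gap in your sufficiency direction, and you have flagged it yourself without closing it. Your chain of conditional coins requires the $k$-th conditional bias $\alpha_k/(\alpha_k+\cdots+\alpha_n)$ to be $p$-cooperative, and with the lemma's hypothesis (closure of $\coop{p}$ under ratios) this forces you to know that the tail sums $\alpha_k+\cdots+\alpha_n$ lie in $\coop{p}$, i.e.\ closure under sums. That is neither hypothesized nor established anywhere at this point of the paper, and your proposed patch --- ``a coin of bias $\alpha+\alpha'$ can be simulated by flipping independent robust coins and combining'' --- does not work as stated: any fixed Boolean combination of two independent coins of biases $\alpha,\alpha'$ yields biases such as $\alpha\alpha'$ or $\alpha+\alpha'-\alpha\alpha'$, never $\alpha+\alpha'$; to get $\alpha+\alpha'$ you would in effect have to simulate the three-outcome space $(\alpha,\ \alpha',\ 1-\alpha-\alpha')$ and merge two outcomes, which is precisely the kind of simulation the lemma is trying to construct, so the argument is circular.

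The paper's proof sidesteps sums entirely by inducting on the \emph{re-normalized conditional distribution} rather than on the original masses: robustly flip a coin of bias $\alpha_1$; on heads output the first point; on tails apply the induction hypothesis to $\bigl(\alpha_2/(1-\alpha_1),\ \ldots,\ \alpha_n/(1-\alpha_1)\bigr)$. The only facts needed are that $1-\alpha_1\in\coop{p}$ --- which follows from heads/tails symmetry of the bias-$\alpha_1$ simulation --- and that each $\alpha_i/(1-\alpha_1)\in\coop{p}$ by the lemma's ratio hypothesis applied to the cooperative numbers $\alpha_i$ and $1-\alpha_1$. At every later stage the same two facts are applied to the \emph{current} (already cooperative) conditional masses, so tail sums of the original $\alpha_i$'s never need to be shown cooperative. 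Your necessity direction is fine and matches the paper's quotient argument, and your care about why sequentially composing robust simulations (with each player drawing one product sample up front) stays robust is a legitimate point the paper glosses over; but as written the sufficiency argument does not go through without either proving closure under sums or switching to the inductive re-normalization above.
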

\begin{proof}
Clearly, any random source they simulate must take $p$-cooperative probabilities, because any space with a subset of mass $\alpha$ has the space $(\alpha \hspace*{8pt} 1-\alpha)$ as a quotient.

In the other direction, consider a finite probability space with point masses \be (\ba{cccc} \alpha_1 & \alpha_2 & \cdots & \alpha_n \ea) \ee in $\coop{p}$.  Robustly flip a coin of bias $\alpha_1$.  In the heads case, we pick the first point.  In the tails case, we apply induction to robustly simulate \be (\ba{ccc} \alpha_2/(1-\alpha_1) & \cdots & \alpha_n/(1-\alpha_1) \ea). \ee  This is possible because $1-\alpha_1 \in \coop{p}$ by symmetry, and so the ratios $\alpha_i/(1-\alpha_1) \in \coop{p}$ by assumption.
\end{proof}

\subsection{Restatement using multilinear algebra}
Consider a $\{$heads, tails$\}$-valued function of several independent finite probability spaces that produces an $\alpha$-biased coin flip when random sources sample the spaces.  If we model each probability space as a stochastic vector---that is, a nonnegative vector whose coordinates sum to one---we may view the product probability space as the Kronecker product of these vectors.  Each entry in the resulting tensor represents the probability of a certain combination of outputs from the random sources.  Since the sources together determine the flip, some of these entries should be marked ``heads,'' and the rest ``tails.''

For instance, if we have a fair die and a fair coin at our disposal, we may cook up some rule to assign ``heads'' or ``tails'' to each combination of results:

\be 
\left(
\begin{array}{c}
\frac16 \\ 
\vspace*{-9pt} \\ 
\frac16 \\ 
\vspace*{-9pt} \\ 
\frac16 \\ 
\vspace*{-9pt} \\  
\frac16 \\ 
\vspace*{-9pt} \\
\frac16 \\ 
\vspace*{-9pt} \\ 
\frac16
\end{array}
\right)
\tensor
\left(
\begin{array}{cc}
\frac12 & \frac12
\end{array}
\right)
=\left(
\begin{array}{cc}
 \frac1{12} & \frac1{12} \\
 \vspace*{-9pt} \\
 \frac1{12} & \frac1{12} \\
 \vspace*{-9pt} \\
 \frac1{12} & \frac1{12} \\
 \vspace*{-9pt} \\ 
 \frac1{12} & \frac1{12} \\
 \vspace*{-9pt} \\
 \frac1{12} & \frac1{12} \\
 \vspace*{-9pt} \\
 \frac1{12} & \frac1{12}
\end{array}
\right)
\longrightarrow
\left(
\begin{array}{cccccc}
 H & T \\
 \vspace*{-9pt} \\
 H & T \\
 \vspace*{-9pt} \\
 T & H \\
 \vspace*{-9pt} \\
 H & T \\
 \vspace*{-9pt} \\
 T & H \\
\vspace*{-9pt} \\
 T & H
\end{array}
\right)
\ee
If we want to calculate the probability of heads, we can substitute $1$ for $H$ and $0$ for $T$ in the last matrix and evaluate
\be 
\left(
\begin{array}{cccccc}
\frac16 & \frac16 & \frac16 & \frac16 & \frac16 & \frac16
\end{array}
\right)
\left(
\begin{array}{cc}
 1 & 0 \\
 1 & 0 \\
 0 & 1 \\
 1 & 0 \\
 0 & 1 \\
 0 & 1
\end{array}
\right)
\left(
\begin{array}{c}
\frac12 \\
\vspace*{-9pt} \\
\frac12
\end{array}
\right)
 = \frac12.
\ee
This framework gives an easy way to check if the algorithm is robust in the sense of Yao.  If one of the random sources is miscalibrated (maybe the die is a little uneven), we may see what happens to the probability of heads:
\be 
\left(
\begin{array}{cccccc}
\frac1{12} & \frac1{10} & \frac16 & \frac14 & \frac1{15} & \frac13
\end{array}
\right)
\left(
\begin{array}{cc}
 1 & 0 \\
 1 & 0 \\
 0 & 1 \\
 1 & 0 \\
 0 & 1 \\
 0 & 1
\end{array}
\right)
\left(
\begin{array}{c}
\frac12 \\
\vspace*{-9pt} \\
\frac12
\end{array}
\right)
 = \frac12.
\ee
It's unaffected!  In fact, defining 
\be 
A\left(\ex{1}, \ex{2}\right) = \ex{1} \left(
\begin{array}{cc}
 1 & 0 \\
 1 & 0 \\
 0 & 1 \\
 1 & 0 \\
 0 & 1 \\
 0 & 1
\end{array}
\right)
 \ex{2}^\top,
\ee
we see that letting 
$\beeta{1} = \left(
\begin{array}{cccccc}
\frac16 & \frac16 & \frac16 & \frac16 & \frac16 & \frac16
\end{array}
\right)$ 
and 
$\beeta{2} = \left(
\begin{array}{cc}
\frac12 & \frac12
\end{array}
\right)$ 
gives us

\begin{eqnarray}
A\left(\ex{1}, \beeta{2}\right) & = & \frac12 \nn \\
A\left(\beeta{1}, \ex{2}\right) & = & \frac12
\end{eqnarray}

\noindent for all $\ex{1}$ and $\ex{2}$ of mass one.  These relations express Yao's notion of robustness; indeed, changing at most one of the distributions to some other distribution leaves the result unaltered.  As long as no two of the sources are miscalibrated, the bit is generated with probability $1/2$.

If $\alpha$ denotes the bias of the bit, we may write the robustness condition as
\begin{eqnarray}
A\left(\ex{1}, \beeta{2}\right) & = & \alpha J\left(\ex{1}, \beeta{2}\right) \nn \\
A\left(\beeta{1}, \ex{2}\right) & = & \alpha J\left(\beeta{1}, \ex{2}\right)
\end{eqnarray}
since the $\beeta{i}$ both have mass one.  (Here as always, $J$ stands for the all-ones tensor of appropriate dimensions.)  These new equations hold for all $\ex{i}$, by linearity.  Subtracting, we obtain
\begin{eqnarray}
0 & = & (\alpha J - A)\left(\ex{1}, \beeta{2}\right) \nn \\
0 & = & (\alpha J - A)\left(\beeta{1}, \ex{2}\right)
\end{eqnarray}
which says exactly that the bilinear form $\left( \alpha J - A \right)$ is degenerate, i.e., that
\be 
\Det(\alpha J - A) = 0.\footnote{If the matrix $(\alpha J - A)$ is not square, this equality should assert that all determinants of maximal square submatrices vanish. }
\ee
These conditions seem familiar: Changing the all-ones matrix $J$ to the identity matrix $I$ would make $\alpha$ an eigenvalue for the left and right eigenvectors $\beeta{i}$.  By analogy, we call $\alpha$ a \emph{mystery-value} of the matrix $A$ and the vectors $\beeta{i}$ \emph{mystery-vectors}.  Here's the full definition:

\begin{defn}
A $p$-linear form $A$ is said to have \textbf{mystery-value} $\alpha$ and corresponding \textbf{mystery-vectors} $\beeta{i}$ when, for any $1 \leq j \leq p$,
\be 
0 = (\alpha J - A) \left( \beeta{1}, \ldots , \beeta{j-1} , \ex{j} ,\beeta{j+1}, \ldots , \beeta{p} \right) \mbox{ for all vectors $\ex{j}$.}
\ee
We further require that $J(\beeta{i}) \not = 0$.
\end{defn}

\noindent We will see later that these conditions on $\left( \alpha J - A \right)$ extend the notion of degeneracy to multilinear forms in general.  This extension is captured by a generalization of the determinant---the hyperdeterminant.\footnote{Hyperdeterminants were first introduced in the $2 \times 2 \times 2$ case by Cayley \cite{cayley}, and were defined in full generality and studied by Gelfand, Kapranov, and Zelevinsky \cite[Chapter 14]{GKZ}.}  Hyperdeterminants will give meaning to the statement $\Det (\alpha J - A) = 0 $, even when $A$ is not bilinear.

This organizational theorem summarizes our efforts to restate the problem using multilinear algebra.

\begin{thm}
A function from the product of several finite probability spaces to the set $\{H, T\}$ generates an $\alpha$-biased bit robustly iff the corresponding multilinear form has mystery-value $\alpha$ with the probability spaces as the accompanying mystery-vectors.
\end{thm}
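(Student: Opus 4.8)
The plan is to unwind the definitions on both sides and observe that they have been engineered to match; the theorem is essentially a bookkeeping statement, so the proof is a careful chain of equivalences rather than anything requiring a new idea. Let the finite probability spaces be modeled by stochastic vectors $\beeta{1}, \ldots, \beeta{p}$ (the $i^{\rm th}$ having $J(\beeta{i}) = 1$), and let the $\{H,T\}$-valued function correspond, after substituting $1$ for $H$ and $0$ for $T$, to a hypermatrix $A$ of zeros and ones; thus $A$ defines a $p$-linear form and the probability that the function outputs heads, when the sources sample according to distributions $\ex{1}, \ldots, \ex{p}$ of mass one, is exactly $A(\ex{1}, \ldots, \ex{p})$.

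First I would record the translation: robustness in Yao's sense means that altering the distribution of any single source---say replacing $\beeta{j}$ by an arbitrary distribution $\ex{j}$ of mass one, while leaving the others at their calibrated values---does not change the probability of heads, which must equal $\alpha$. In symbols, $A(\beeta{1}, \ldots, \beeta{j-1}, \ex{j}, \beeta{j+1}, \ldots, \beeta{p}) = \alpha$ for every mass-one vector $\ex{j}$ and every $j$. Second, I would promote this from mass-one vectors to all vectors: since $J(\beeta{i}) = 1$ for the remaining slots, the right-hand side $\alpha$ equals $\alpha J(\beeta{1}, \ldots, \ex{j}, \ldots, \beeta{p})$ when $\ex{j}$ has mass one, and both sides are linear in $\ex{j}$, so the identity
\be
A(\beeta{1}, \ldots, \beeta{j-1}, \ex{j}, \beeta{j+1}, \ldots, \beeta{p}) = \alpha J(\beeta{1}, \ldots, \beeta{j-1}, \ex{j}, \beeta{j+1}, \ldots, \beeta{p})
\ee
holds for all $\ex{j}$ iff it holds for all mass-one $\ex{j}$. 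Rearranging gives $0 = (\alpha J - A)(\beeta{1}, \ldots, \ex{j}, \ldots, \beeta{p})$ for all $\ex{j}$ and all $j$, which, together with the already-noted fact $J(\beeta{i}) \neq 0$, is precisely the definition of $\alpha$ being a mystery-value of $A$ with mystery-vectors $\beeta{i}$. Running every step backward establishes the converse direction with no extra work.

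The one point that deserves genuine care---and is the closest thing to an obstacle---is the quantifier promotion in the second step, specifically checking that "robust against miscalibration of source $j$" is faithfully captured by ranging $\ex{j}$ over \emph{all} probability distributions rather than, say, only over quotients reachable from the source's space; here one uses that the sources are full-strength finite, so any finite distribution is available as a potential miscalibration, and that a stochastic vector is exactly a mass-one nonnegative vector. A secondary subtlety is the affine-to-linear passage: the condition is naturally stated on the affine hyperplane $\{J(\ex{j}) = 1\}$, and one must note that an affine-linear function vanishing on a hyperplane through translation of the relevant data, combined with the homogeneity $\alpha J - A$ provides, extends to a genuine linear identity---this is routine but is where the hypothesis $J(\beeta{i}) \neq 0$ is silently used to normalize. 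Everything else is substitution, so I would present the argument as a short displayed chain of iff's, flagging these two points as the places where the hypotheses enter.
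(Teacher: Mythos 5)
Your proof is correct and follows essentially the same route as the paper, which treats this theorem as an organizational summary of exactly the chain you describe: substitute $1$/$0$ for $H$/$T$, note robustness gives $A(\ldots,\ex{j},\ldots)=\alpha J(\ldots,\ex{j},\ldots)$ for stochastic $\ex{j}$, extend by linearity, and subtract to recover the mystery-value definition (with $J(\beeta{i})\neq 0$ automatic since the $\beeta{i}$ are probability vectors). Your flagged subtleties about quantifier promotion and the affine-to-linear passage are handled in the paper by the same one-line appeal to linearity.
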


\noindent We may now show the equivalence of robustness and privacy more formally.  Privacy requires that $(\alpha J - A)\left(\otimes \beeta{i} \right)$ remains zero, even if one of the distributions in the tensor product collapses to some point mass, that is, to some basis vector.\footnote{That is, the simulated bit remains a ``mystery'' to each player, even though she can see the output of her own random source.}  This condition must hold for all basis vectors, so it extends by linearity to Yao's robustness.

\subsection{Two players}
The case $p=2$ leaves us in the familiar setting of bilinear forms.  

\begin{prop}[Uniqueness]
Every bilinear form has at most one mystery-value.
\end{prop}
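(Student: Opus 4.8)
\emph{Proof proposal.} The plan is to convert the mystery-value conditions into plain matrix equations and then pit associativity of matrix multiplication against itself. Regard $A$ as an $m\times n$ matrix, let $\mathbf{1}$ denote an all-ones column vector of whatever length is needed, and recall that $J(u,v)=(\mathbf{1}^\top u)(\mathbf{1}^\top v)$, while $J$ applied to a single vector returns the sum of its entries. Since each defining relation holds for \emph{every} choice of the free vector $\ex{j}$, the cases $j=2$ and $j=1$ say precisely that ``$\alpha$ is a mystery-value with mystery-vectors $\beeta{1},\beeta{2}$'' is equivalent to the pair of identities
\[
\beeta{1}^\top A = \alpha\, J(\beeta{1})\, \mathbf{1}^\top,
\qquad
A\,\beeta{2} = \alpha\, J(\beeta{2})\, \mathbf{1},
\]
where, crucially, $J(\beeta{1})=\mathbf{1}^\top\beeta{1}\neq 0$ and $J(\beeta{2})=\mathbf{1}^\top\beeta{2}\neq 0$ by the hypothesis on mystery-vectors.

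Now suppose $\alpha'$ is a second mystery-value, with its own mystery-vectors $\gamma^{(1)},\gamma^{(2)}$, so that $\gamma^{(1)\top}A=\alpha'\,J(\gamma^{(1)})\,\mathbf{1}^\top$ and $A\gamma^{(2)}=\alpha'\,J(\gamma^{(2)})\,\mathbf{1}$ as well. The key move is to cross-pair: evaluate the scalar $\beeta{1}^\top A\,\gamma^{(2)}$ in two ways. Associating to the left and using the row identity for $\alpha$ gives $\beeta{1}^\top A\,\gamma^{(2)} = \alpha\, J(\beeta{1})\,\mathbf{1}^\top\gamma^{(2)} = \alpha\, J(\beeta{1})\,J(\gamma^{(2)})$. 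Associating to the right and using the column identity for $\alpha'$ gives $\beeta{1}^\top A\,\gamma^{(2)} = \alpha'\,J(\gamma^{(2)})\,\mathbf{1}^\top\beeta{1} = \alpha'\,J(\beeta{1})\,J(\gamma^{(2)})$. Since $J(\beeta{1})\,J(\gamma^{(2)})\neq 0$, cancelling it yields $\alpha=\alpha'$.

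There is no serious obstacle; the one thing that has to be gotten right is the cross-pairing---contracting the \emph{left} mystery-vector of one value against the \emph{right} mystery-vector of the other---so that both halves of the mystery-value definition are brought to bear (pairing two left vectors, or two right vectors, gives nothing). Note also that nothing in the argument requires $A$ to be square, so it covers the rectangular case of the footnote as well. Finally, this contraction trick does not obviously survive to $p>2$: a $p$-linear form has $p$ symmetric slots rather than a distinguished ``left'' and ``right'' one, and there is no analogous single evaluation pitting two complete families of mystery-vectors against each other---which is presumably why uniqueness is asserted here only in the bilinear case.
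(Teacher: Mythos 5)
Your proof is correct and is essentially the paper's argument: both hinge on the same cross-pairing, evaluating $A$ at the left mystery-vector of one value against the right mystery-vector of the other and simplifying both ways, with the nonvanishing of $J$ on the mystery-vectors doing the cancellation. You simply keep the normalization factors $J(\beeta{1})$, $J(\gamma^{(2)})$ explicit where the paper absorbs them.
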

\begin{proof}
Suppose $\alpha$ and $\alpha^\p$ are both mystery-values for the matrix $A$ with mystery-vectors $\beeta{i}$ and $\beeta{i}^\p$, respectively.  We have four equations at our disposal, but we will only use two:

\begin{eqnarray}
A\left(\ex{1} \hspace{1.73pt}, \rule[0pt]{0pt}{12pt} \hspace{1.73pt} \beeta{2} \right) & = & \alpha \nn \\
A\left(\beeta{1}^\p, \rule[0pt]{0pt}{12pt} \ex{2} \right) & = & \alpha^\p
\end{eqnarray}
We observe that a compromise simplifies both ways:
\be 
\alpha = A\left(\beeta{1}^\p, \beeta{2} \right) = \alpha^\p,
\ee
so any two mystery-values are equal.
\end{proof}

\begin{cor}
Two players may not simulate an irrationally-biased coin.
\end{cor}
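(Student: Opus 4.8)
The plan is to prove the equivalent statement that every $2$-cooperative number is rational; by the Uniqueness Proposition it suffices to produce \emph{one} rational mystery-value of the relevant matrix. So suppose $\alpha \in \coop{2}$. By the organizational theorem there is a bilinear form $A$ with mystery-value $\alpha$ and stochastic mystery-vectors $\beeta{1}, \beeta{2}$, and we may take $A$ to be a $0$-$1$ matrix, since its entries merely record whether a pair of outcomes was labelled $H$.

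First I would unwind the mystery-value conditions into coordinates. Evaluating $0 = (\alpha J - A)(\ex{1}, \beeta{2})$ on every standard basis vector $\ex{1}$, and $0 = (\alpha J - A)(\beeta{1}, \ex{2})$ on every standard basis vector $\ex{2}$, and adjoining the normalizations $J(\beeta{1}) = 1$ and $J(\beeta{2}) = 1$, one obtains the system: the pairing of $\beeta{1}$ with each column of $A$ equals $\alpha$; the pairing of $\beeta{2}$ with each row of $A$ equals $\alpha$; the coordinates of $\beeta{1}$ sum to $1$; the coordinates of $\beeta{2}$ sum to $1$. The decisive point is that, \emph{once the mass-one normalizations are imposed}, this is an honest linear (affine) system in the unknowns $\left(\beeta{1}, \beeta{2}, \alpha\right)$ with integer coefficients: the only potentially quadratic term, $\alpha \cdot J(\beeta{i})$, has been frozen to $\alpha$.

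This system is consistent over $\R$ — the genuine simulation data solve it — so it also has a rational solution, say rational vectors $u', v'$ and a rational scalar $\alpha'$ satisfying the same equations. Since $J(u') = J(v') = 1 \neq 0$, running the coordinate translation backwards shows $\alpha'$ is a mystery-value of $A$ with mystery-vectors $u', v'$. Now $A$ has mystery-values $\alpha$ and $\alpha'$, so the Uniqueness Proposition forces $\alpha = \alpha' \in \Q$; hence no irrational number is $2$-cooperative.

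I expect the only real obstacle to be the bookkeeping in the middle step: verifying that ``$\alpha$ is a mystery-value with mass-one mystery-vectors'' is \emph{exactly} equivalent to a linear system over $\Q$, with no quadratic cross term surviving. Everything else is standard — rational solubility of a consistent rational linear system, plus the already-established uniqueness. Notably, nonnegativity of the mystery-vectors plays no role here, so none of the convex-geometry machinery is needed for this corollary.
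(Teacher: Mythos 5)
Your proof is correct, but it takes a genuinely different route from the paper's. The paper argues via Galois theory: any $\sigma \in \Gal(\C/\Q)$ carries a mystery-value $\alpha$ of $A$ to a mystery-value $\sigma(\alpha)$ of $\sigma(A)=A$, so the Uniqueness Proposition gives $\sigma(\alpha)=\alpha$ for every $\sigma$, placing $\alpha$ in the fixed field $\Q$. You instead exploit the fact that, once the normalizations $J\bigl(\beeta{1}\bigr)=J\bigl(\beeta{2}\bigr)=1$ are adjoined, the mystery-value conditions evaluated on basis vectors become an honest affine linear system over $\Z$ in the unknowns $\bigl(\beeta{1},\beeta{2},\alpha\bigr)$ --- the would-be quadratic term $\alpha\,J\bigl(\beeta{i}\bigr)$ collapses to $\alpha$ --- and since rank is unchanged under the field extension $\Q \subset \R$, real consistency gives a rational solution $(u',v',\alpha')$; running the computation backwards (using $J(u')=J(v')=1\neq 0$) shows $\alpha'$ is a mystery-value of $A$, and uniqueness forces $\alpha=\alpha'\in\Q$. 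Both arguments pivot on the Uniqueness Proposition and both are specific to $p=2$ (for $p\geq 3$ uniqueness fails and your system would no longer be linear), but yours is the more elementary: it needs no automorphisms of $\C$ or fixed-field facts, only rational solvability of consistent rational linear systems, and it produces an explicit rational witness rather than arguing by invariance. Your closing observations --- that the equivalence with a linear system is the only point needing care, and that nonnegativity of the mystery-vectors is irrelevant here --- are both accurate.
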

\begin{proof}
Say the $\{0,1\}$-matrix $A$ has mystery-value $\alpha$.  Any field automorphism $\sigma \in \Gal(\C/\Q)$ respects all operations of linear algebra, so $\sigma(\alpha)$ is a mystery-value of the matrix $\sigma(A)$.  But the entries of $A$ are all rational, so $\sigma(A)=A$.  Indeed, $\sigma(\alpha)$ must also be a mystery-value of $A$ itself.  By the uniqueness proposition, $\sigma(\alpha) = \alpha$.  Thus, $\alpha$ is in the fixed field of every automorphism over $\Q$ and cannot be irrational.
\end{proof}

\begin{thm}\label{2rat}
$\coop{2} = \Q \cap [0,1]$.  Two people with finite random sources can robustly simulate only $\Q$-random sources; indeed, they can already simulate a full-strength finite $\Q$-random source if they have full-strength finite $\Q$-random sources.
\end{thm}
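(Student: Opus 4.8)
The plan is to combine the two halves of the claim, each of which is essentially already in hand. For the containment $\coop{2} \subseteq \Q \cap [0,1]$: any $2$-cooperative $\alpha$ arises as a mystery-value of some $\{0,1\}$-matrix $A$ (the heads/tails assignment on the product of two finite spaces), and $\alpha \in [0,1]$ because it is a probability. By the Corollary just proved, $\alpha$ cannot be irrational, so $\alpha \in \Q \cap [0,1]$. For the reverse containment, I would invoke Corollary \ref{qcon} (or more directly Proposition \ref{diecon} with $p=2$): given full-strength finite $\Q$-random sources, two players can $1$-robustly simulate a $d$-sided die for any $d$, hence any finite rational probability space — in particular an $\alpha$-biased coin for any $\alpha \in \Q \cap [0,1]$ — by grouping outcomes as in the bias-$2/5$ example. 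This shows $\Q \cap [0,1] \subseteq \coop{2}$ and simultaneously establishes the ``indeed'' clause that full-strength finite $\Q$-random sources suffice to robustly simulate a full-strength finite $\Q$-random source.

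Writing this out: first I would state that $\coop{2} \subseteq [0,1]$ trivially, and that $\coop{2} \subseteq \Q$ is exactly the content of the preceding Corollary (rephrasing ``cannot simulate an irrationally-biased coin'' as ``every element of $\coop{2}$ is rational''), giving $\coop{2} \subseteq \Q \cap [0,1]$. Next I would handle the other inclusion and the second sentence of the theorem together: by Corollary \ref{qcon}, two players with full-strength finite $\Q$-random sources can $1$-robustly simulate a full-strength finite $\Q$-random source for a third party; since every finite rational probability space, and in particular every coin of rational bias in $[0,1]$, is a quotient of a finite uniform distribution, this both shows $\Q \cap [0,1] \subseteq \coop{2}$ and proves the ``indeed'' clause verbatim. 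Finally, the middle sentence — ``two people with (arbitrary) finite random sources can robustly simulate only $\Q$-random sources'' — follows from the Lemma on cooperative numbers once we know $\coop{2} = \Q \cap [0,1]$ is closed under the relevant quotient operation, together with the fact (noted just before that Lemma) that any simulated source must take $p$-cooperative probabilities.

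The only real content here is bookkeeping: checking that the hypothesis of the cooperative-numbers Lemma holds for $\coop{2} = \Q \cap [0,1]$, namely that $\alpha, \alpha' \in \Q \cap [0,1]$ with $\alpha < \alpha'$ forces $\alpha/\alpha' \in \Q \cap [0,1]$ — which is immediate since $0 < \alpha/\alpha' < 1$ and the rationals are a field. So I do not expect any genuine obstacle; the proof is a two-line assembly of the Corollary (for $\subseteq \Q$), Corollary \ref{qcon} (for $\supseteq$ and the strengthened conclusion), and the cooperative-numbers Lemma (to upgrade the coin-flip statement to a statement about all finite $\Q$-random sources). If anything requires a moment's care, it is phrasing the argument so that the three sentences of the theorem statement are each visibly addressed, rather than only the set equality $\coop{2} = \Q \cap [0,1]$.
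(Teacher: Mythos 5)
Your proposal is correct and follows essentially the same route as the paper: the irrationality corollary gives $\coop{2} \subseteq \Q \cap [0,1]$, and Corollary \ref{qcon} gives the reverse inclusion together with the ``indeed'' clause. The paper's own proof is just a terser version of this; your extra step of verifying the quotient-closure hypothesis of the cooperative-numbers Lemma is harmless bookkeeping that the paper leaves implicit.
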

\begin{proof}
The previous corollary shows that no probability generated by the source can be irrational, since it could be used to simulate an irrationally-biased coin.  The other direction has already been shown in Corollary \ref{qcon}.
\end{proof}

\begin{prop}\label{ratcon}
If $p$ people have full-strength finite $\Q$-random sources, they may $(p-1)$-robustly simulate any finite $\Q$-random source.
\end{prop}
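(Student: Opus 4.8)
The plan is to reduce the statement to the dice construction of Proposition~\ref{diecon}, via the standard observation that a finite rational probability space is a lumping of a uniform one. Let $\mathcal{D}$ be the finite $\Q$-random source to be simulated, and let $Y \in \mathcal{D}$. Being a member of a $\Q$-random source, $Y$ has finitely many atoms whose masses are rational; clearing denominators, write them as $a_1/d, \dots, a_n/d$ with positive integers $a_i$ summing to $d$. Partitioning the $d$ atoms of the uniform distribution $U_d$ into blocks of sizes $a_1, \dots, a_n$ and collapsing each block exhibits $Y$ as a quotient of $U_d$. So it is enough to $(p-1)$-robustly produce a sample from $U_d$ and then apply this quotient map.

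For the first step I would apply Proposition~\ref{diecon} with $G$ any group of order $d$: the players draw $g_1, \dots, g_p$ independently and uniformly from $G$, and $g_1 g_2 \cdots g_p$ is distributed as $U_d$, the simulation being $(p-1)$-robust because any $p$ of $g_1, \dots, g_p, g_1 g_2 \cdots g_p$ are mutually independent. Composing this with the deterministic quotient map $U_d \to Y$ gives a function of $g_1, \dots, g_p$ whose output has the distribution of $Y$. This composite is still $(p-1)$-robust: altering any $p-1$ of the $g_i$ leaves the distribution of $g_1 g_2 \cdots g_p$ unchanged, hence leaves the distribution of any fixed function of it unchanged. Since $Y \in \mathcal{D}$ was arbitrary, the players can $(p-1)$-robustly simulate every space of $\mathcal{D}$, which is exactly what it means to simulate $\mathcal{D}$.

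I do not anticipate a genuine difficulty here; the proposition is essentially a repackaging of Proposition~\ref{diecon} (one could equally cite Corollary~\ref{qcon}, since a full-strength finite $\Q$-random source models $\mathcal{D}$ and is itself simulated $(p-1)$-robustly there). The only points needing a sentence of justification are the two ``preservation'' facts used above — that $(p-1)$-robustness survives a deterministic post-processing step, and that simulating a source amounts to simulating each of its member spaces — both of which follow directly from the definitions of robustness and of simulation given in the preliminaries.
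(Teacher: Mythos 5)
Your argument is correct and is exactly the paper's route: the paper proves Proposition~\ref{ratcon} by citing Proposition~\ref{diecon} together with the fact (used already in Corollary~\ref{qcon}) that every finite rational probability space is a quotient of a finite uniform distribution. You have simply spelled out the details — the quotient map $U_d \to Y$ and the preservation of $(p-1)$-robustness under deterministic post-processing — which the paper leaves implicit.
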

\begin{proof}
Follows from Proposition \ref{diecon} just as the constructive direction of Theorem \ref{2rat} does.
\end{proof}

\subsection{Three or more players: what can't be done}
Even if three or more players have private finite random sources, it remains impossible to robustly simulate a transcendentally-biased coin.  The proof makes use of algebraic geometry, especially the concept of the dual of a complex projective variety.  We describe these ideas briefly in Appendix \ref{geoapp}.  For a more thorough introduction, see \cite[Lec. 14, 15, 16]{harris} or \cite[Ch. 1]{GKZ}.

Let $A$ be a rational multilinear functional of format $n_1 \times \cdots \times n_p$ (see Section \ref{format}), and let $X$ be the Segre variety of the same format.  Set $n := n_1 \cdots n_p - 1$, the dimension of the ambient projective space where $X$ lives.  In what follows, we prove that $A$ has algebraic mystery-values. This is trivial when $A$ is a multiple of $J$, and for convenience we exclude that case.

\begin{prop}\label{tangency}
Let $A$ have mystery-value $\alpha$ with corresponding mystery-vectors $\beeta{i}$.  Define $\beta = \otimes \beeta{i}$, and let $\mathfrak{B}$ denote the hyperplane of elements of $(\Pj^n)^\ast$ that yield zero when applied to $\beta$.  Now $(\mathfrak{B}, \left( \alpha J - A \right))$ is in the incidence variety $W_{X^\vee}$ (see Section \ref{geoapp1}).
\end{prop}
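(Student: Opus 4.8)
The plan is to recognize the mystery-value equations as the assertion that the hyperplane cut out by $\alpha J - A$ is tangent to the Segre variety $X$ at the decomposable point $\beta$, and then to feed this into the definition of the incidence variety, using reflexivity (biduality) to move between $X$ and $X^\vee$. Before anything else I would clear away degeneracies so that every object is an honest point or hyperplane: the hypothesis $J(\beeta{i}) \neq 0$ forces each $\beeta{i} \neq 0$, hence $\beta = \otimes \beeta{i} \neq 0$, so $[\beta]$ is a genuine point of $\Pj^n$, and it lies on $X$ because it is decomposable; and since we have excluded the case that $A$ is a multiple of $J$, the functional $\alpha J - A$ is nonzero, so it defines a genuine hyperplane of $\Pj^n$, a genuine point of $(\Pj^n)^\ast$, and $\mathfrak{B}$ --- which under the canonical identification $((\Pj^n)^\ast)^\ast = \Pj^n$ is nothing but $\beta$ --- is a genuine hyperplane of $(\Pj^n)^\ast$.

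Next I would write down the embedded tangent space of $X$ at $\beta$. Since $X$ is a product of projective spaces it is smooth everywhere, so $\beta$ is a smooth point; and differentiating the Segre parametrization $([v_1], \dots, [v_p]) \mapsto [v_1 \otimes \cdots \otimes v_p]$ at $(\beeta{1}, \dots, \beeta{p})$ shows, classically, that the affine cone over $T_\beta X$ is the linear span of the $p$ families of ``directional'' tensors $\beeta{1} \otimes \cdots \otimes \beeta{j-1} \otimes \ex{j} \otimes \beeta{j+1} \otimes \cdots \otimes \beeta{p}$, as $j$ runs over $1, \dots, p$ and $\ex{j}$ ranges over the $j$-th vector space. (Taking $\ex{j} = \beeta{j}$ shows $\beta$ itself lies in this span, as it must.)

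Now comes the heart of the argument. The definition of mystery-value says precisely that, for every $j$, the functional $\alpha J - A$ annihilates every tensor $\beeta{1} \otimes \cdots \otimes \ex{j} \otimes \cdots \otimes \beeta{p}$; comparing with the previous paragraph, this is exactly the statement that $\alpha J - A$ vanishes on the whole affine tangent space of $X$ at $\beta$, i.e., that the hyperplane $\{\alpha J - A = 0\}$ contains $T_\beta X$. By the definition of the incidence (conormal) variety --- the closure of the set of pairs consisting of a smooth point of $X$ together with a hyperplane tangent to $X$ there, a closure that is unnecessary here because $X$ is smooth --- this says that the pair $(\beta, \{\alpha J - A = 0\})$ lies in $W_X$. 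Finally, the reflexivity theorem identifies $W_{X^\vee}$ with $W_X$ under the canonical swap $\Pj^n \times (\Pj^n)^\ast \cong (\Pj^n)^\ast \times ((\Pj^n)^\ast)^\ast$, and under this identification $\beta$ becomes the hyperplane $\mathfrak{B}$ while the hyperplane $\{\alpha J - A = 0\}$ becomes the point $\alpha J - A$; translating, $(\mathfrak{B}, \alpha J - A) \in W_{X^\vee}$, as claimed. (Along the way we have shown $\alpha J - A \in X^\vee$, which is the precise sense of ``$\Det(\alpha J - A) = 0$'' for a general multilinear form.)

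The one genuine difficulty is the bookkeeping among $\Pj^n$, its dual, and their biduals, together with pinning down exactly which form of the definition of $W_{X^\vee}$ from Section \ref{geoapp1} is in force and checking that no essential closure --- hence no loss of control over the point of tangency --- intervenes. All of this is rendered harmless by the smoothness of the Segre variety; once its embedded tangent space at a decomposable point is written down, the identification of ``mystery-value'' with ``tangency'' is immediate and the proposition follows.
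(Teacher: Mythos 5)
Your proof is correct and follows essentially the same route as the paper: both arguments establish that the hyperplane $\{\,x : (\alpha J - A)(x)=0\,\}$ is tangent to the Segre variety $X$ at the smooth point $\beta$ and then apply the biduality theorem to transfer the pair into $W_{X^\vee}$. The only difference is cosmetic---you verify the tangency by writing out the embedded tangent space of $X$ at $\beta$ from the Segre parametrization, whereas the paper gets it by citing the partial-derivatives formulation of degeneracy (Definition \ref{pderiv}), which encodes the same computation.
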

\begin{proof}
By the biduality theorem \ref{biduality}, the result would follow from the statement,
\be \mbox{``The hyperplane     } \left\{\hspace{.02in} x \hspace{.015in} : \hspace{.025in} (\alpha J - A)(x) = 0 \right\} \mbox{       is tangent to $X$ at $\beta$.''} \ee
But this statement is true by the partial derivatives formulation (Definition \ref{pderiv}) of the degeneracy of $\left( \alpha J - A \right)$.
\end{proof}

\noindent It is a standard fact (see \textit{e.g.} \cite[p. 6]{mumford}) that any variety has a stratification into locally closed smooth sets.  The first stratum of $X^{\vee}$ is the Zariski-open set of smooth points of the variety.  This leaves a subvariety of strictly smaller dimension, and the procedure continues inductively.  Equations for the next stratum may be found by taking derivatives and determinants.

Since $X^{\vee}$ itself is defined over $\mathbb{Q}$, it follows that each of its strata is as well.  We conclude that there must be some subvariety $S \subseteq X^{\vee}$, defined over $\Q$, that contains $\left( \alpha J - A \right)$ as a smooth point.

\begin{thm}
Any mystery-value of $A$ must be an algebraic number.
\end{thm}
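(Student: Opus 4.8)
The plan is to combine Proposition \ref{tangency} with the stratification remarks that immediately precede the theorem, using a counting/dimension argument to pin down the mystery-value as a root of a polynomial over $\Q$. Here is the setup: by Proposition \ref{tangency}, the pair $(\mathfrak{B}, \alpha J - A)$ lies in the incidence variety $W_{X^\vee}$, and projecting to the second factor shows $\alpha J - A \in X^\vee$. By the stratification fact quoted just before the theorem, there is a subvariety $S \subseteq X^\vee$, defined over $\Q$, containing $\alpha J - A$ as a smooth point. The key observation is that the points of the pencil $\{\,tJ - A : t \in \C\,\}$ form a line $L$ in $(\Pj^n)^\ast$ (parametrized by $t$, since $A$ is not a multiple of $J$), and this line is defined over $\Q$ because both $J$ and $A$ have rational entries.

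First I would intersect the line $L$ with the $\Q$-subvariety $S$. The intersection $L \cap S$ is a subscheme of $L \cong \Pj^1$ defined over $\Q$; provided it is proper (i.e., $L \not\subseteq S$), it is a finite set of points, each of which is algebraic over $\Q$, and in particular the value $t = \alpha$ is algebraic. So the main work is to rule out the possibility $L \subseteq S$. I would do this by choosing $S$ more carefully: run the stratification so that $S$ is the unique stratum whose smooth locus contains $\alpha J - A$, and note that $\alpha J - A \ne 0$ in $(\Pj^n)^\ast$ while $X^\vee$ is a proper subvariety of $(\Pj^n)^\ast$ (the dual of the Segre variety is not the whole space, since $X$ is not a hyperplane). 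If $L$ were contained in $S \subseteq X^\vee$, then in particular $tJ - A \in X^\vee$ for all but finitely many $t$; but for $|t|$ large, $tJ - A$ is a small perturbation of $J$, and $J$ itself is not in $X^\vee$ — because the hyperplane $J(x) = 0$ is the generic hyperplane section, which meets the smooth variety $X$ transversally rather than tangentially. Since $X^\vee$ is closed, a whole neighborhood of $J$ (in the analytic topology on the line $L$, i.e., all $t$ with $|t|$ large) avoids $X^\vee$, contradicting $L \subseteq X^\vee$.

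Thus $L \not\subseteq S$, so $L \cap S$ is a finite $\Q$-subscheme of $\Pj^1$, its points are algebraic, and $\alpha$ — being the parameter value of the point $\alpha J - A \in L \cap S$ — is an algebraic number. If $A$ happens to admit several mystery-values, the same argument applies to each, so in fact all of them are algebraic, and indeed they all lie among the finitely many roots cut out by $L \cap X^\vee$.

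I expect the main obstacle to be the step $J \notin X^\vee$, equivalently that the hyperplane $\{J(x)=0\}$ is not tangent to $X$. One must verify this directly rather than invoke genericity glibly: the point $J \in (\Pj^n)^\ast$ is a specific, non-generic point, so ``the generic hyperplane is not tangent'' does not immediately apply. The cleanest route is the partial-derivatives criterion (Definition \ref{pderiv}) used in the proof of Proposition \ref{tangency}: the hyperplane $J(x) = 0$ is tangent to the Segre variety $X$ at a point $\beta = \otimes \beeta{i}$ iff $J$ is a mystery-value-style relation for $\beta$, i.e. $J(\beeta{1}, \ldots, \ex{j}, \ldots, \beeta{p}) = 0$ for all $\ex{j}$ and all $j$; but $J(\beeta{1}, \ldots, \ex{j}, \ldots, \beeta{p}) = \left(\prod_{k \ne j} J(\beeta{k})\right) J(\ex{j})$, which is a nonzero multiple of $J(\ex{j})$ and hence not identically zero. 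So $\{J(x)=0\}$ is tangent to $X$ nowhere, giving $J \notin X^\vee$. Once that is in hand, the dimension/finiteness argument above closes out the proof.
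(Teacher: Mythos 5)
Your reduction of the problem to ``$L \not\subseteq S$'' rests on the claim that $J \notin X^\vee$, and that claim is false. In your tangency computation you assert that $\prod_{k \neq j} J(\beta^{(k)})$ is a \emph{nonzero} multiple, but nothing forces the witnesses of tangency to satisfy $J(\beta^{(k)}) \neq 0$: that normalization is exactly the extra condition imposed on mystery-vectors, and it is not part of the definition of degeneracy or of tangency at a point of the Segre variety. If you take $\beta^{(1)}$ and $\beta^{(2)}$ with coordinate sum zero (possible as soon as two of the $n_i$ are at least $2$) and the remaining factors arbitrary nonzero, then $J(\beta^{(1)}, \ldots, x^{(j)}, \ldots, \beta^{(p)}) = 0$ for every $j$ and every $x^{(j)}$, so the hyperplane $\{J(x)=0\}$ \emph{is} tangent to $X$ at such $\beta$ and $J \in X^\vee$; for $p=2$ this is just the statement that the all-ones matrix is singular. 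Worse, the danger you were trying to exclude is real: the entire pencil $\{tJ - A\}$ can lie inside $X^\vee$ (for $p=2$, take $A$ with two equal rows, so $tJ - A$ is singular for every $t$). Hence no argument that uses only the membership $\alpha J - A \in X^\vee$ (the projection of Proposition \ref{tangency} to the second factor) can succeed.

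What the paper's proof uses, and what your proposal discards, is the full incidence datum of Proposition \ref{tangency}: the hyperplane $\mathfrak{B}$ of forms vanishing on $\beta$ is tangent (in the $W_{X^\vee}$ sense) to the stratum $S$ at $A' = \alpha J - A$, while the mystery-vector condition $J(\beta) \neq 0$ says precisely that $J \notin \mathfrak{B}$, so $\mathfrak{B}$ meets the line $\ell$ through $A$ and $J$ transversally at $A'$. A linear form $f$ cutting out $\mathfrak{B}$ is then nonzero in $T^*_{A'}(\ell)$ but maps to zero in $T^*_{A'}(S)$, hence to zero in $T^*_{A'}(S \cap \ell)$; since $T^*_{A'}(\ell)$ is one-dimensional, $T^*_{A'}(S \cap \ell) = 0$ and $A'$ is an isolated point of the $\Q$-variety $S \cap \ell$, even when $\ell \subseteq X^\vee$. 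This local argument is what replaces your (false) global statement $J \notin X^\vee$, and it is where the hypothesis $J(\beta^{(i)}) \neq 0$ actually enters. A smaller repair your write-up would need in any case: the stratum $S$ is only locally closed, so to get finiteness you should intersect $\ell$ with a closed $\Q$-subvariety (e.g.\ $\overline{S}$ or $X^\vee$ itself), since $\ell \not\subseteq S$ alone does not prevent $\ell \cap S$ from being cofinite in $\ell$.
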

\begin{proof}
Let $A' = \alpha J - A$, and let $\ell$ be the unique projective line through $A$ and $J$.  Let $\mathbb{A}$ be some open affine in $(\Pj^n)^\ast$ containing $A'$ and $J$.  The hyperplane $\frak{B} \cap \mathbb{A}$ is the zero locus of some degree one regular function $f$ on $\mathbb{A}$.  On $\ell \cap \mathbb{A}$, this function will be nonzero at $J$ (since $J(\beta) \neq 0$), so $f$ is linear and not identically zero.  It follows that $f(A) = 0$ is the unique zero of $f$ on $\ell$, occurring with multiplicity one.  Thus, the restriction of $f$ to the local ring of $\ell$ at $A'$ is in the maximal ideal but not its square:
\be 
f \neq 0 \in \frak{m}_{\ell}/\frak{m}_{\ell}^2 = T^*_{A'}(\ell) \mbox{\hspace{15pt} where $\frak{m}_{\ell}$ denotes the maximal ideal in $ \mathcal{O}_{\ell,A'} $}.
\ee
On the other hand, Proposition \ref{tangency} shows that $(\frak{B}, A') \in W_{X^\vee}$. Consequently, $\frak{B}$ must be tangent to $S$, that is, $f$ restricted to $S$ \emph{is} in the square of the maximal ideal of the local ring of $S$ at $A'$:
\be 
f = 0 \in \frak{m}_{S}/\frak{m}_{S}^2 = T^*_{A'}(S) \mbox{\hspace{15pt} where $\frak{m}_{S}$ denotes the maximal ideal in $ \mathcal{O}_{S,A'} $}.
\ee
The function $f$ must be zero in the cotangent space of the intersection $S \cap \ell$ since the inclusion $S \cap \ell \inj S$ induces a surjection
\be T^*_{A'}(S) \surj T^*_{A'}(S \cap \ell), \ee
so the corresponding surjection
\be T^*_{A'}(\ell) \surj T^*_{A'}(S \cap \ell) \ee
must kill $f$.  This first space is the cotangent space of a line, hence one dimensional.  But $f$ is nonzero in the first space, so the second space must be zero.  It follows that $S \cap \ell$ is a zero dimensional variety.

Of course, $[\alpha : 1]$ lies in $S \cap \ell$, which is defined over $\mathbb{Q}$!  The number $\alpha$ must be algebraic.
\end{proof}

\noindent Therefore, the set of $p$-cooperative numbers is contained in $\ol\Q \cap [0,1]$, and we have established the following proposition:
\begin{prop}
If several people with finite random sources simulate a private random source for someone else, that source must take probabilities in $\ol\Q$.
\end{prop}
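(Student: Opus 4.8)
The plan is to package everything that has already been proven into a single statement about the probabilities appearing in a simulated source. Recall that a private random source simulated by $p$ people is built from finitely many probability spaces; by the reduction in Section 2.3, each candidate output probability $\alpha$ of the simulated source corresponds to a $\{H,T\}$-valued function on the product of those spaces, which in turn corresponds to a $0/1$-hypermatrix $A$ with mystery-value $\alpha$ and the players' distributions as mystery-vectors. So it suffices to observe that \emph{any} probability $\alpha$ that can arise this way must be algebraic.

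First I would recall that ``simulating a private random source'' means simulating every finite probability space in the source's collection. For any measurable set $E$ in such a space, the space $(\alpha \;\; 1-\alpha)$ with $\alpha = \mathrm{mass}(E)$ is a quotient, so the players can robustly simulate a coin of bias $\alpha$; hence $\alpha \in \coop{p}$. Second, I would invoke the restatement theorem of Section 2.3: a robust simulation of an $\alpha$-biased coin is exactly a $0/1$-hypermatrix $A$ (the ``heads'' indicator on the product of the spaces) having mystery-value $\alpha$, with the players' distributions $\beeta{i}$ as mystery-vectors. If $A$ is a scalar multiple of $J$ then $\alpha \in \{0,1\}$ is trivially algebraic, so we may assume otherwise. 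Third, I would apply the immediately preceding theorem, which asserts precisely that any mystery-value of a rational (here $0/1$) multilinear form is an algebraic number. Chaining these three observations gives $\alpha \in \ol\Q$, and since $\alpha$ is a probability, $\alpha \in \ol\Q \cap [0,1]$.

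There is essentially no real obstacle here: the content is entirely in the preceding theorem on algebraicity of mystery-values, and this proposition is just its repackaging into the language of random sources. The only thing requiring a word of care is the passage from ``the source's probabilities'' to ``a single coin flip'': one must note that each individual probability value of each space in the collection is separately captured by a biased-coin quotient, so the algebraicity conclusion applies to each of them, not merely to some aggregate quantity. Here is the argument written out.

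\begin{proof}
Let $\mathcal{D}$ be a private random source simulated robustly by $p$ people with finite random sources, and let $\alpha$ be the measure of some measurable set in some space $X \in \mathcal{D}$. Then $(\alpha \;\; 1-\alpha)$ is a quotient of $X$, so $(\alpha \;\; 1-\alpha) \in \mathcal{D}$, and the players robustly simulate this two-point space; that is, $\alpha \in \coop{p}$. By the restatement theorem of Section 2.3, a robust simulation of the $\alpha$-biased bit corresponds to a multilinear form $A$ with $0/1$ entries (the indicator of ``heads'' on the Kronecker product of the sampled spaces) having mystery-value $\alpha$, with the sampled distributions as mystery-vectors. If $A$ is a scalar multiple of $J$, then $\alpha \in \{0, 1\}$ and is algebraic. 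Otherwise, $A$ is a nonzero rational multilinear functional that is not a multiple of $J$, so the preceding theorem applies and $\alpha$ is algebraic. In either case $\alpha \in \ol\Q$; being a probability, $\alpha \in \ol\Q \cap [0,1]$. As $\alpha$ was an arbitrary value taken by $\mathcal{D}$, the source $\mathcal{D}$ takes probabilities in $\ol\Q$.
\end{proof}
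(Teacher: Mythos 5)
Your proposal is correct and follows essentially the same route as the paper, which treats this proposition as an immediate repackaging of the algebraicity-of-mystery-values theorem: pass from any probability of the simulated source to a biased-coin quotient, identify that robust coin flip with a $0/1$-hypermatrix having the bias as a mystery-value (handling the multiple-of-$J$ case separately, as the paper also does), and conclude algebraicity. Your extra care about applying this to each individual probability value matches the paper's observation that $\coop{p} \subseteq \ol\Q \cap [0,1]$ suffices.
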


\subsection{Three players: what can be done}

We prove that three players with private full-strength finite random sources are enough to simulate any private finite $\ol\Q$-random source.  First, we give a construction for a hypermatrix with stochastic mystery-vectors for a given algebraic number $\alpha$, but whose entries may be negative.  Next, we use it to find a nonnegative hypermatrix with mystery-value $(\alpha+r)/s$ for some suitable natural numbers $r$ and $s$.  Then, after a bit of convex geometry to ``even out'' this hypermatrix, we scale and shift it back, completing the construction.

\begin{rmk}
Our construction may easily be made algorithmic, but in practice it gives hypermatrices that are far larger than optimal.  An optimal algorithm would need to be radically different to take full advantage of the third person.  The heart of our construction (see Proposition \ref{construction}) utilizes $2 \times (n+1) \times (n+1)$ hypermatrices, but the degree of the hyperdeterminant polynomial grows much more quickly for (near-)diagonal formats \cite[Ch. 14]{GKZ}.  We would be excited to see a method of producing (say) small cubic hypermatrices with particular mystery-values.
\end{rmk}

\subsubsection{Hypermatrices with cooperative entries}

Recall that a $\{$heads, tails$\}$-function of several finite probability spaces may be represented by a $\{1,0\}$-hypermatrix.  The condition that the entries of the matrix are either $1$ or $0$ is inconvenient when we want to build simulations for a given algebraic bias.  Fortunately, constructing a matrix with cooperative entries will suffice.

\begin{lem}\label{rationalmat}
Suppose that $A$ is a $p$-dimensional hypermatrix with p-cooperative entries in $[0,1]$ and stochastic mystery-vectors $\beeta{1}, \ldots, \beeta{p}$ for the mystery-value $\alpha$.  Then, $\alpha$ is $p$-cooperative.
\end{lem}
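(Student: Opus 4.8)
The plan is to \emph{compose} simulations: treat $A$ as an ``outer'' protocol that picks a cell of the hypermatrix, and then, inside whichever cell is chosen, run a genuine $\{0,1\}$-hypermatrix simulation of the cooperative bias sitting there. The only real difficulty, which I expect to be the crux, is that a player running the combined protocol sees only her own coordinate of the chosen cell and therefore cannot tell which inner simulation has been selected; I would get around this by having every player pre-sample \emph{independent} randomness for all of the inner simulations at once, and then show that the randomness belonging to the branches that are not taken simply washes out.

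In detail, list the distinct entries of $A$ as $\alpha_1,\dots,\alpha_m$. Each $\alpha_s\in[0,1]$ is $p$-cooperative, so by the organizational theorem there is a $\{0,1\}$-hypermatrix $B_s$ with stochastic mystery-vectors $\gamma^{(1)}_s,\dots,\gamma^{(p)}_s$ and mystery-value $\alpha_s$. Writing $\vec\jmath_k=(j^1_k,\dots,j^m_k)$, where each $j^s_k$ ranges over the $k^{\rm th}$ axis of $B_s$, I would form the $\{0,1\}$-hypermatrix $C$ whose $k^{\rm th}$ axis is indexed by the pairs $(i_k;\vec\jmath_k)$, with $i_k$ running over the $k^{\rm th}$ axis of $A$, via
\[
C_{(i_1;\vec\jmath_1),\dots,(i_p;\vec\jmath_p)}=(B_{s})_{j^{s}_1,\dots,j^{s}_p},\qquad\text{where }\alpha_{s}=A_{i_1\cdots i_p},
\]
and take as candidate mystery-vectors the products $\delta^{(k)}_{(i_k;\vec\jmath_k)}=\beta^{(k)}_{i_k}\prod_{s=1}^{m}(\gamma^{(k)}_s)_{j^s_k}$, which are stochastic, so $J(\delta^{(k)})=1\neq 0$.

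It then remains to verify that $C$ has mystery-value $\alpha$ with these $\delta^{(k)}$. Fix a coordinate $l$; since $(\alpha J-C)(\delta^{(1)},\dots,\ex{l},\dots,\delta^{(p)})$ is linear in the slot-$l$ argument $\ex{l}$, it is enough to check that it vanishes when $\ex{l}$ is a standard basis vector $e_{(i_l;\vec\jmath_l)}$. I expect the resulting sum to collapse in three stages: for each $s$ other than the index $s=s(\vec\imath)$ singled out by the outer coordinates $\vec\imath=(i_1,\dots,i_p)$, the factors $\sum_{j^s_k}(\gamma^{(k)}_s)_{j^s_k}=1$ (for $k\neq l$) erase that branch; for $s=s(\vec\imath)$, the remaining inner sum is $B_s$ applied to its own mystery-vectors in the coordinates $k\neq l$ and to a standard basis vector in coordinate $l$, which equals $\alpha_s$ by the mystery-value property of $B_s$; and the outer sum that is left, $\sum_{(i_k)_{k\neq l}}\bigl(\prod_{k\neq l}\beta^{(k)}_{i_k}\bigr)A_{i_1\cdots i_p}=A(\beta^{(1)},\dots,e_{i_l},\dots,\beta^{(p)})$, equals $\alpha$ by the mystery-value property of $A$. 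Hence $(\alpha J-C)(\delta^{(1)},\dots,\ex{l},\dots,\delta^{(p)})=0$ identically, so $C$ is a $\{0,1\}$-hypermatrix with mystery-value $\alpha$ and stochastic mystery-vectors; by the organizational theorem it encodes a robust $p$-player simulation of an $\alpha$-biased coin, and $\alpha$ is $p$-cooperative. The step I would write out with the most care is this telescoping—tracking exactly which copies of the $\gamma^{(k)}_s$ cancel and which survive—since that is precisely the place where the fact that a player cannot see the chosen cell gets used.
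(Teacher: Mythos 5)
Your construction is exactly the paper's: the paper also replaces each entry of $A$ by a Kronecker block $J\otimes\cdots\otimes A_i\otimes\cdots\otimes J$ (indexed over all entries rather than distinct values, an immaterial difference) and takes the mystery-vectors $\beta^{(k)}\otimes\gamma^{(k)}_1\otimes\cdots\otimes\gamma^{(k)}_n$, which coincide with your $\delta^{(k)}$. Your telescoping verification is correct and simply writes out the step the paper dismisses as ``easy to check.''
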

\begin{proof}
Let the hypermatrix $A$ have entries $w_1, w_2, \ldots , w_n $.  Each entry $w_k$ is $p$-cooperative, so it is the mystery-value of some $p$-dimensional $\{0,1\}$-hypermatrix $A_k$ with associated stochastic mystery-vectors $\beeeta{k}{1}, \beeeta{k}{2}, \ldots, \beeeta{k}{p}$.  We now build a $\{0,1\}$-hypermatrix $A'$ with $\alpha$ as a mystery-value.  The hypermatrix $A'$ has blocks corresponding to the entries of $A$.  We replace each entry $w_i$ of $A$ with a Kronecker product:
\be
w_i \mbox{ becomes } J_1 \otimes J_2 \otimes \cdots \otimes J_{i-1} \otimes A_i \otimes J_{i+1} \otimes \cdots \otimes J_n.
\ee
It is easy to check that the resulting tensor $A'$ has $\alpha$ as a mystery-value with corresponding mystery-vectors $\beeta{i} \otimes \beeeta{1}{i} \otimes \beeeta{2}{i} \otimes \cdots \otimes \beeeta{n}{i}$.
\end{proof}

\noindent Because rational numbers are $2$-cooperative, this lemma applies in particular to rational $p$-dimensional hypermatrices, for $p \geq 2$.  In this case and in others, the construction can be modified to give an $A'$ of smaller format.



Readers who have been following the analogy between mystery-values and eigenvalues will see that Lemma \ref{rationalmat} corresponds to an analogous result for eigenvalues of matrices.  Nonetheless, there are striking differences between the theories of mystery-values and eigenvalues.  For instance, we are in the midst of showing that it is always possible to construct a nonnegative rational hypermatrix with a given nonnegative algebraic mystery-value and stochastic mystery-vectors.  The analogous statement for matrix eigenvalues is false, by the Perron-Frobenius theorem: any such algebraic number must be greater than or equal to all of its Galois conjugates (which will also occur as eigenvalues).  Encouragingly, the inverse problem for eigenvalues has been solved: Every ``Perron number'' may be realized as a ``Perron eigenvalue'' \cite{lind}.  Our solution to the corresponding inverse problem for mystery-values uses different techniques.  It would be nice to see if either proof sheds light on the other.


\subsubsection{Constructing hypermatrices from matrices}



\begin{prop}\label{companion}
If $\lambda$ is a real algebraic number of degree $n$, then there is some $M \in \M_n(\Q)$ having $\lambda$ as an eigenvalue with non-perpendicular positive left and right eigenvectors.
\end{prop}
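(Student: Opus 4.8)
The plan is to start from the rational companion matrix of the minimal polynomial of $\lambda$ and then perturb it by a rational matrix supported so as to rotate the eigenvectors into the positive orthant without disturbing the eigenvalue-eigenvector relationship we need. Let $m(t) = t^n - c_{n-1}t^{n-1} - \cdots - c_0 \in \Q[t]$ be the minimal polynomial of $\lambda$, and let $C$ be its companion matrix. Then $C$ has $\lambda$ as an eigenvalue with right eigenvector $v = (1, \lambda, \ldots, \lambda^{n-1})^\top$ and some left eigenvector $u$; generically these are not positive and may well be perpendicular. The issue is purely one of moving to a better basis: we want a rational change of coordinates $P \in \GL_n(\Q)$ so that $P v$ and $(P^{-\top}) u$ — the right and left eigenvectors of $PCP^{-1}$ — both lie in the open positive orthant and are not orthogonal.

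First I would observe that the constraints ``$Pv$ has all coordinates positive'' and ``$P^{-\top}u$ has all coordinates positive'' and ``$\langle Pv, P^{-\top}u\rangle = \langle v, u\rangle \neq 0$'' (note the pairing is invariant, so non-perpendicularity is automatic once $\langle v,u\rangle\neq 0$, which holds because $\lambda$ is a simple root of $m$) are each satisfied on a nonempty open subset of $\GL_n(\R)$. The first is open and nonempty because $v \neq 0$, so some real invertible matrix sends it into the positive orthant; likewise for the second with $u \neq 0$; and one must check the two conditions can be met simultaneously. The cleanest route is to pick the target vectors first: choose any positive rational vectors $v', u' \in \Q_{>0}^n$ with $\langle v', u'\rangle = \langle v, u \rangle$ (rescale one coordinate of $u'$ to arrange this — possible since $\langle v,u\rangle$ can be taken rational, indeed $\langle v,u\rangle \neq 0$ and we may clear denominators or absorb a rational factor into $M$ via conjugation invariance of eigenvalues). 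Then I need a single $P \in \GL_n(\Q)$ with $Pv = v'$ and $P^\top u' = u$ simultaneously; this is a linear system in the entries of $P$, and the pairing condition $\langle v', u'\rangle = \langle v, u\rangle$ is exactly the compatibility condition that makes it solvable. Since $v \neq 0$ and $u' \neq 0$, the solution set is a nonempty affine subspace defined over $\Q$, hence contains a rational point, and a generic such point is invertible (invertibility is a Zariski-open condition on this affine space, nonempty because the complex/real solution set already contains an invertible element). Setting $M = PCP^{-1} \in \M_n(\Q)$ finishes the argument: $\lambda$ is an eigenvalue with right eigenvector $v'$ and left eigenvector $u'$, both positive, with $\langle v', u'\rangle = \langle v,u\rangle \neq 0$.

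The main obstacle is the simultaneous solvability of $Pv = v'$ and $P^\top u' = u$ together with invertibility of $P$. The two linear conditions are individually easy, but one must verify they are consistent, and the consistency obstruction is precisely the scalar $\langle v', u'\rangle - \langle v, u\rangle$: writing $P$ as acting on the rank-one piece and a complement, one sees the system splits, and the only coupling is through the bilinear pairing. Once consistency is arranged by the choice of $v', u'$, the remaining point — that the rational affine solution space contains an invertible matrix — follows because the analogous real solution space does (start from any real $P_0$ achieving the positivity, which exists, then note $P_0$ can be taken invertible by a small perturbation within the solution space, and rationality of the defining equations transfers nonemptiness of the Zariski-open invertible locus to the rational points). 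I should also double-check the degenerate edge cases $n = 1$ (trivial: $M = (\lambda)$, eigenvectors $1$) and the possibility $\langle v, u \rangle = 0$, which cannot occur here since $\lambda$ is a simple eigenvalue of a companion matrix, so the generalized eigenspace is one-dimensional and left/right eigenvectors pair nontrivially.
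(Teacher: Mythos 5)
There is a genuine gap, and it is fatal to the exact-interpolation step at the heart of your argument. You ask for $P \in \GL_n(\Q)$ with $Pv = v'$ where $v = (1, \lambda, \ldots, \lambda^{n-1})^\top$ and $v'$ is a \emph{prescribed positive rational} vector. This is impossible for $n \geq 2$: if $M = PCP^{-1} \in \M_n(\Q)$ had the rational vector $v'$ as a right eigenvector for $\lambda$, then $\lambda = (Mv')_i / v'_i$ would be rational, contradicting $\deg \lambda = n \geq 2$. Concretely, since $1, \lambda, \ldots, \lambda^{n-1}$ are $\Q$-linearly independent, the equation $\sum_j P_{ij}\lambda^{j-1} = v'_i$ with $P$ and $v'$ rational forces $P_{ij} = 0$ for $j \geq 2$, so every rational solution of $Pv = v'$ is singular. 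This also shows where your descent argument breaks: the linear system $Pv = v'$, $P^\top u' = u$ has coefficients and right-hand sides involving the irrational vectors $v$ and $u$, so its solution space is defined over $\Q(\lambda)$, not over $\Q$; the claim that a nonempty affine solution space ``defined over $\Q$'' contains a rational point in the invertible locus therefore does not apply, and indeed the conclusion is false here. No choice of rational targets $v', u'$ can repair this, because the eigenvectors of any rational matrix for the irrational eigenvalue $\lambda$ are necessarily irrational.

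The fix is to give up on hitting exact targets and only ask for positivity, which is an \emph{open} condition. This is what the paper does: diagonalize the companion matrix (possible since the minimal polynomial is separable), note that in that basis the left/right eigenvectors $w_0, v_0$ for $\lambda$ pair nontrivially, and then observe that $\GL_n(\R)$ acts transitively on $\{(v,w) : v(w) = v_0(w_0)\}$ while $\GL_n(\Q)$ is dense in $\GL_n(\R)$; hence the $\GL_n(\Q)$-orbit of $(v_0, w_0)$ is dense in that set and meets the nonempty open subset of positive pairs. Conjugating the companion matrix by such a rational matrix yields $M \in \M_n(\Q)$ with positive, non-perpendicular (and necessarily irrational) left and right eigenvectors. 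Your observation that the pairing $\langle v, u\rangle$ is nonzero and conjugation-invariant is correct and is also used in the paper; it is the replacement of your exact rational interpolation by this density-plus-openness argument that is needed.
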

\begin{proof}
Let $f \in \Q[x]$ be the minimal polynomial for $\lambda$ over $\Q$, and let $L$ be the companion matrix for $f$.    That is, if
\be 
f(x) = x^n + \sum_{k=0}^{n-1} a_k x^k \mbox{ for } a_k \in \Q,
\ee
then
\be 
L=\left(
\begin{array}{ccccc}
0 & 0 & \cdots & 0 & -a_0 \\
1 & 0 & \cdots & 0 & -a_1 \\
0 & 1 & \cdots & 0 & -a_2 \\
\vdots & \vdots & \ddots & \vdots & \vdots \\
0 & 0 & \cdots & 1 & -a_{n-1}
\end{array}
\right).
\ee
The polynomial $f$ is irreducible over $\Q$, so it has no repeated roots in $\C$.  The matrix $L$ is therefore diagonalizable, with diagonal entries the roots of $f$.  Fix a basis for which $L$ is diagonal, with $\lambda$ in the upper-left entry.  In this basis, the right and left eigenvectors, $v_0$ and $w_0$, corresponding to $\lambda$ are zero except in the first coordinate.  It follows that $v_0(w_0) \neq 0$.

The right and left eigenvectors may now be visualized as two geometric objects: a real hyperplane and a real vector not contained in it.  It's clear that $\GL_n(\R)$ acts transitively on the space $\mathcal{S} := \{(v,w) \in (\R^n)^* \times \R^n : v(w) = v_0(w_0) \}$.  Moreover, $\GL_n(\Q)$ is dense in $\GL_n(\R)$, so the orbit of $(v_0,w_0)$ under the action of $\GL_n(\Q)$ is dense in $\mathcal{S}$.  The set of positive pairs in $\mathcal{S}$ is non-empty and open, so we may rationally conjugate $L$ to a basis which makes $v_0$ and $w_0$ positive.
\end{proof}

\begin{prop}\label{construction}
If $\lambda$ is real algebraic, then there exist integers $r \geq 0$, $s > 0$ such that $(\lambda+r)/s \in \coop{3}$.
\end{prop}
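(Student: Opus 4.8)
The plan is to reduce everything, via Lemma~\ref{rationalmat}, to exhibiting a single $3$-dimensional hypermatrix with rational entries in $[0,1]$, stochastic mystery-vectors, and mystery-value of the form $(\lambda+r)/s$ with $r\in\Z_{\ge0}$, $s\in\Z_{>0}$. Once such a hypermatrix is in hand, its entries are rational, hence $2$-cooperative (Theorem~\ref{2rat}) and a fortiori $3$-cooperative, so Lemma~\ref{rationalmat} yields $(\lambda+r)/s\in\coop{3}$. (The case $n=1$, i.e.\ $\lambda\in\Q$, is immediate from Corollary~\ref{qcon}, so assume $\lambda$ has degree $n\ge2$.)

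First I would feed $\lambda$ into Proposition~\ref{companion} to obtain $M\in\M_n(\Q)$ with eigenvalue $\lambda$ whose left and right eigenvectors $w$ and $v$ are positive and not perpendicular, i.e.\ $c:=wv\ne0$. The core step is to build a $2\times(n+1)\times(n+1)$ hypermatrix $A$ with \emph{rational} (possibly negative) entries, mystery-value $\lambda$, and stochastic mystery-vectors. Write $A$ as the pair of $(n+1)\times(n+1)$ matrices $A_0,A_1$ given by its two slices in the first coordinate, and look for mystery-vectors $\beeta{1}=(p_0,p_1)$ and stochastic $\beeta{2},\beeta{3}\in\R^{n+1}$, the latter two built from scaled copies of $w$ and $v$ padded with one extra positive coordinate. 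Unwinding the definition of mystery-value (using that the mystery-vectors are stochastic), the slot-$1$ condition reads $\beeta{2}^\top A_0\beeta{3}=\beeta{2}^\top A_1\beeta{3}=\lambda$; the slot-$2$ condition reads $(p_0A_0+p_1A_1)\beeta{3}=\lambda\,\mathbf{1}$, a scalar multiple of the all-ones vector; and the slot-$3$ condition reads $\beeta{2}^\top(p_0A_0+p_1A_1)=\lambda\,\mathbf{1}^\top$. Since a rational bilinear form evaluates to a rational number on rational vectors, the slot-$1$ equation forces $\beeta{2}$ and $\beeta{3}$ to have irrational entries; this is legitimate, as mystery-vectors need only be nonnegative and sum to one, not rational, and it is the positive, non-perpendicular eigenvectors of $M$ that furnish them. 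I would pin down $A_0$ from the eigen-relations $Mv=\lambda v$, $wM=\lambda w$ (padded appropriately into the $(n+1)$st coordinate, so that $\beeta{2}^\top A_0\beeta{3}$ picks up $wMv=\lambda c$), and then determine $A_1$ together with the weights $(p_0,p_1)$ so that the two ``constancy'' conditions hold, exploiting the slack afforded by the auxiliary coordinate.

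Next I would shift and scale. If $A$ has mystery-value $\alpha$ with mystery-vectors $\beeta{i}$, then for integers $r\ge0$, $s>0$ the hypermatrix $\tfrac1s(A+rJ)$ has mystery-value $\tfrac{\alpha+r}{s}$ with the \emph{same} mystery-vectors, since $\tfrac{\alpha+r}{s}J-\tfrac1s(A+rJ)=\tfrac1s(\alpha J-A)$ and the quantities $J(\beeta{i})$ are unchanged. Taking $r$ large enough to clear the finitely many negative entries of $A$, and $s$ large enough to clear denominators and drag every entry into $[0,1]$ — a small dose of convex geometry to ``even out'' the hypermatrix while keeping its mystery-vectors stochastic — produces a $2\times(n+1)\times(n+1)$ hypermatrix with rational entries in $[0,1]$, stochastic mystery-vectors, and mystery-value $(\lambda+r)/s$. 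Lemma~\ref{rationalmat} then completes the proof.

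I expect the main obstacle to be the core step: arranging $A_0$, $A_1$, the paddings of $w$ and $v$, and the weights $(p_0,p_1)$ so that the slot-$1$, slot-$2$, and slot-$3$ conditions hold simultaneously with $A_0,A_1$ rational and the mystery-vectors genuinely stochastic. The difficulty is structural: the slot-$1$ conditions constrain the two slices only on $\beeta{2}\otimes\beeta{3}$ (where both must equal $\lambda$), whereas the slot-$2$ and slot-$3$ conditions constrain the slices on all of $\R^{n+1}$; reconciling these while controlling the field of definition of the entries is precisely what the auxiliary coordinate, the two-term first-coordinate weighting, and the ``evening out'' by convex geometry are designed to accomplish.
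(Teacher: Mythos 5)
Your architecture is the same as the paper's: Proposition \ref{companion} supplies a rational $M$ with positive, non-perpendicular left/right eigenvectors; one then seeks a rational $2\times(n+1)\times(n+1)$ hypermatrix whose mystery-vectors are $(p_0,p_1)$ and padded, rescaled copies of the eigenvectors, with mystery-value $\lambda$; the affine move $A\mapsto (A+rJ)/s$ (which indeed preserves the mystery-vectors and sends the mystery-value to $(\lambda+r)/s$) puts the entries in $[0,1]$; and Lemma \ref{rationalmat} then gives $(\lambda+r)/s\in\coop{3}$. Those reductions, and your slot-by-slot reformulation of the mystery-value conditions, are correct. (The ``convex geometry / evening out'' you invoke is not needed at this stage; in the paper it belongs to the Approximation Lemma, used later in Theorem \ref{constructive}.)

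The genuine gap is the core step, which you explicitly leave open: you never exhibit $A_0$, $A_1$, the weights $(p_0,p_1)$, or the paddings, and writing down the slot-$1$, slot-$2$, slot-$3$ constraints is a restatement of the definition, not an argument that they can be met simultaneously over $\Q$ with stochastic mystery-vectors. Nothing in your text shows the auxiliary coordinate provides enough slack; that existence claim \emph{is} the proposition. The paper resolves it concretely: normalize $v(w)=1$, choose an integer $q\geq\max\{J(v),J(w)\}$, and take, in block form, $A_0=\left(\begin{smallmatrix}0&0\\ 0&q^2M\end{smallmatrix}\right)$ and $A_1=\left(\begin{smallmatrix}1&\mathbf{1}^\top\\ \mathbf{1}&q^2(M-I)+J\end{smallmatrix}\right)$, with $\beeta{1}=(1-\lambda,\ \lambda)$, $\beeta{2}=(1-J(v)/q,\ v/q)$, $\beeta{3}=(1-J(w)/q,\ w/q)$. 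Then $(1-\lambda)A_0+\lambda A_1$ has lower-right block $q^2(M-\lambda I)+\lambda J$, the eigen-relations kill the $q^2(M-\lambda I)$ part, and the border of ones together with the normalizations makes each slot evaluate to $\lambda$ exactly; checking these identities is the substance of the proof, and until you supply such an explicit $A$ (or some other existence argument) your proposal is an outline rather than a proof. A further point tied to the unspecified $(p_0,p_1)$: Lemma \ref{rationalmat} needs all three mystery-vectors stochastic, so the weights must be nonnegative and sum to one; in the paper they are $(1-\lambda,\lambda)$, which tacitly uses $\lambda\in[0,1]$ (harmless, since one may first translate and rescale $\lambda$ into $(0,1)$, but it must be said).
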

\begin{proof}
By Proposition \ref{companion}, there is a rational $n \times n$ matrix $M$ with non-perpendicular positive right and left eigenvectors $v,w$ for the eigenvalue $\lambda$.  Rescale $w$ so that $v(w) = 1$, and choose an integer $q \geq \max{\{J(v), J(w)\}}$.  Define the block $2 \times (n+1) \times (n+1)$ hypermatrix
\be 
A := \left(
\ba{ccc}
\ba{c|cccr}
0 \ & \ 0 & \hspace{6pt} \cdots & & 0 \\ \hline
0 \ &   & &   & \\
\vdots \ &   & \hspace{8pt}  q^2 M & & \\
0 \ &   & & &
\ea
& \rule[-34pt]{1.1pt}{74pt} &
\ba{l|cccr}
1 \ & \ 1 & & \ \hspace{8pt} \cdots & 1 \\ \hline
1 &   &   & & \\
\vdots & \multicolumn{4}{c}{q^2 (M - I) + J} \\
1 &   &   & &
\ea 
\ea \right),
\ee
where $I$ and $J$ are the $n \times n$ identity and all-ones matrices, respectively.  Consider $A$ as a trilinear form, where the metacolumns correspond to the coordinates of the first vector, the rows the second, and the columns the third.  Define the block vectors
\be
\ba{rclccccccrl}
\beeta{1} &=& \left( \hspace*{2.5pt} 1-\lambda \right. & \lambda \left. \right), & & & & & & &\\
\beeta{2} &=&
	\multicolumn{2}{l}{\left( \right. 1 - J(v)/q } & |
	& v_1 / q & v_2 / q & & \cdots & v_n / q \left. \right)& \hspace{-0pt} \mbox{, and} \\
\beeta{3} &=& \multicolumn{2}{l}{\left( \right. 1 - J(w)/q } & | & w_1 / q & w_2 / q & & \cdots & w_n / q \left. \right)& \hspace{-0pt}.
\ea
\ee
Clearly, these are all probability vectors.  It's easy to verify that
\bea
A\left(\ex{1}, \beeta{2}, \beeta{3} \right) &=& \lambda J\left(\ex{1}\right),\nn\\
A\left(\beeta{1}, \ex{2}, \beeta{3} \right) &=& \lambda J\left(\ex{2}\right), \mbox{ and}\nn\\
A\left(\beeta{1}, \beeta{2}, \ex{3} \right) &=& \lambda J\left(\ex{3}\right).
\eea
Choose a nonnegative integer $r$ large enough so that all the entries of $A + r J$ are positive, and then a positive integer $s$ so that all the entries of $A' := (A + r J)/s$ are between $0$ and $1$.
\bea
A'\left(x^{(1)}, \beeta{2}, \beeta{3}\right) &=& \frac{\lambda + r}{s} J\left(\ex{1}\right),\nn\\
A'\left(\beeta{1}, x^{(2)}, \beeta{3}\right) &=& \frac{\lambda + r}{s} J\left(\ex{2}\right), \mbox{ and}\nn\\
A'\left(\beeta{1}, \beeta{2}, x^{(3)}\right) &=& \frac{\lambda + r}{s} J\left(\ex{3}\right).
\eea
By Lemma \ref{rationalmat}, it follows that $(\lambda + r)/s$ is $3$-cooperative.
\end{proof}

\subsubsection{Finishing the Proof}

The following lemma, which we we prove later, enables us to complete the goal of this section: to classify which private random sources three or more people can simulate.

\begin{lem}[Approximation lemma]\label{approximation}
Let $\alpha$ be a $p$-cooperative number.  Now for any $\varepsilon > 0$ there exists a $p$-dimensional rational hypermatrix whose entries are all within $\varepsilon$ of $\alpha$, having $\alpha$ as a mystery-value with stochastic mystery-vectors.
\end{lem}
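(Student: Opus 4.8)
The plan is to realize $\alpha$ by a rational hypermatrix $A'$ that is close to the constant hypermatrix $\alpha J$, pushing the (possibly irrational) arithmetic of $\alpha$ into the mystery-vectors rather than into the entries. I would start from the hypothesis: since $\alpha$ is $p$-cooperative, fix a $\{0,1\}$-hypermatrix $B$ of some format $n_1 \times \cdots \times n_p$ with mystery-value $\alpha$ and stochastic mystery-vectors $\gamma^{(1)}, \dots, \gamma^{(p)}$, so that $B(\gamma^{(1)}, \dots, e_c, \dots, \gamma^{(p)}) = \alpha$ for every axis $i$ and every standard basis vector $e_c$ on that axis, and in particular $B(\gamma^{(1)}, \dots, \gamma^{(p)}) = \alpha$.

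The one genuinely necessary idea is that the new mystery-vectors must \emph{not} be kept rational: if they were, then every one-axis slice of $A'$ would have a rational weighted average, which cannot equal $\alpha$ when $\alpha$ is irrational. So for each axis $j$ I would write $\gamma^{(j)}$ as a convex combination $\gamma^{(j)} = \sum_{m=1}^{N_j} \delta^{(j)}_m q^{(j)}_m$, where the $q^{(j)}_m \in \Q^{n_j}$ are \emph{rational} probability vectors each within $\ell^1$-distance $\varepsilon/p$ of $\gamma^{(j)}$, and the weights $\delta^{(j)}_m \ge 0$ (not necessarily rational) sum to $1$. This is possible because $\gamma^{(j)}$ lies in the relative interior of the smallest face of the probability simplex containing it, that face is a rational polytope, and its rational points are dense in it; hence $\gamma^{(j)}$ is a convex combination of finitely many rational points of that face lying in any prescribed neighborhood of it (a Carath\'eodory-type argument). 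Collect the vectors $q^{(j)}_m$ as the columns of a column-stochastic matrix $Q^{(j)}$ — so $Q^{(j)} \delta^{(j)} = \gamma^{(j)}$ with $\delta^{(j)} := (\delta^{(j)}_1, \dots, \delta^{(j)}_{N_j})$ — and define the $p$-linear form $A'$ of format $N_1 \times \cdots \times N_p$ by $A'(y^{(1)}, \dots, y^{(p)}) := B(Q^{(1)} y^{(1)}, \dots, Q^{(p)} y^{(p)})$, equivalently $A'_{m_1 \dots m_p} := B(q^{(1)}_{m_1}, \dots, q^{(p)}_{m_p})$.

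It then remains to verify the three required properties, all routine. The entries of $A'$ are finite sums of products of rationals, hence rational. Telescoping over the $p$ slots and using that $B$ has $\{0,1\}$-entries while every vector substituted is a probability vector gives $|A'_{m_1 \dots m_p} - \alpha| = |B(q^{(1)}_{m_1}, \dots, q^{(p)}_{m_p}) - B(\gamma^{(1)}, \dots, \gamma^{(p)})| \le \sum_{j=1}^{p} \| q^{(j)}_{m_j} - \gamma^{(j)} \|_1 < \varepsilon$, so all entries lie within $\varepsilon$ of $\alpha$. Finally, for any axis $i$ and any vector $x^{(i)}$, multilinearity of $B$ together with $Q^{(j)} \delta^{(j)} = \gamma^{(j)}$ yields $A'(\delta^{(1)}, \dots, x^{(i)}, \dots, \delta^{(p)}) = B(\gamma^{(1)}, \dots, Q^{(i)} x^{(i)}, \dots, \gamma^{(p)}) = \alpha\, J(Q^{(i)} x^{(i)}) = \alpha\, J(x^{(i)})$, the last equality because the columns of $Q^{(i)}$ sum to $1$; and $J(\delta^{(j)}) = 1 \neq 0$. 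Hence $A'$ has mystery-value $\alpha$ with the stochastic mystery-vectors $\delta^{(j)}$. The main obstacle here is conceptual rather than technical: realizing that one should absorb the irrationality of $\alpha$ into the weights $\delta^{(j)}$ via a convex-combination argument in the simplex, while keeping the hypermatrix itself rational and near-constant — once that decomposition is in place, the pullback construction and its verification are immediate. (A probabilistic variant, choosing the rational points $q^{(j)}_m$ at random and applying concentration to a large average of $\{0,1\}$-pullbacks, would work equally well but is not needed.)
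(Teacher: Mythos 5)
Your proof is correct, but it takes a genuinely different route from the paper's. The paper also uses a pullback construction, but along the square doubly stochastic matrix $S_{\delta} = (1-\delta)(J/n) + \delta I$ applied in every slot: it first invokes a refinement lemma to replace each mystery-vector by a refinement whose coordinates are nearly equal, so that $S_{\delta}^{-1}$ applied to it is still stochastic, and then shows (by expanding the convex combination $(1-\delta)(J/n_i) + \delta x^{(i)}$ over $2^p$ terms) that all values of the pulled-back form on stochastic tensors lie within $\varepsilon$ of the mean of the entries, hence within $\varepsilon$ of $\alpha$. You instead decompose each stochastic mystery-vector $\gamma^{(j)}$ as a convex combination of nearby rational probability vectors and pull back along the resulting rectangular column-stochastic matrices $Q^{(j)}$, taking the (possibly irrational) barycentric weights $\delta^{(j)}$ as the new mystery-vectors. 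This makes stochasticity of the new mystery-vectors automatic---no refinement lemma and no inversion of $S_{\delta}$---and your entry estimate is a direct telescoping Lipschitz bound $\bigl|B(q^{(1)}_{m_1},\ldots,q^{(p)}_{m_p}) - \alpha\bigr| \le \sum_j \|q^{(j)}_{m_j} - \gamma^{(j)}\|_1$, valid because $B$ has $\{0,1\}$ entries and every substituted vector is stochastic. Your one nontrivial input---that a point in the relative interior of a face of the simplex is a convex combination of rational points of that face lying arbitrarily close to it---is correct (perturb a small homothetic copy of the face centered at $\gamma^{(j)}$ to rational vertices while keeping $\gamma^{(j)}$ inside), and it plays the same role that density of rational points in the simplex plays in the paper's refinement lemma. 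What the paper's version buys is a more explicit construction that keeps the original slice structure and ties the refinement step to simultaneous rational approximation (relevant to its algorithmic remarks); your version is shorter, avoids the auxiliary propositions entirely, and concentrates all the work in one convex-geometry step.
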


\begin{thm}\label{constructive}
$\coop{p} = \ol{\Q} \cap [0,1]$ for each $p \geq 3$.
\end{thm}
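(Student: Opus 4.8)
The plan is to prove the two inclusions of $\coop{p} = \ol{\Q} \cap [0,1]$ separately. The inclusion $\coop{p} \subseteq \ol{\Q} \cap [0,1]$ is already established by the ``what cannot be done'' results above: every mystery-value of a rational hypermatrix is algebraic, and any $p$-cooperative number lies in $[0,1]$ because a stochastic mystery-vector exhibits it as a convex combination of the zero-one entries of the corresponding hypermatrix. So the task is to prove $\ol{\Q} \cap [0,1] \subseteq \coop{p}$ for every $p \geq 3$.

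First I would reduce to the case $p = 3$. If $\alpha \in \coop{3}$ is witnessed by a zero-one trilinear form $A$ with stochastic mystery-vectors $\beeta{1}, \beeta{2}, \beeta{3}$, then for $p > 3$ the zero-one hypermatrix $A \otimes (1\ \ 1) \otimes \cdots \otimes (1\ \ 1)$, with $p-3$ trailing all-ones factors of length two, has mystery-value $\alpha$ with mystery-vectors $\beeta{1}, \beeta{2}, \beeta{3}$ followed by $p-3$ copies of $(\tfrac12\ \ \tfrac12)$; this is immediate from the identity $\alpha J' - A' = (\alpha J - A) \otimes (1\ \ 1)^{\otimes(p-3)}$ together with $(\alpha J - A)(\beeta{1},\beeta{2},\beeta{3}) = 0$. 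Hence $\coop{3} \subseteq \coop{p}$, and it suffices to prove $\ol{\Q} \cap [0,1] \subseteq \coop{3}$. We may further assume $\alpha \in (0,1)$, since every rational in $[0,1]$ --- in particular $0$ and $1$ --- is already $p$-cooperative by Corollary \ref{qcon}.

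So fix a real algebraic $\alpha \in (0,1)$. By Proposition \ref{construction} there are integers $r \geq 0$ and $s > 0$ with $\beta := (\alpha + r)/s \in \coop{3}$, and I would then reverse this affine substitution. The catch is that literally undoing the shift-and-scale of Proposition \ref{construction} reintroduces negative entries --- the shift was there precisely to clear them --- so instead I apply the Approximation lemma \ref{approximation} to the $3$-cooperative number $\beta$ with tolerance $\varepsilon < \min(\alpha,\,1-\alpha)/s$. This produces a rational $3$-dimensional hypermatrix $B$, all of whose entries lie within $\varepsilon$ of $\beta$, with mystery-value $\beta$ and stochastic mystery-vectors $\gamma^{(1)}, \gamma^{(2)}, \gamma^{(3)}$. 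Put $C := sB - rJ$. Then
\[
\alpha J - C \;=\; (\alpha + r)\,J - sB \;=\; s\,(\beta J - B),
\]
so $C$ has mystery-value $\alpha$ with the same stochastic mystery-vectors $\gamma^{(i)}$, while every entry of $C$ lies within $s\varepsilon < \min(\alpha,1-\alpha)$ of $\alpha$, hence in $(0,1)$, and $C$ is rational. Since rationals in $[0,1]$ are $3$-cooperative, Lemma \ref{rationalmat} applied to $C$ yields $\alpha \in \coop{3}$, completing the chain $\ol{\Q} \cap [0,1] \subseteq \coop{3} \subseteq \coop{p}$.

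I expect the real difficulty to sit inside the two ingredients, not in this assembly: Proposition \ref{construction}, which builds the nonnegative rational hypermatrix realizing a prescribed affine image of $\alpha$ (via the companion matrix and a $\GL_n(\Q)$-density argument), and above all the Approximation lemma \ref{approximation}, whose proof needs the convex-geometry ``evening out'' --- replacing a rational hypermatrix having $\alpha$ as a mystery-value by one whose entries are all pinched arbitrarily close to $\alpha$ without moving the mystery-vectors. Everything else here is just the substitution $A \mapsto sA - rJ$ and its linear effect on mystery-values and mystery-vectors.
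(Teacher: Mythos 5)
Your proposal is correct and follows essentially the same route as the paper: Proposition \ref{construction} to get $(\alpha+r)/s \in \coop{3}$, the Approximation lemma with tolerance $\min\{\alpha,1-\alpha\}/s$, the affine substitution $B \mapsto sB - rJ$ (which preserves the stochastic mystery-vectors and shifts the mystery-value back to $\alpha$), and Lemma \ref{rationalmat} to conclude $\alpha \in \coop{3}$. Your explicit tensor-with-$(1\ \ 1)$ argument for $\coop{3} \subseteq \coop{p}$ just makes precise a containment the paper uses without comment.
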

\begin{proof}
Certainly $0$ and $1$ are $3$-cooperative.  Let $\alpha$ be an algebraic number in $(0,1)$.  By Proposition \ref{construction}, there are integers $r \geq 0$, $s>0$ so that $(\alpha + r)/s$ is $3$-cooperative.  Let $\varepsilon := \left( \min \{\alpha, 1-\alpha\} \right) / s$.

By Proposition \ref{approximation}, there is some three-dimensional rational hypermatrix $A$ whose entries are all within $\varepsilon$ of $(\alpha + r)/s$, having $(\alpha + r)/s$ as a mystery-value with stochastic mystery-vectors.  Then, $s A - r J$ is a three-dimensional rational hypermatrix with entries between $0$ and $1$, having $\alpha$ as a mystery-value with stochastic mystery-vectors.  By Lemma \ref{rationalmat}, $\alpha$ is $3$-cooperative.

We already showed that all cooperative numbers are algebraic.  Thus, for $p \geq 3$,
\be \ol{\Q} \cap [0,1] \subseteq \coop{3} \subseteq \coop{p} \subseteq \ol{\Q} \cap [0,1], \ee
so $\coop{p} = \ol{\Q} \cap [0,1]$.
\end{proof}


In conclusion, we have the following theorem.

\begin{thm}\label{main}
Three or more people with finite random sources can robustly simulate only $\ol\Q$-random sources.  Indeed, if they have full-strength finite $\ol\Q$-random sources, they can already robustly simulate a full-strength finite $\ol\Q$-random source.
\end{thm}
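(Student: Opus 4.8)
The plan is to observe that this theorem is a packaging of results already in hand, plus a short upgrade from the single-bit statement to the full-source statement. The ``only if'' half---that three or more people can robustly simulate \emph{only} $\ol\Q$-random sources---is exactly the proposition following the algebraicity theorem: any probability produced by the simulated source occurs as the bias of a robustly simulable coin (collapse all but one coordinate of the ambient probability space), hence is $p$-cooperative, hence algebraic by Theorem (algebraicity of mystery-values). So that direction requires only a sentence of recollection.

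For the ``indeed'' half I would argue as follows. By Theorem \ref{constructive}, $\coop{p} = \ol\Q \cap [0,1]$ for every $p \geq 3$; in particular every algebraic number in $[0,1]$ is the robustly simulable bias of a coin, witnessed by a $\{0,1\}$-hypermatrix with stochastic mystery-vectors. Now invoke the Lemma on simulating finite sources (the one requiring the hypothesis ``$\alpha,\alpha^\p \in \coop{p}$, $\alpha < \alpha^\p \Rightarrow \alpha/\alpha^\p \in \coop{p}$''): its hypothesis is satisfied because $\coop{p}$ is all of $\ol\Q \cap [0,1]$, which is closed under the ratio operation. Hence $p$ people with full-strength finite random sources can robustly simulate precisely the finite $\coop{p}$-random sources, i.e. precisely the finite $\ol\Q$-random sources---which is a full-strength finite $\ol\Q$-random source. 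One must also note that full-strength finite $\ol\Q$-random sources are at least as strong as the generic finite random sources used implicitly (they can model any finite rational distribution, in particular any finite uniform distribution), so the construction goes through with the resources stated.

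I would then close by remarking that the two displayed containments assembled in the proof of Theorem \ref{constructive}, namely
\be
\ol\Q \cap [0,1] \subseteq \coop{3} \subseteq \coop{p} \subseteq \ol\Q \cap [0,1],
\ee
give both halves simultaneously and make the monotonicity in $p$ explicit, so no separate argument for ``three or more'' versus ``exactly three'' is needed. The only genuine content beyond bookkeeping is the passage from a single simulated bit to an arbitrary finite source, and that is precisely what the simulation lemma supplies once its closure hypothesis is verified; I expect \emph{that verification}---checking that $\ol\Q \cap [0,1]$ is closed under $\alpha \mapsto \alpha/\alpha^\p$ and (used inside the lemma's induction) under $\alpha \mapsto 1-\alpha$ and $\alpha \mapsto \alpha/(1-\alpha^\p)$---to be the only place where one could slip, and it is immediate since $\ol\Q$ is a field.
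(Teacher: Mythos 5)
Your proposal is correct and matches the paper's treatment: the paper states this theorem as an immediate conclusion of the preceding results (the algebraicity of mystery-values for the ``only'' half, and Theorem \ref{constructive} together with the simulation lemma for the constructive half), offering no separate argument. Your explicit verification of the simulation lemma's closure hypothesis for $\ol\Q \cap [0,1]$, and the remark that the constructions only ever sample from algebraic probability vectors (so full-strength finite $\ol\Q$-random sources suffice), simply spells out the bookkeeping the paper leaves implicit.
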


\subsubsection{Proof of the approximation lemma}\label{aapp}
The proof that follows is a somewhat lengthy ``delta-epsilon'' argument broken down into several smaller steps.  As we believe our construction of a hypermatrix with mystery-value $\alpha$ to be far from optimal, we strive for ease of exposition rather than focusing on achieving tight bounds at each step along the way.

Recall that a finite probability space may be usefully modeled by a positive\footnote{We may leave out points of mass zero.} vector of mass one.  Let $\beta$ be such a vector.  We denote by $\#\beta$ the number of coordinates of $\beta$ .  We say $\beta^\p$ is a \emph{refinement} of $\beta$ when $\beta$ is the image of a measure-preserving map from $\beta^\p$; that is, when the coordinates of $\beta^\p$ may be obtained by splitting up the coordinates of $\beta$.

The following easy lemma states that any positive vector of unit mass can be refined in such a way that all the coordinates are about the same size.

\begin{lem}[Refinement lemma]
Let $\beta$ be a positive vector of total mass $1$.  For any $\delta > 0$ there exists a refinement $\beta^\p$ of $\beta$ with the property that
\be \min_j \beta_j^\p \geq \frac{1-\delta}{\#\beta^\p}. \ee
\end{lem}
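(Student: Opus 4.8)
The plan is to build the refinement explicitly: split each coordinate of $\beta$ into a number of \emph{equal} pieces, choosing that number so every resulting piece has size as close as possible to a common small target value $\eta$. Write $\beta = (\beta_1,\dots,\beta_m)$ with all $\beta_i > 0$ and $\sum_i \beta_i = 1$, and set $\beta_{\min} = \min_i \beta_i$. We may assume $0 < \delta < 1$, since for $\delta \geq 1$ the right-hand side $\frac{1-\delta}{\#\beta'}$ is nonpositive and $\beta' = \beta$ already works. The key choice will be $\eta := \frac{\delta\,\beta_{\min}}{1-\delta} > 0$; then for each $i$ put $k_i := \lceil \beta_i/\eta \rceil \geq 1$ and replace the coordinate $\beta_i$ by $k_i$ copies of $\beta_i/k_i$. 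This manifestly defines a refinement $\beta'$ of $\beta$ with $\#\beta' = \sum_i k_i$.

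First I would bound the smallest piece from below. Since $k_i \leq \beta_i/\eta + 1$, each new coordinate coming from $\beta_i$ has size $\beta_i/k_i \geq \frac{\beta_i\,\eta}{\beta_i + \eta}$. The function $x \mapsto \frac{x\eta}{x+\eta}$ is increasing on $(0,\infty)$, so the minimum over all pieces is at least $\frac{\beta_{\min}\,\eta}{\beta_{\min}+\eta}$, and the value of $\eta$ was chosen precisely so that $\frac{\beta_{\min}\,\eta}{\beta_{\min}+\eta} = (1-\delta)\eta$. Hence $\min_j \beta'_j \geq (1-\delta)\eta$.

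Next I would bound $\#\beta'$ from below in order to turn this into the stated inequality: $\#\beta' = \sum_i k_i \geq \sum_i \beta_i/\eta = 1/\eta$, using $\sum_i\beta_i = 1$. Therefore $(1-\delta)\eta \geq \frac{1-\delta}{\#\beta'}$, and combining with the previous paragraph gives $\min_j \beta'_j \geq (1-\delta)\eta \geq \frac{1-\delta}{\#\beta'}$, which is exactly what is required.

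There is no genuine obstacle here; the only point of care is the bookkeeping behind the choice of $\eta$ — it must be small enough that the rounding error in passing from $\beta_i/\eta$ to $\lceil\beta_i/\eta\rceil$ is absorbed by $\delta$, while the \emph{same} $\eta$ has to double as the lower bound for every piece and (via $1/\eta \leq \#\beta'$) as an upper bound for $\frac{1-\delta}{\#\beta'}$. Solving the single inequality $\frac{\beta_{\min}\eta}{\beta_{\min}+\eta} \geq (1-\delta)\eta$ for $\eta$ pins down the displayed value, after which every step above is immediate.
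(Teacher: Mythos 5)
Your proof is correct. Every step checks out: with $\eta = \delta\beta_{\min}/(1-\delta)$ and $k_i = \lceil \beta_i/\eta\rceil$, the bound $\beta_i/k_i \ge \beta_i\eta/(\beta_i+\eta) \ge \beta_{\min}\eta/(\beta_{\min}+\eta) = (1-\delta)\eta$ holds, and $\#\beta' = \sum_i k_i \ge 1/\eta$ converts this into $\min_j \beta'_j \ge (1-\delta)/\#\beta'$; the $\delta \ge 1$ case is indeed vacuous. The route differs from the paper's in the key step. The paper also splits each coordinate into equal pieces, but it chooses the multiplicities by rational approximation: density of rational points in the open simplex yields $(n_1/n,\dots,n_k/n)$ with $|\beta_i - n_i/n| < \beta_1\delta$ and $n_1/n > \beta_1$, so each piece $\beta_i/n_i$ lies within $\delta/n$ of the common value $1/n = 1/\#\beta'$. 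You instead fix a common target piece size $\eta$ in advance and take ceilings, using only the crude count $\#\beta' \ge 1/\eta$. Your version is more elementary and immediately algorithmic, with an explicit size bound $\#\beta' \le 1/\eta + \#\beta$, and it needs no density argument. The paper's version gives the slightly stronger (though unneeded) conclusion that all pieces are nearly equal to $1/\#\beta'$, and its phrasing as simultaneous rational approximation is what lets the subsequent remark tie the question of the optimal $\#\beta'$ to Dirichlet's theorem and multidimensional continued fractions.
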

\begin{proof}
Without loss of generality, assume that $\beta_1$ is the smallest coordinate of $\beta$.  Let $\gamma = \beta_1 \delta$, and let $k = \#\beta$.  The vector $\beta$ is in the standard open $k$-simplex
\be 
\Delta^k = \{\mbox{positive vectors of mass 1 and dimension $k$}\}.
\ee
The rational points in $\Delta^k$ are dense (as in any rational polytope), and
\be 
U := \{x \in \Delta^k : (\forall i) \left|\beta_i -x_i\right| < \gamma \mbox{ and } \beta_1 < x_1\}
\ee
is an open subset of the simplex.  So $U$ contain a rational point $\left(\frac{n_1}{n}, \ldots, \frac{n_k}{n}\right)$, with $n = \sum n_i$.  Thus, $\left|\beta_i - \frac{n_i}{n}\right| < \gamma$ and $\beta_1 < \frac{n_1}{n}$, so
\be 
\left|\frac{\beta_i}{n_i} - \frac{1}{n}\right| < \frac{\gamma}{n_i} \leq \frac{\gamma}{n_1} < \frac{\gamma}{\beta_j n} = \frac{\delta}{n}.
\ee
Let $\beta^\p$ be the refinement of $\beta$ obtained by splitting up $\beta_i$ into $n_i$ equal-sized pieces.  We have $\#\beta^\p = n$, and the claim follows from this last inequality.
\end{proof}

\begin{rmk}
The best general bounds on the smallest possible $\#\beta^\p$ given $\beta$ and $\delta$ are not generally known, but fairly good bounds may be obtained from the multidimensional version of Dirichlet's theorem on rational approximation, which is classical and elementary \cite{davenport}.  Actually calculating good simultaneous rational approximations is a difficult problem, and one wishing to make an algorithmic version of our construction should consult the literature on multidimensional continued fractions and Farey partitions, for example, \cite{lagarias, farey}.
\end{rmk}

\noindent The next proposition is rather geometrical.  It concerns the $n \times n$ matrix $S_{\delta} := (1-\delta)(\J/n ) + \delta \I$, which is a convex combination of two maps on the standard simplex: the averaging map and the identity map.  Each vertex gets mapped almost to the center, so the action of $S_{\delta}$ can be visualized as shrinking the standard simplex around its center point.  The proposition picks up where the refinement lemma left off:

\begin{prop}\label{second}
If a stochastic vector $\beta$ satisfies
\be \min_i \beta_i \geq \frac{1-\delta}{\#\beta} \ee then its image under the map $S^{-1}_{\delta} $ is still stochastic.
\end{prop}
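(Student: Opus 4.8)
The plan is to compute the inverse $S_\delta^{-1}$ in closed form and then simply read off the coordinates of $S_\delta^{-1}\beta$. The key structural fact is that the averaging matrix $\J/\#\beta$ is idempotent; so I expect $S_\delta$ to be diagonalizable with eigenvalue $1$ on the all-ones direction and eigenvalue $\delta$ on the complementary hyperplane, whence $\det S_\delta = \delta^{\#\beta-1}\neq 0$ (recall $\delta>0$), so $S_\delta$ is invertible. A one-line multiplication using $(\J/\#\beta)^2 = \J/\#\beta$ should then confirm
\be S_\delta^{-1} = \frac{1}{\delta}\left( \I - \frac{1-\delta}{\#\beta}\,\J \right). \ee

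First I would apply this formula to $\beta$. Since $\beta$ has mass one, the vector $\J\beta$ has every coordinate equal to $1$, so the $i$-th coordinate of $S_\delta^{-1}\beta$ is just $\tfrac{1}{\delta}\bigl(\beta_i - \tfrac{1-\delta}{\#\beta}\bigr)$. Summing over $i$ gives $\tfrac{1}{\delta}\bigl(1-(1-\delta)\bigr)=1$, so $S_\delta^{-1}\beta$ again has mass one; and every coordinate is nonnegative exactly because of the hypothesis $\min_i\beta_i\geq \tfrac{1-\delta}{\#\beta}$. That is precisely the assertion that $S_\delta^{-1}\beta$ is stochastic.

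It is worth recording the geometric picture behind the computation, since it explains where the threshold $\tfrac{1-\delta}{\#\beta}$ comes from: $S_\delta$ carries the vertex $e_i$ of the standard simplex to $u_i := \tfrac{1-\delta}{\#\beta}\,\mathbf{1} + \delta e_i$, and the convex hull of the $u_i$ is exactly the set of stochastic vectors all of whose coordinates are at least $\tfrac{1-\delta}{\#\beta}$. The hypothesis says $\beta$ lies in this shrunken simplex, hence in the image $S_\delta(\Delta^{\#\beta})$, so its $S_\delta$-preimage is a bona fide point of the simplex. This is the same content as the coordinate computation, phrased so as to match the ``shrinking the simplex around its center'' description.

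I do not expect a genuine obstacle: once the formula for $S_\delta^{-1}$ is available the proof is essentially one line. The only points needing a word of care are that $S_\delta$ is honestly invertible (which uses $\delta\neq 0$, guaranteed since $\delta>0$) and that ``stochastic'' is read as ``nonnegative, coordinate sum one,'' so that the stated inequality — and not some strict positivity — is exactly what the argument consumes.
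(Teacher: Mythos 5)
Your proof is correct and follows essentially the same route as the paper: the paper likewise verifies the explicit formula for $S_\delta^{-1}$ by a one-line multiplication using the idempotence of $\J/\#\beta$, and its vector $E=\frac{1}{\delta}\bigl[\beta-\frac{1-\delta}{\#\beta}\J\bigr]$ is exactly your coordinatewise expression for $S_\delta^{-1}\beta$, with nonnegativity from the hypothesis and mass one from the same summation. The only difference is cosmetic: the paper phrases the computation via the convex decomposition $\beta=(1-\delta)(\J/\#\beta)+\delta E$, while you apply the inverse directly.
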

\begin{proof}
First note that
$\left[ (1-\delta) \left(J / \# \beta \right) + \delta \I \right] \left[ (1-1/\delta)\left(J / \# \beta \right) + (1/\delta) \I \right] = \I$,
so we have an explicit form for $S^{-1}_{\delta}$.
We know that $ \min_i \beta_i \geq (1-\delta)/\#\beta$, so the vector \be E=\frac{1}{\delta} \left[ \beta - \left( \frac{1-\delta}{\#\beta} \right) \J \right] \ee
is still positive.  Now
$\beta = (1-\delta) \left(\J / \#\beta \right) + \delta E$, 
a convex combination of two positive vectors.  The vector $\beta$ has mass $1$, and $ \left(\J / \#\beta \right)$ as well, so $E$ also has mass $1$.

Now compute:
\begin{eqnarray}
S^{-1}_{\delta} \beta & = & \biggl[ (1-1/\delta) \left(\J / \#\beta \right) + (1/\delta) \I \biggr] \biggl[ (1-\delta) \left(\J / \#\beta \right) + \delta E \biggr] \nn \\
 & = & \biggl[ (1 - 1/\delta)(1 - \delta) + (1/\delta)(1 - \delta) + (1 - 1/\delta)\delta \biggr] \left(\J / \#\beta \right) + E \nn \\
 & = & E.
\end{eqnarray}
This completes the proof.
\end{proof}

\noindent The following proposition shows that applying the matrix $S_{\delta}$ in all arguments of some multilinear functional forces the outputs to be close to each other.
\begin{prop}\label{third}
Let $A$ be a hypermatrix of format $n_1 \times n_2 \times \cdots \times n_p$ with entries in $[0,1]$, and take $\delta := \varepsilon / (2p)$.  Now the matrix $A'$ defined by
\be A^\p\left(\otimes x^{(i)}\right) := A\left(\otimes S_{\delta} x^{(i)}\right) \ee
satisfies $| A'(x) - A'(x') | \leq \varepsilon$
for any two stochastic tensors $x$ and $x'$.
\end{prop}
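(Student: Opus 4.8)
The plan is to run a hybrid (telescoping) argument that replaces the tensor factors coming from $x$ by those coming from $x'$ one slot at a time, and to bound the effect of each replacement by $2\delta$. The two facts driving everything are properties of the averaging-and-shrinking map $S_{\delta} = (1-\delta)(J/n) + \delta I$ (for $0 \le \delta \le 1$, which we may assume): first, since each column of $J/n$ and of $I$ sums to $1$ and all entries are nonnegative, $S_{\delta}$ is column-stochastic, so it carries stochastic vectors to stochastic vectors; second, since $J/n$ annihilates any vector whose coordinates sum to $0$, for stochastic vectors $u, u'$ of the same length we get $S_{\delta} u - S_{\delta} u' = \delta(u - u')$, whose $\ell^1$-norm is at most $2\delta$ because $\|u - u'\|_1 \le \|u\|_1 + \|u'\|_1 = 2$.

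Next I would isolate the elementary per-step estimate: if $B$ is a hypermatrix with entries in $[0,1]$, and one places an arbitrary vector $v$ in one slot and stochastic vectors in all the other slots, then the resulting scalar has absolute value at most $\|v\|_1$. This follows by expanding the multi-index sum, applying the triangle inequality, bounding each entry of $B$ by $1$ in absolute value, and factoring the sum so that each stochastic slot contributes a factor $1$ while the distinguished slot contributes $\sum_k |v_k| = \|v\|_1$. With these preliminaries the main computation is short. It suffices to handle pure tensors $x = \bigotimes_i x^{(i)}$, $x' = \bigotimes_i x'^{(i)}$ of stochastic vectors, since an arbitrary stochastic tensor is a convex combination of standard-basis tensors (themselves pure tensors of stochastic vectors) and $A'$ is multilinear, so the bound for pure stochastic tensors passes to the general case by convexity. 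For $0 \le i \le p$ set
\[
H_i := A\!\left(S_{\delta} x'^{(1)} \otimes \cdots \otimes S_{\delta} x'^{(i)} \otimes S_{\delta} x^{(i+1)} \otimes \cdots \otimes S_{\delta} x^{(p)}\right),
\]
so that $H_0 = A'(x)$ and $H_p = A'(x')$. Consecutive hybrids $H_{i-1}$ and $H_i$ differ only in the $i$-th tensor slot, so multilinearity of $A$ gives
\[
H_{i-1} - H_i = A\!\left(\cdots \otimes \bigl(S_{\delta} x^{(i)} - S_{\delta} x'^{(i)}\bigr) \otimes \cdots\right),
\]
with every other slot holding a stochastic vector by the first fact. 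The second fact gives $\|S_{\delta} x^{(i)} - S_{\delta} x'^{(i)}\|_1 \le 2\delta$, so the per-step estimate yields $|H_{i-1} - H_i| \le 2\delta$, and telescoping gives
\[
|A'(x) - A'(x')| \le \sum_{i=1}^{p} |H_{i-1} - H_i| \le 2p\delta = \varepsilon,
\]
using $\delta = \varepsilon/(2p)$.

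I do not expect a real obstacle here: the argument is a routine hybrid/perturbation estimate. The only points that need care are verifying that $S_{\delta}$ is genuinely column-stochastic and genuinely kills the (zero-sum) difference of two stochastic vectors, keeping track that the "other" slots in each telescoping difference really are stochastic, and noting the convexity reduction from pure stochastic tensors to arbitrary stochastic tensors.
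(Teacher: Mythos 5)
Your argument is correct, but it is a genuinely different route from the paper's. The paper proves a one-sided statement first: writing $m := A\left(\otimes (J/n_i)\right)$ for the mean entry of $A$, it expands $A'\left(\otimes x^{(i)}\right) = A\left(\otimes\left[(1-\delta)(J/n_i) + \delta x^{(i)}\right]\right)$ by multilinearity into a convex combination of $2^p$ values of $A$ at stochastic tensors, observes that the dominant term carries weight $(1-\delta)^p \geq 1-\varepsilon/2$ and equals $m$, and concludes $|A'(x) - m| \leq \varepsilon/2$ for every stochastic tensor $x$; the stated bound then follows by the triangle inequality. You instead bound $|A'(x)-A'(x')|$ directly by a hybrid/telescoping argument, using two clean facts about $S_\delta$: it preserves stochastic vectors, and it contracts differences of stochastic vectors exactly by the factor $\delta$ (since $J/n$ kills zero-sum vectors), together with the elementary estimate that a $[0,1]$-entried form evaluated with one arbitrary vector and stochastic vectors elsewhere is bounded by the $\ell^1$-norm of that vector. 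Your per-step bound $2\delta$ and the telescoping sum give $2p\delta = \varepsilon$, the same constant, and your reduction from general stochastic tensors to pure ones by double convexity is sound (the paper makes the same reduction). What each buys: the paper's version yields the slightly stronger and conceptually useful fact that \emph{all} values of $A'$ on stochastic tensors cluster within $\varepsilon/2$ of the single number $m$; yours avoids the $2^p$-term expansion and the inequality $(1-\delta)^p \geq 1 - p\delta$, and isolates the Lipschitz-type statement $|A'(\otimes x^{(i)}) - A'(\otimes x'^{(i)})| \leq \delta \sum_i \|x^{(i)} - x'^{(i)}\|_1$, which makes the role of $S_\delta$ as a contraction on the simplex transparent. (One cosmetic caveat: as in the paper, the map $S_\delta$ is only meaningful as a convex combination for $\delta \leq 1$, i.e.\ $\varepsilon \leq 2p$, which is the only regime in which the lemma is used.)
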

\begin{proof}
Let $m := A\left(\otimes (\J / n_i)\right)$, the mean of the entries of $A$.  We show that for any stochastic vectors $x^{(i)}$,
\be \left|A'\left(\otimes x^{(i)}\right) - m\right| \leq \varepsilon/2. \ee
Since any other stochastic tensor is a convex combination of stochastic pure tensors, it will follow that
$|A'(x) - m| \leq \varepsilon/2$.
Then the triangle inequality will yield the result.

It remains to show that $A'$ applied to a stochastic pure tensor gives a value within $\varepsilon /2 $ of $m$.
\begin{eqnarray}
A'\left(\otimes x^{(i)}\right) & = & A\left(\otimes S_{\delta} x^{(i)}\right) \nn \\
 & = & A\left( \otimes \left[ (1-\delta)(\J /n_i) + \delta \I \right] x^{(i)} \right) \nn \\
 & = & A\left( \otimes \left[(1-\delta)(\J /n_i) + \delta x^{(i)} \right] \right). \label{bad} 
\end{eqnarray}
Each argument of $A$---that is, factor in the tensor product---is a convex combination of two stochastic vectors.  Expanding out by multilinearity, we get convex combination with $2^p$ points.  Each point---let's call the $k^{\mbox{\tiny{th}}}$ one $y_k$---is an element of $[0,1]$ since it is some weighted average of the entries of $A$.  This convex combination has positive $\mu_k$ such that $\sum \mu_k = 1$ and
\be A'\left(\otimes x^{(i)}\right) = \sum_{k=1}^{2^p} \mu_k y_k. \ee
Taking the first vector in each argument of $A$ in \eqref{bad}, we see that $y_1 = A\left(\otimes (\J / n_i)\right)=m$, the average entry of $A$.  Thus, the first term in the convex combination is
$\mu_1 y_1 = (1-\delta)^p m$.

The inequality $(1-\varepsilon/2) \leq (1-\delta)^p$ allows us to split up the first term.  Let $\mu_0 := 1-\varepsilon/2$ and $\mu_1' := \mu_1 - \mu_0 \geq 0$.  We have
$\mu_1 y_1 = (\mu_0 + \mu_1') y_1 = (1-\varepsilon/2) m + \mu_1' m$.
After splitting this term, the original convex combination becomes
\be 
A'\left(\otimes x^{(i)}\right) = (1-\varepsilon/2) m + \mu_1' m + \sum_{k=2}^{2^p} \mu_k y_k.
\ee
Let $e$ denote the weighted average of the terms after the first.  We may rewrite the convex combination
\be A'\left(\otimes x^{(i)}\right) = (1-\varepsilon/2) m + (\varepsilon/2) e. \ee
Since $m, e \in [0,1]$,
\be m-\varepsilon / 2 \leq (1-\varepsilon /2) m \leq A'\left(\otimes x^{(i)}\right) \leq (1-\varepsilon /2) m + \varepsilon / 2 \leq m + \varepsilon /2, \ee
and
\be \left| A'\left(\otimes x^{(i)}\right) - m \right| \leq \varepsilon / 2,\ee
so we are done.
\end{proof}

These results are now strong enough to prove the approximation lemma \ref{approximation}.
\begin{proof}
The number $\alpha$ is $p$-cooperative, so it comes with some $p$-dimensional nonnegative rational hypermatrix $A$ and positive vectors $\beeta{1}, \beeta{2}, \ldots, \beeta{p}$ of mass one, satisfying (in particular) $A\left( \otimes \beeta{i} \right) = \alpha$.  The refinement lemma allows us to assume that each $\beeta{i}$ satisfies 
\be \min_j \beta^{(i)}_j \geq \frac{1-\delta}{\#\beta^{(i)}}. \ee
If one of the $\beeta{i}$ fails to satisfy this hypothesis, we may replace it with the refinement given by the lemma, and duplicate the corresponding slices in $A$ to match.

Now, by Proposition \ref{second}, each $S^{-1}_{\delta} \beeta{i}$ is a stochastic vector.

Let $A'$ be as in Proposition \ref{third}.  It will still be a rational hypermatrix if we pick $\varepsilon$ to be rational.  We know
\be A'\left( \otimes S^{-1}_{\delta} \beeta{i} \right) = \alpha. \ee
On the other hand, any entry of the matrix $A'$ is given by evaluation at a tensor product of basis vectors.  Both $\alpha$ and any entry of $A'$ can be found by evaluating $A'$ at a stochastic tensor.  Thus, by Proposition \ref{third}, each entry of $A'$ is within $\varepsilon$ of $\alpha$.
\end{proof}

\subsection{Higher-order robustness}

We complete the proof of our main theorem.

\begin{prop}\label{onlyrat}
If $r \geq p/2$, then $p$ people with finite random sources may $r$-robustly simulate only finite $\Q$-random sources.
\end{prop}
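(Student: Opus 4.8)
The plan is to reduce the case $r \geq p/2$ to the already-understood two-player situation by a partition argument. Suppose $p$ people, each with a full-strength finite random source, run an $r$-robust simulation of a coin flip with bias $\alpha$, realized by a $\{0,1\}$-hypermatrix $A$ of format $n_1 \times \cdots \times n_p$ with stochastic vectors $\beeta{1}, \ldots, \beeta{p}$. Since $r \geq p/2$, we may split $\{1, \ldots, p\}$ into two blocks $S$ and $T$ with $|S|, |T| \leq r$. Group the tensor factors accordingly: write $A$ as a bilinear form $\tilde A$ on $\left(\bigtensor_{i \in S} \R^{n_i}\right) \times \left(\bigtensor_{j \in T} \R^{n_j}\right)$, and set $\tilde\beta^{(1)} = \bigtensor_{i \in S} \beeta{i}$, $\tilde\beta^{(2)} = \bigtensor_{j \in T} \beeta{j}$. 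These are stochastic vectors in the grouped spaces, and $J$ factors as $J \tensor J$ under the grouping.

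The key step is to check that $r$-robustness of the original simulation forces $\tilde A$ to have mystery-value $\alpha$ with mystery-vectors $\tilde\beta^{(1)}, \tilde\beta^{(2)}$ in the sense of the bilinear definition. Indeed, altering the joint distribution fed to the sources in block $S$ (respectively $T$) changes at most $|S| \leq r$ (respectively $|T| \leq r$) of the individual distributions, so $r$-robustness says the probability of heads is unchanged; that is,
\be
\tilde A\left(\ex{1}, \tilde\beta^{(2)}\right) = \alpha J\left(\ex{1}, \tilde\beta^{(2)}\right), \qquad \tilde A\left(\tilde\beta^{(1)}, \ex{2}\right) = \alpha J\left(\tilde\beta^{(1)}, \ex{2}\right)
\ee
for all stochastic $\ex{1}, \ex{2}$ in the grouped spaces, hence for all vectors by linearity since the $\tilde\beta^{(i)}$ have mass one. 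This is exactly the statement that $\alpha$ is a mystery-value of the bilinear form $\tilde A$. Since the entries of $\tilde A$ are still $\{0,1\}$ (a Kronecker product of $\{0,1\}$-hypermatrices is $\{0,1\}$-valued), the corollary following the uniqueness proposition applies verbatim: any field automorphism $\sigma \in \Gal(\C/\Q)$ fixes $\tilde A$ and carries mystery-values to mystery-values, so by uniqueness of the mystery-value of a bilinear form, $\sigma(\alpha) = \alpha$, forcing $\alpha \in \Q$.

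Finally, to upgrade from ``the bias is rational'' to ``only finite $\Q$-random sources can be simulated,'' I would argue as in the earlier lemmas: any finite probability space with a subset of mass $\alpha'$ has the coin $(\alpha' \hspace*{8pt} 1-\alpha')$ as a quotient, so an $r$-robust simulation of that space yields an $r$-robust simulation of an $\alpha'$-biased coin, whence $\alpha' \in \Q$ by the above. Thus every point mass is rational, and the simulated source is a finite $\Q$-random source. The main obstacle is purely bookkeeping: verifying carefully that the grouped bilinear form inherits the mystery-value/mystery-vector structure and that the $\{0,1\}$-entry condition survives the grouping — everything else is an application of Proposition on uniqueness and its corollary, together with $r \geq p/2$ being exactly what lets both blocks have size at most $r$.
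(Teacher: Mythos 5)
Your proposal is correct and is essentially the paper's own argument: the paper likewise splits the $p$ sources into two blocks of size at most $r$ (``Alice takes half, Bob takes the rest'') and invokes the two-player result to force rationality. You simply spell out the multilinear bookkeeping (grouped bilinear form, uniqueness of the mystery-value, Galois fixing) that the paper leaves implicit.
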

\begin{proof}
Consider an $r$-robust simulation.  Imagine that Alice has access to half of the random sources (say, rounded up), and Bob has access to the remaining sources.  Because Alice and Bob have access to no more than $r$ random sources, neither knows anything about the source being simulated.  But this is precisely the two-player case of ordinary $1$-robustness, so the source being simulated is restricted to rational probabilities.
\end{proof}



\noindent In the constructive direction, we show the following:

\begin{prop}\label{bigcon}
If $r < p/2$, then $p$ people with full-strength finite $\ol \Q$-random sources may $r$-robustly simulate a full-strength finite $\ol \Q$-random source.
\end{prop}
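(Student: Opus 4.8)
The plan is to reduce the $r$-robust $p$-player problem to the $1$-robust $3$-player problem already solved in Theorem~\ref{constructive}, by a ``grouping and padding'' construction analogous to Proposition~\ref{diecon} and Lemma~\ref{rationalmat}. Since $r < p/2$, we may partition the $p$ sources into three nonempty groups $G_1, G_2, G_3$, each of size strictly greater than $r$ (for instance, sizes as equal as possible: each is at least $\lceil p/3 \rceil > r$ when $p \geq 3$, and one checks the small cases $p \in \{3,4,5\}$ with $r=1$ directly, the size pattern being $\{1,1,1\}$, $\{2,1,1\}$, $\{2,2,1\}$ — wait, $\{1,1,1\}$ has a group of size $1 = r$, so for $p=3$, $r<3/2$ forces $r=1$ and we need each group $>1$, impossible with three groups; hence for $p=3$ we instead note $r=1$ is exactly the $1$-robust $3$-player case of Theorem~\ref{main} and there is nothing to do). So assume $p \geq 4$; then we can choose three groups each of size $\geq 2 > 1 \geq$... no: we need size $> r$, and $r$ can be as large as $\lfloor (p-1)/2 \rfloor$. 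The correct claim is: three groups each of size $> r$ exist iff $p > 3r$, i.e. $r < p/3$ — which is \emph{not} implied by $r < p/2$. So three groups is too coarse.

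Instead I would use a two-step reduction. First, pick the \emph{smallest} number $k$ of groups such that $p$ can be split into $k$ groups each of size $> r$; since $r < p/2$, two groups of size $> r$ fit (sizes $\lceil p/2 \rceil, \lfloor p/2 \rfloor$, both $\geq \lfloor p/2 \rfloor + 1 > r$ when... again $\lfloor p/2\rfloor \geq r+1$ iff $p \geq 2r+2$, true since $r < p/2$ and $r,p$ integers give $r \leq \lceil p/2\rceil - 1$, so $\lfloor p/2 \rfloor \geq r$... the strict inequality needs care but holds: $r < p/2 \Rightarrow 2r < p \Rightarrow 2r \leq p-1 \Rightarrow r \leq (p-1)/2$, so $\lceil p/2 \rceil \geq r+1$ always, and $\lfloor p/2\rfloor \geq r$, with equality $\lfloor p/2 \rfloor = r$ only when $p = 2r+1$). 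So generically two groups of size $> r$ exist, but when $p = 2r+1$ one group has size exactly $r$. The clean fix: treat the $r$-robust $p$-player setting as the $1$-robust \emph{two-meta-player} setting whenever we can split into two groups each of size $\geq r+1$, which by the above is exactly when $p \geq 2r+2$; and handle $p = 2r+1$ separately. But wait — Proposition~\ref{onlyrat} shows $r \geq p/2$ gives only rationals, and here $r < p/2$ means $p \geq 2r+1$; the boundary case $p = 2r+1$ still has $r < p/2$, so it must yield all of $\ol\Q$. For $p = 2r+1$: split into three groups of sizes $r+1, r, r$... the size-$r$ groups are too small. Hmm. Actually split into three groups of sizes $r+1$, $\lceil r/\,\rceil$... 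Let me instead split $p = 2r+1$ sources into groups where we allow a group of size $r$ but compensate: any coalition of $r$ sources misses at least one source in \emph{each} group of size $\geq r+1$ and, if it captures an entire size-$r$ group, it has spent all its budget there and misses everything in the other groups entirely.

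So the actual construction: split $\{1,\dots,p\}$ into three groups with sizes $s_1 \geq s_2 \geq s_3$ chosen so that $s_1 \geq r+1$ and $s_2 + s_3 \geq r+1$; this is possible whenever $p \geq r + 2$, hence always here since $p \geq 2r+1 \geq r+2$ for $r \geq 1$. Within each group the players first run the $(|G_j|-1)$-robust $d$-sided-die protocol of Proposition~\ref{diecon} to produce a single uniform group-output $h_j$; because $|G_j| - 1 \geq \min(s_1, s_2+s_3) - $ (appropriate bookkeeping) $\geq r$... — more carefully, I use Proposition~\ref{diecon} so that any $r$-subset of the players in $G_j$ leaves $h_j$ uniform and independent of their data, \emph{and} if $G_j$ has size $\leq r$ we do not use the die protocol there but instead fold $G_j$ together with another group so the combined group has size $\geq r+1$. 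The net effect is three ``super-sources'' $h_1, h_2, h_3$ such that the data visible to any real coalition of $\leq r$ players determines at most... leaves at least one $h_j$ uniform-and-independent; more precisely leaves the triple $(h_1,h_2,h_3)$ with the property that \emph{some} coordinate is uniform conditioned on the coalition's view. Then feed $h_1, h_2, h_3$ into the $1$-robust $3$-player simulation of a full-strength finite $\ol\Q$-random source from Theorem~\ref{main} / Theorem~\ref{constructive}. Since that simulation's output is independent of any one of its three inputs, and each real $r$-coalition ``controls'' at most one super-source in the sense above, the composite is an $r$-robust simulation. The main obstacle is exactly this bookkeeping: verifying that a coalition of $r$ players, distributed arbitrarily among the groups, always fails to pin down at least one super-source, which requires the group-size inequalities $s_1 \geq r+1$, $s_2 + s_3 \geq r+1$ and a short case analysis (coalition entirely inside one group vs. spread across groups), plus checking that Proposition~\ref{diecon}'s guarantee chains correctly through the two-layer composition. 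I would write this as: construct the groups, invoke Proposition~\ref{diecon} per group, invoke Theorem~\ref{main} on the three outputs, then verify robustness by the coalition case analysis.
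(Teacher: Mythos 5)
There is a genuine gap, and it is structural. Your three super-sources are produced by the uniform-die protocol of Proposition~\ref{diecon}, so each group can hand the top-level protocol only a uniform die (or a quotient of one), i.e.\ only rational-probability data. But the $1$-robustness of the three-player protocol of Theorem~\ref{constructive} is a guarantee \emph{relative to its designated input distributions}: the output bias is unchanged when one input is altered \emph{provided the other two are distributed exactly according to the corresponding mystery-vectors}. For an irrational target bias $\alpha$ those mystery-vectors cannot be chosen rational: if $A$ is rational and two of the $\beta^{(i)}$ are rational, then $A(\,\cdot\,,\beta^{(2)},\beta^{(3)})=\alpha J$ forces $\alpha\in\Q$, so at least two of the three designated input distributions are irrational. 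Hence uniform super-sources can never drive this protocol, and your composite as written produces only rational biases (no more than Proposition~\ref{ratcon} already gives). The natural repair---let each group robustly simulate its designated mystery-vector distribution---collides with Proposition~\ref{onlyrat}: a group of size $s_j$ can produce an irrational-probability distribution robustly only against coalitions of size $<s_j/2$. In the boundary regime $p=2r+1$ this is fatal for \emph{every} partition into three groups: e.g.\ for $p=5$, $r=2$, with sizes $(2,2,1)$ the coalition can put one member into each group carrying an irrational mystery-vector (or swallow the singleton and hit one $2$-group), and with sizes $(3,1,1)$ it swallows both singletons; in each case two of the three inputs to the top protocol have corruptible distributions, and $1$-robustness no longer says anything. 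So your coalition case analysis verifies the wrong condition (``the coalition pins down at most one $h_j$'') rather than the one actually needed (``at most one input distribution can be altered while the other two equal their designated, necessarily irrational, distributions''), and no one-shot three-way partition can close this. (The arithmetic slip---$s_1\geq r+1$ and $s_2+s_3\geq r+1$ require $p\geq 2r+2$, not $p\geq r+2$---is minor by comparison.)

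The paper's proof avoids partitioning altogether: it iterates the three-player construction as a black box, ``simulating simulations of simulations,'' along a ternary tree whose leaves are assigned (with repetition, and fresh samples) to the $p$ sources, wired so that the tree computes a $p$-ary majority gate built from ternary majority gates. The bureaucracy lemma (proved by the probabilistic method in Appendix~\ref{bapp}) supplies such a wiring, and reliability then propagates node by node: a node's simulated source is trustworthy as soon as two of its three children are, so any corrupted minority $r<p/2$ is washed out at the root. Some recursion of this kind, reusing sources across overlapping triples rather than splitting them into disjoint groups, appears to be forced by the obstruction described above; if you want to salvage your write-up, replace the grouping step with this majority-gate iteration.
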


\noindent The proof is to simulate simulations (and simulate simulations of simulations, etc.).  We treat the $p=3$ case of our $1$-robust simulation protocol as a black box.  If a majority of the random sources put into it are reliable, the one that comes out (the simulated random source) will also be reliable.  This viewpoint leads us into a discussion of majority gates.

\begin{defn}
A \textbf{p-ary majority gate} is a logic gate that computes a boolean function returning $1$ if a majority of its inputs are $1$ and $0$ if a majority of its inputs are $0$.  (The output doesn't matter when there are ties.)
\end{defn}

\begin{lem}[Bureaucracy]
A $p$-ary majority gate may be built by wiring together ternary majority gates.
\end{lem}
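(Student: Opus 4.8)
The plan is to reduce everything to the observation that a ternary majority gate with one input tied to a constant wire already computes a two-input conjunction or disjunction: $\mathrm{maj}_3(a,b,1)=a\vee b$ and $\mathrm{maj}_3(a,b,0)=a\wedge b$, as one checks directly. Since feeding constant bits and fanning a wire out to several gates are both harmless, the gates $\{\mathrm{maj}_3,0,1\}$ generate every monotone Boolean function; so it suffices to note that the $p$-ary majority function is monotone and write down a monotone formula for it, for instance its disjunctive normal form $\bigvee_{|S|=\lceil (p+1)/2\rceil}\bigwedge_{i\in S}x_i$, and then replace each $\wedge$ and $\vee$ as above. When $p$ is even this formula breaks ties toward $0$, which is permitted, since the output of a $p$-ary majority gate is unconstrained on ties.

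Concretely I would build the circuit by recursion on the number of inputs, using the cofactor identity
\[
T^{n}_{j}(x_1,\dots,x_n)\;=\;\mathrm{maj}_3\bigl(x_n,\;T^{n-1}_{j}(x_1,\dots,x_{n-1}),\;T^{n-1}_{j-1}(x_1,\dots,x_{n-1})\bigr),
\]
where $T^{n}_{j}$ is the ``at least $j$ of $n$ inputs are $1$'' threshold function, with $T^{n}_{j}\equiv 1$ for $j\le 0$ and $T^n_j\equiv 0$ for $j>n$ realized by constant wires. The identity holds because $T^{n-1}_{j}\le T^{n-1}_{j-1}$, so the ternary majority picks out the cofactor $x_n=0$ when $x_n=0$ and the cofactor $x_n=1$ when $x_n=1$; induction on $n$ then realizes every $T^{n}_{j}$, and the $p$-ary majority gate is the special case $T^{p}_{\lceil (p+1)/2\rceil}$, built from $\mathrm{maj}_3$ gates and constant wires.

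There is no genuinely hard step here; the only points worth a word of care are that the displayed formulas are legitimate realizations of a ``majority gate'' (including the tie cases when $p$ is even) and that constant wires are allowed. The latter is harmless in our application: a deterministic constant source is trivially reliable, and since the circuit is assembled from monotone gates, putting a reliable (constant) source even on a wire that the construction wanted to carry $0$ can only help, so the simulated source that emerges is still reliable whenever a majority of the $p$ supplied sources are. If one insists on using no constant wires at all, one can instead appeal to the classical fact that the monotone self-dual Boolean functions are precisely the clone generated by $\mathrm{maj}_3$ (and $\mathrm{maj}_p$ for odd $p$ is monotone and self-dual); but this refinement is not needed below.
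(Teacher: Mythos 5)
Your Boolean manipulations are all correct: $\mathrm{maj}_3(a,b,1)=a\vee b$, $\mathrm{maj}_3(a,b,0)=a\wedge b$, and the threshold recursion with its cofactor argument check out. Note also that your route is quite different from the paper's, which assembles a balanced ternary tree of $\mathrm{maj}_3$ gates, labels the leaves by the $p$ inputs \emph{at random}, and shows by the probabilistic method (iterating $f(x)=x^2(3-2x)$) that some labeling is a majority gate; that construction uses no constants and incidentally gives polynomially sized trees.

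The genuine gap is your treatment of constant wires. In the application (Proposition \ref{bigcon}) the ternary gate is the $1$-robust three-player simulation, whose only guarantee is: if at least two of its three input sources are reliable---meaning correctly distributed \emph{and} unknown to the coalition---then its output source is reliable. A deterministic constant is not ``trivially reliable'' in this semantics: its value is known to everyone, so it counts as a compromised input. In your OR-gadget $\mathrm{maj}_3(a,b,1)$, reliability of $a$ alone is supposed to force reliability of the output; but then the three-player protocol is run with one input a publicly known constant and a second input compromised, i.e.\ two deviations, which the $1$-robust black box does not tolerate (in the trilinear-form construction, pinning two arguments at known basis vectors destroys both the bias and the privacy of the output). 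Your monotonicity remark only covers constant-$0$ wires, not constant-$1$ wires, so the circuit you build, as justified, does not deliver the $r$-robust simulation---and read strictly, a circuit using constants does not prove the stated lemma (ternary majority gates only) either. The repair is precisely the aside you dismiss as unnecessary: for odd $p$ the majority function is monotone and self-dual, and the clone generated by $\mathrm{maj}_3$ is exactly the monotone self-dual functions, so a constant-free realization exists (even $p$ then follows by ignoring one input, as in the paper). Promote that to the main argument (or use the paper's probabilistic tree construction) and the proof is complete; as written, the central justification for allowing constants is wrong.
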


\noindent The proof of the bureaucracy lemma is a straightforward application of the probabilistic method, and is covered in detail in Appendix \ref{bapp}.  Now, by iterating simulations of simulations according to the wiring provided by the bureaucracy lemma, we can overcome any minority of malfunctioning sources.  So the bureaucracy lemma, together with the ``black box'' of our three-player construction, implies Proposition \ref{bigcon}.

Now we're finally ready to prove our main result.  The statement here is equivalent to the ones in the abstract and in Section \ref{results} but uses the language of robustness.

\begin{thm}\label{main}
Say $p$ people have full-strength finite random sources.  If $p/2 \leq r < p$, the people may $r$-robustly simulate any finite $\mathbb{Q}$-random source and nothing better; if $1 \leq r < p/2$, they may $r$-robustly simulate any finite $\overline{\mathbb{Q}}$-random source and nothing better.
\end{thm}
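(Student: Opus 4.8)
The plan is to assemble the theorem from the pieces already established, using one structural observation: \emph{robustness is monotone in $r$}. If a simulation is $r$-robust and $1 \le r' \le r \le p$, then it is $r'$-robust, because altering the joint distribution of an $r'$-subset of the sources is a special case of altering the joint distribution of an $r$-subset: pad the $r'$-subset up to size $r$ with extra indices whose marginals are left untouched and kept independent (possible since $p \ge r$), and the output distribution is unchanged by $r$-robustness. In particular, every $r$-robust simulation with $r \ge 1$ is $1$-robust, and every $(p-1)$-robust simulation is $r$-robust for all $r \le p-1$.

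First I would handle the regime $p/2 \le r < p$. For the negative direction (``nothing better''), Proposition \ref{onlyrat} applies verbatim: when $r \ge p/2$, an $r$-robust simulation takes only rational probabilities, so the simulated source is a finite $\mathbb{Q}$-random source. For the positive direction, note that a full-strength finite random source is stronger than a full-strength finite $\mathbb{Q}$-random source, so Proposition \ref{ratcon} gives a $(p-1)$-robust simulation of any finite $\mathbb{Q}$-random source, which by monotonicity is in particular $r$-robust. This settles the first half.

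Next I would handle the regime $1 \le r < p/2$ (which forces $p \ge 3$). For ``nothing better,'' an $r$-robust simulation with $r \ge 1$ is $1$-robust, so by the proposition stating that finite-source simulations take probabilities in $\overline{\mathbb{Q}}$, the simulated source is a finite $\overline{\mathbb{Q}}$-random source. For the positive direction, this is exactly Proposition \ref{bigcon}: iterating the three-player $1$-robust construction along the wiring of the bureaucracy lemma yields an $r$-robust simulation of a full-strength finite $\overline{\mathbb{Q}}$-random source whenever $r < p/2$. Combining the two halves gives the theorem.

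Since all the hard analysis---algebraicity via projective duality, the convex-geometry construction of Proposition \ref{construction}, the approximation lemma, and the bureaucracy lemma---has already been carried out, the remaining content is light: the monotonicity remark above, plus the bookkeeping that ``simulating a random source'' correctly encodes both the constraint that the probabilities appearing lie in the stated field and the converse realizability of every source with such probabilities (via the full-strength hypothesis and the lemma reducing simulation of a finite $\coop{p}$-random source to robust coin flips). The main---and modest---obstacle is making sure the hypotheses of the cited results line up with the present statement: in particular, that Proposition \ref{onlyrat}'s reduction to the two-player case is legitimate for every $r \ge p/2$, and that the ``full-strength finite random source'' hypothesis here dominates the ``full-strength finite $\mathbb{Q}$-'' and ``full-strength finite $\overline{\mathbb{Q}}$-'' hypotheses used in the constructive propositions.
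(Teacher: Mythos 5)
Your proposal is correct and takes essentially the same route as the paper, whose proof likewise just combines Proposition \ref{ratcon} and Proposition \ref{onlyrat} for the regime $p/2 \le r < p$ with the algebraicity result and Proposition \ref{bigcon} for the regime $1 \le r < p/2$; the monotonicity of $r$-robustness that you spell out is used implicitly there and is the right piece of bookkeeping to make the citations line up.
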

\begin{proof}
The claim simply combines Proposition \ref{ratcon}, Proposition \ref{onlyrat}, Theorem \ref{main}, and Proposition \ref{bigcon}.
\end{proof}

\section{Application to Secure Multiparty Computation and Mental Poker}\label{application}

We begin with the classical case: Three gentlemen wish to play poker, but they live far away from each other, so playing with actual cards is out of the question.  They could play online poker, in which another party (the remotely hosted poker program) acts as a dealer and moderator, keeping track of the cards in each player's hand, in the deck, etc., and giving each player exactly the information he would receive in a physical game.  But this solution require our gentlemen to trust the moderator!  If they fear the moderator may favor one of them, or if they wish to keep their game and its outcome private, they need another system.

A better solution is to use secure multiparty computation.  Our gentlemen work to \emph{simulate} a moderator in a way that keeps the outcomes of the moderator's computations completely hidden from each of them.  An unconditionally-secure method of playing poker (and running other games/computations) ``over the phone'' has been described in \cite{poker}.

In the classical case, the players may perform finite computations, communicate along private channels, and query full-strength finitary private random sources.  The simulated moderator has the almost same abilities as the players, except that its private random source is limited to rational probabilities.  The work of this paper expands this to all algebraic probabilities, and shows that one can do no better.

To see how this may be useful, think back to our poker players.  They may be preparing for a poker tournament, and they may want to simulate opponents who employ certain betting strategies.  But poker is a complicated multiplayer game (in the sense of economic game theory), and Nash equilibria will occur at mixed strategies with algebraic coefficients.\footnote{The appearance of algebraic (but not transcendental) coefficients in mixed strategies is explained by R. J. Lipton and E. Markakis here \cite{nash}.}

\appendix

\section{Relevant Constructions in Algebraic Geometry}\label{geoapp}

Comprehensive introductions to these constructions may be found in \cite[Lec. 14, 15, 16]{harris} and \cite[Ch. 1]{GKZ}.

\subsection{Tangency and projective duality}\label{geoapp1}

Let $k$ be an algebraically closed field of characteristic zero.  (For our purposes, it would suffice to take $k=\C$, but the methods are completely general.)  Let $X \subseteq \Pj^n$ be a projective variety over $k$.  A hyperplane $H \in (\Pj^n)^\ast$ is \emph{(algebraically) tangent} to $X$ at a point $z$ if every regular function on an affine neighborhood of $z$ vanishing on $H$ lies in the square of the maximal ideal of the local ring $\mathcal{O}_{X,z}$.

This notion of tangency agrees with geometric intuition on the set of smooth points $X^\sm$ of $X$.  To get a more complete geometric picture, we define an incidence variety:
\be 
W_X := \ol{\{ (z,H) : z \in X^\sm, H \mbox{ is tangent to } X \mbox{ at } z \}} \subseteq \Pj^n \times (\Pj^n)^\ast.
\ee
The bar denotes Zariski closure.  Membership in $W_X$ may be thought of as extending the notion of tangency at a smooth point to include singular points ``by continuity.''

The image of a projective variety under a regular map is Zariski closed, so the projection of $W_X$ onto the second coordinate is a variety, called the dual variety and denoted $X^{\vee}$.

The following theorem explains why projective duality is called ``duality.''  We omit the proof; see \cite[p. 208--209]{harris} or \cite[p. 27--30]{GKZ}.

\begin{thm}[Biduality theorem]\label{biduality}
Let $X$ be a variety in $\Pj^n$.  For $z \in \Pj^n$, let $z^{\ast\ast}$ be the image under the natural isomorphism to $(\Pj^n)^{\ast\ast}$.  Then, $(z,H) \mapsto (H,z^{\ast\ast})$ defines an isomorphism $W_X \isom W_{X^\vee}$.  (Specializing to the case when $(z,H)$ and $(H,z^{\ast\ast}$ are smooth points $X$ and $X^\vee$, respectively, this says that $H$ is tangent to $X$ at $z$ if and only if $z$ is tangent to $X^{\vee}$ at $H$.)  Moreover, $z \mapsto z^{\ast\ast}$ defines an isomorphism $X \isom (X^{\vee})^{\vee}$.
\end{thm}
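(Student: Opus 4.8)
The plan is to deduce the whole theorem from one symmetry statement about the incidence variety, and then to establish that statement by a local computation with affine cones. Write $\Pj^n = \Pj(V)$ with $V = k^{n+1}$, and use the canonical identifications $(\Pj^n)^\ast = \Pj(V^\ast)$ and $(\Pj^n)^{\ast\ast} = \Pj(V^{\ast\ast}) = \Pj(V)$, so that the factor-swap $\tau\colon \Pj^n \times (\Pj^n)^\ast \to (\Pj^n)^\ast \times (\Pj^n)^{\ast\ast}$, $\tau(z,H) = (H, z^{\ast\ast})$, is an isomorphism. Everything follows once one shows $\tau(W_X) = W_{X^\vee}$: granting this, $\tau$ restricts to the asserted isomorphism $W_X \isom W_{X^\vee}$; the image under the new second projection gives $(X^\vee)^\vee = \pi_2(W_{X^\vee}) = \pi_2(\tau(W_X)) = \pi_1(W_X) = X$ (the last step because $W_X$ is defined as the closure of a set dominating $X$), and unwinding the maps shows this equality is the one given by $z \mapsto z^{\ast\ast}$; and restricting the isomorphism $W_X \isom W_{X^\vee}$ to the smooth loci of the two varieties gives the parenthetical tangency reformulation. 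Treating the irreducible components of $X$ one at a time, $W_X$ and $W_{X^\vee}$ are irreducible of the common dimension $n-1$, so it suffices to prove the single inclusion $\tau(W_X) \subseteq W_{X^\vee}$.

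I would set this up with conormal varieties. Let $\widehat X \subseteq V$ be the cone over $X$, and let $\mathcal{N}_X \subseteq V \times V^\ast$ be the conormal variety, the Zariski closure of $\{(v,\varphi)\colon v \in \widehat X^\sm,\ \varphi|_{T_v\widehat X} = 0\}$; it is homogeneous in each factor, projectivizes to $W_X$, and has image the cone $\widehat{X^\vee}$ under the second projection $V \times V^\ast \to V^\ast$. Over the smooth locus $\mathcal{N}_X$ is the total space of the conormal bundle of $\widehat X^\sm$, hence smooth and irreducible of dimension $(\dim X + 1) + (n - \dim X) = n+1$. The crux of the whole proof is that $\mathcal{N}_X$ is \emph{Lagrangian} in $T^\ast V = V \times V^\ast$ for the standard symplectic form $\omega\bigl((a,\alpha),(b,\beta)\bigr) = \alpha(b) - \beta(a)$: its dimension is exactly half of $\dim(V \times V^\ast)$, and the isotropy $\omega|_{\mathcal{N}_X} = 0$ holds on the dense smooth locus because the conormal bundle of a submanifold is isotropic in its cotangent bundle (present $\widehat X$ near $v$ as a graph over $T_v\widehat X$ and differentiate $\varphi|_{T_v\widehat X} = 0$), and then propagates to the closure.

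Next I would transport the Lagrangian across the swap: the linear map $V \times V^\ast \to V^\ast \times V$ exchanging the factors sends $\omega$ to $-\omega$ and preserves bihomogeneity, so $\tau(\mathcal{N}_X)$ is again a bihomogeneous Lagrangian. Pick a general $\varphi \in \widehat{X^\vee}$; it is a smooth point of $\widehat{X^\vee}$, and a general $(v,\varphi) \in \mathcal{N}_X$ above it is a smooth point of $\mathcal{N}_X$ at which the projection $\mathcal{N}_X \to \widehat{X^\vee}$ has surjective differential onto $W := T_\varphi\widehat{X^\vee}$. Isotropy of $T_{(v,\varphi)}\mathcal{N}_X$ together with the fact that its image in the $V^\ast$-factor is $W$ forces $T_{(v,\varphi)}\mathcal{N}_X \cap (V \times \{0\}) = W^\perp \times \{0\}$ (the a priori inclusion becomes equality by a dimension count). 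Hence each irreducible component of the fibre of $\mathcal{N}_X \to \widehat{X^\vee}$ through $v$ has tangent space $W^\perp$ at its general points; being conical it contains the Euler field, so it lies in the fixed linear space $W^\perp$, and since it has dimension $\dim W^\perp$ it equals $W^\perp$. That linear space is exactly the fibre of $\mathcal{N}_{X^\vee}$ over $\varphi$. Thus $\tau(\mathcal{N}_X)$ and $\mathcal{N}_{X^\vee}$ are irreducible of dimension $n+1$, both dominate $\widehat{X^\vee}$, and have the same fibre over a dense open set, so they coincide; projectivizing yields $\tau(W_X) = W_{X^\vee}$.

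The step I expect to be the main obstacle is everything hiding behind ``propagates to the closure'' and ``for general $\varphi$'': the isotropy of $\mathcal{N}_X$ on all of it, the claim that a general point of $\widehat{X^\vee}$ is dominated by a \emph{smooth} point of $\mathcal{N}_X$ at which the projection is a submersion onto $T_\varphi\widehat{X^\vee}$, and the attendant identifications of tangent spaces, are transparent on dense open loci but require generic smoothness of a dominant morphism to push through — which is precisely where the characteristic-zero hypothesis on $k$ is used. Granting those facts, the rest is just the reflexivity $W^{\perp\perp} = W$ for linear subspaces of $V$, which delivers the tangency reformulation and the ``Moreover'' clause for free.
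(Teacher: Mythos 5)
The paper never proves this theorem: it explicitly omits the proof and defers to \cite[Lec.~16]{harris} and \cite[Ch.~1]{GKZ}, so the only meaningful comparison is with those sources. Your argument is essentially the proof in \cite{GKZ}: pass to affine cones, show the conormal variety $\mathcal{N}_X \subseteq V \times V^\ast$ is a bihomogeneous Lagrangian of dimension $n+1$, transport it across the factor swap (which changes $\omega$ only by a sign), and identify the general fibre of $\mathcal{N}_X \to \widehat{X^\vee}$ with the annihilator $W^\perp$ of $W = T_\varphi\widehat{X^\vee}$ via isotropy, a dimension count, and the Euler field, concluding $\tau(\mathcal{N}_X) = \mathcal{N}_{X^\vee}$ and hence $\tau(W_X) = W_{X^\vee}$, from which the tangency reformulation and $X \isom (X^\vee)^\vee$ follow. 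The local computations are right, and you correctly locate where generic smoothness (hence characteristic zero) is needed; those genericity points are standard and fillable.

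The one step that does not hold up is the opening reduction ``treating the irreducible components of $X$ one at a time.'' For reducible $X$ the statement as printed is actually false, so no reduction can save it: take $X = C \cup L \subset \Pj^2$ with $C$ a smooth conic and $L$ a tangent line of $C$. Then $L^\vee$ is a single point lying on the conic $C^\vee$, so $X^\vee = C^\vee$ and already the ``Moreover'' clause fails, since $(X^\vee)^\vee = C \subsetneq X$; likewise $W_{X^\vee} = W_{C^\vee}$ is irreducible, while $W_X = W_C \cup \bigl(L \times \{L\}\bigr)$ has two components, and the swap sends $L \times \{L\}$ to $\{L\} \times L$, which is not contained in $W_{C^\vee}$ (whose fibre over $L$ is the single point of tangency). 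The failure mode is exactly that the dual of one component can sit inside the dual of another, and $W_{X^\vee}$, being a closure taken only over the smooth locus of $X^\vee$, then loses a component. So biduality must be stated, as in \cite{GKZ} and \cite{harris}, for irreducible $X$ (a hypothesis the paper's statement silently drops); your proof is complete in that case, and that case is all the paper uses, since it applies the theorem only to the irreducible Segre variety.
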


\subsection{Segre embeddings and their duals}\label{format}

Consider the natural map $k^{n_1} \times \cdots \times k^{n_p} \to k^{n_1} \tensor \cdots \tensor k^{n_p} = k^{n_1 \cdots n_p}$ given by the tensor product.  Under this map, the fiber of a line through the origin is a tuple of lines through the origin.  Thus, this map induces an embedding $\Pj^{n_1-1} \times \cdots \times \Pj^{n_p-1} \inj \Pj^{n_1 \cdots n_p - 1}$.  The map is known as the Segre embedding, and the image is known as the Segre variety $X$ of \emph{format} $n_1 \times \cdots \times n_p$.  It is, in other words, the pure tensors considered as a subvariety of all tensors, up to constant multiples.  This variety is cut out by the determinants of the $2 \times 2$ subblocks.  Also, it is smooth because it is isomorphic as a variety to $\Pj^{n_1-1} \times \cdots \times \Pj^{n_p-1}$.

When a projective variety is defined over the rational numbers,\footnote{That is, it is the zero set of a system of homogeneous rational polynomials.} its dual is also defined over the rationals, by construction \cite[p. 14]{GKZ}.  In particular, the dual $X^\vee$ of the Segre embedding is defined over $\Q$.

When the dimensions $n_i$ satisfy the ``$p$-gon inequality''
\be 
(n_j-1) \leq \sum_{i \neq j} (n_i-1),
\ee
\noindent Gelfand, Kapranov, and Zelevinsky \cite[p. 446]{GKZ} show that the dual of the Segre variety is a hypersurface.  The polynomial for this hypersurface is irreducible, has integer coefficients, and is known as the \emph{hyperdeterminant} of format $n_1 \times \cdots \times n_p$.  It is denoted by $\Det$.  When $p=2$ and $n_1=n_2$, the hyperdeterminant is the same as the determinant of a square matrix \cite[p. 36]{GKZ}.

Gelfand, Kapranov, and Zelevinsky provide us with two equivalent definitions of degeneracy.

\begin{defn}\label{pderiv}
A $p$-linear form $T$ is said to be \textbf{degenerate} if either of the following equivalent conditions holds:
\begin{itemize}
\item there exist nonzero vectors $\beeta{i}$ so that, for any $0 \leq j \leq p$,
\be T \left( \beeta{1}, \ldots , \beeta{j-1} , \ex{j} ,\beeta{j+1}, \ldots , \beeta{p} \right) = 0 \mbox{ for all $\ex{j}$;} \ee

\item there exist nonzero vectors $\beeta{i}$ so that $T$ vanishes at $\otimes \beeta{i}$ along with every partial derivative with respect to an entry of some $\beeta{i}$:
\be T \mbox{ and } \frac{\partial T}{\partial \beeeta{j}{i}} \mbox{ vanish at $\otimes \beeta{i}$.}\ee
\end{itemize}
\end{defn}

\noindent The dual of the Segre variety is useful to us because it can tell whether a multilinear form is degenerate.

\begin{thm}[Gelfand, Kapranov, and Zelevinsky]
For any format, the dual $X^\vee$ of the Segre embedding is defined over $\Q$ and satisfies, for every multilinear form $T$ of that format,
\be T \in X^\vee \hspace{.15in} \Longleftrightarrow \hspace{.15in} T \mbox{ is degenerate.}\ee
When the format satisfies the ``p-gon inequality,'' $X^\vee$ is defined by a polynomial in the entries of $T$ with coefficients in $\Z$, called the hyperdeterminant:
\be \Det(T) = 0 \hspace{.15in} \Longleftrightarrow \hspace{.15in} T \mbox{ is degenerate.}\ee
\end{thm}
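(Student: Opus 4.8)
The plan is to handle the two assertions in turn: the set-theoretic description of $X^\vee$, which amounts to unwinding the definition of algebraic tangency for the Segre variety, and the hypersurface statement under the $p$-gon inequality, which is a genuine dimension count — the latter being where the real work lies.

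\textbf{Field of definition and the degeneracy criterion.} The Segre variety $X$ is cut out by the $2\times 2$ minors of the generic hypermatrix, polynomials with coefficients in $\Z$, so $X$ is defined over $\Q$; forming the incidence variety $W_X\subseteq\Pj^n\times(\Pj^n)^\ast$ and projecting to the second factor is an elimination-theoretic construction that preserves the field of definition, so $X^\vee$ is defined over $\Q$ as well (cf.\ \cite[p.~14]{GKZ}). Since $X\isom\Pj^{n_1-1}\times\cdots\times\Pj^{n_p-1}$ is smooth, $W_X$ is just the Zariski closure of $\{(z,H):z\in X,\ H\text{ tangent to }X\text{ at }z\}$, and its projection $X^\vee$ is exactly the set of $H$ tangent to $X$ at some point of $X$ (already closed, being the image of a projective variety under a regular map, so no ``by continuity'' points intervene). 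A point of $X$ is $[\otimes\beeta{i}]$ with every $\beeta{i}\neq 0$, and the (affine) tangent space there is spanned by the pure tensors $\beeta{1}\otimes\cdots\otimes\ex{j}\otimes\cdots\otimes\beeta{p}$ over $1\le j\le p$ and $\ex{j}\in k^{n_j}$. For a hyperplane $H=\{x:T(x)=0\}$, the regular functions vanishing on $H$ near $[\otimes\beeta{i}]$ are the local multiples of the linear form $T$, so $H$ is tangent to $X$ there (in the sense of Appendix \ref{geoapp1}) iff $T$ vanishes on that tangent space, i.e.\ iff $T(\beeta{1},\ldots,\ex{j},\ldots,\beeta{p})=0$ for every $j$ and every $\ex{j}$ — which is precisely the first condition of Definition \ref{pderiv} with witnesses $\beeta{i}$. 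Hence $T\in X^\vee$ if and only if $T$ is degenerate.

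\textbf{Equivalence of the two notions of degeneracy.} It remains to see the two bullets of Definition \ref{pderiv} say the same thing. Writing $T(\otimes\beeta{k})=\sum_{(k_1,\ldots,k_p)}T_{k_1\cdots k_p}\,\beeeta{k_1}{1}\cdots\beeeta{k_p}{p}$ and differentiating with respect to the coordinate $\beeeta{j}{i}$, then evaluating at $\otimes\beeta{i}$, yields exactly the value of $T$ at $\otimes\beeta{i}$ with the $i$-th argument replaced by the $j$-th standard basis vector of $k^{n_i}$; letting $j$ range over all coordinates of the $i$-th factor and using linearity in that argument, the vanishing of all these partials is equivalent to $T(\beeta{1},\ldots,\ex{i},\ldots,\beeta{p})=0$ for all $\ex{i}$, and taking $\ex{i}=\beeta{i}$ also forces $T(\otimes\beeta{i})=0$. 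So the two conditions are literally term-by-term the same, and both are ``$T$ degenerate.''

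\textbf{The hypersurface statement (the main obstacle).} The substantive point is that $X^\vee$ has codimension $1$ exactly when $(n_j-1)\le\sum_{i\neq j}(n_i-1)$ for all $j$. Here I would invoke the standard fact that the conormal (incidence) variety of any irreducible subvariety of $\Pj^n$ is pure of dimension $n-1$, so that $\dim X^\vee=n-1-\delta$, where $\delta$ is the dimension of the generic fibre of $W_X\to X^\vee$, i.e.\ of the contact locus of a generic tangent hyperplane; thus $X^\vee$ is a hypersurface iff $\delta=0$. One then computes this contact locus for the Segre variety: for a generic tangent form $T$, plugging the $\beeta{i}$ into all but two chosen arguments turns tangency into a rank condition on an associated bilinear form, and, organizing $T$ as such a family of forms and invoking genericity, the contact locus inside $\prod_i\Pj^{n_i-1}$ is cut out by a linear system of expected dimension $\bigl(\sum_i(n_i-1)\bigr)-2\max_j(n_j-1)$ when this is nonnegative, and reduces to a point otherwise. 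Hence $\delta=0$ — and $X^\vee$ is a hypersurface — precisely under the $p$-gon inequality; this is \cite[p.~446]{GKZ}, and I expect making the genericity precise and identifying the contact locus to be the only real work in the whole statement. When $X^\vee$ is a hypersurface it is irreducible (a continuous image of the irreducible $W_X$) and defined over $\Q$, so after clearing denominators and extracting the primitive part it is the zero locus of a polynomial $\Det$ with coprime integer coefficients, unique up to sign; its vanishing detects membership in $X^\vee$, hence degeneracy. Finally, for $p=2$ and $n_1=n_2=n$, $X$ is the locus of rank-$\le 1$ matrices in $\Pj^{n^2-1}$, and a matrix $B$ (acting as the linear form $\langle B,\cdot\rangle$) is tangent to this locus iff $\det B=0$, so $\Det$ reduces to the ordinary determinant, matching \cite[p.~36]{GKZ}.
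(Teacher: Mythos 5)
First, note that the paper does not actually prove this statement: it is quoted from Gelfand--Kapranov--Zelevinsky with page citations ([GKZ] pp.~14, 36, 446), so there is no internal argument to compare yours against. Judged on its own terms, your first two steps are sound and consistent with the appendix's setup: the Segre variety is cut out by $2\times 2$ minors over $\Z$, duality preserves the field of definition, smoothness of $X$ makes $W_X$ the honest conormal variety, algebraic tangency of $\{T=0\}$ at a smooth point $[\otimes\beeta{i}]$ amounts to $T$ vanishing on the embedded tangent space, and this is literally the first bullet of Definition \ref{pderiv}; the partial-derivative reformulation is the term-by-term computation you describe.

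The genuine gap is in the hypersurface step, which you correctly single out as the only real work. You assert that the contact locus of a generic tangent hyperplane has dimension $\bigl(\sum_i(n_i-1)\bigr)-2\max_j(n_j-1)$ when this is nonnegative and ``reduces to a point otherwise.'' Since $X^\vee$ is a hypersurface iff this generic fibre dimension $\delta$ is zero, your claim would make $X^\vee$ a hypersurface only when the $p$-gon inequality holds \emph{with equality}, and would make it a hypersurface whenever the inequality \emph{fails} --- the opposite of the theorem in both regimes, and your closing sentence ``Hence $\delta=0$ \ldots precisely under the $p$-gon inequality'' does not follow from your own count. Concretely: for format $2\times2\times2$ your formula gives $\delta=1$, i.e.\ $\operatorname{codim}X^\vee=2$, contradicting the existence of Cayley's hyperdeterminant; for format $2\times3$ it gives a point, i.e.\ a hypersurface, whereas $X^\vee$ there is the locus of rank-$\le 1$ matrices, of codimension $2$, with generic contact locus a $\Pj^1$. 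The true picture is the reverse of what you wrote: under the $p$-gon inequality the generic contact locus is a single point ($\delta=0$), while if $n_j-1>\sum_{i\neq j}(n_i-1)$ one gets $\delta=2\max_j(n_j-1)-\sum_i(n_i-1)>0$. So the one direction the theorem actually needs --- $p$-gon inequality $\Rightarrow$ generic contact locus is zero-dimensional --- is exactly the computation that is missing from your sketch (it is carried out in [GKZ, Ch.~14]); the remaining remarks (irreducibility of $X^\vee$ from irreducibility of $W_X$, clearing denominators to get an integral defining polynomial, and the reduction to $\det$ for square matrices) are fine once that is in place.
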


\section{Proof of the bureaucracy lemma}\label{bapp}

Here, we show that a $p$-ary majority gate may be built out of ternary majority gates.

\begin{proof}
We prove the existence of the majority gate by showing that a random gate built in a certain way has a positive probability of being a majority gate.  For simplicity, we assume $p$ is odd.  The even case follows from the odd case: A $(2k-1)$-ary majority gate functions as a $(2k)$-ary majority gate if we simply ignore one of the inputs.

Make a balanced ternary tree of depth $n$ out of $3^0 + 3^1 + \cdots + 3^{n-1}$ ternary majority gates, where $n$ is to be specified later.  Let $S$ be the set of possible assignments of $p$ colors (one for each input slot) to the $3^n$ leaves of the tree.  Each $s \in S$ defines a $p$-ary gate; we prove that, for $n$ large enough, a positive fraction of these are majority gates.  Let $T$ be the set of $p$-tuples of input values with exactly $\frac{p+1}{2}$ coordinates equal to 1.  For $(s,t) \in S \times T$, let $\chi(s,t)$ be the bit returned by the gate defined by $s$ on input $t$.

If each input of a $3$-ary majority gate is chosen to be $1$ with probability $x$, and $0$ with probability $1-x$, we may compute the probability $f(x)$ that the resulting bit is $1$:
\be 
f(x) = {3 \choose 2}x^2(1-x) + {3 \choose 3}x^3 = x^2(3-2x).
\ee
Fixing the choice of $t \in T$ and letting $s$ vary uniformly, it's as if we're assigning $1$ or $0$ to each leaf with probabilities $\frac{p+1}{2}$ and $\frac{p-1}{2}$, respectively.  We have
\be 
\frac{1}{|S|}\sum_{s \in S} \chi(s,t) = f^n\left(\frac{p+1}{2}\right),
\ee
where $f^n$ denotes iterated composition.  Whenever $\frac{1}{2} < \xi \leq 1$, it's easy to see that $f^n(\xi)$ approaches $1$ as $n$ becomes large.\footnote{In fact, the convergence is very fast.  While we're ignoring computational complexity questions in this paper, more careful bookkeeping shows that this proof gives a polynomial bound (in $p$) on the size of the tree.}  Choose $n$ so that $f^n\left(\frac{p+1}{2}\right) > 1 - \frac{1}{|T|}$.  Now,
\bea
\frac{1}{|S|}\sum_{s \in S} \sum_{t \in T} \chi(s,t) &=& \sum_{t \in T} \frac{1}{|S|}\sum_{s \in S} \chi(s,t)\nn\\
                                                     &=& \sum_{t \in T} f^n\left(\frac{p+1}{2}\right)\nn\\
                                                     &=& |T| f^n\left(\frac{p+1}{2}\right)\nn\\
                                                     &>& |T| \left(1 - \frac{1}{|T|}\right) = |T|-1.
\eea
This is an average over $S$, and it follows that there must be some particular $s_0 \in S$ so that the inner sum $\sum_{t \in T} \chi(s_0,t)$ is greater than $|T|-1$.  But that sum clearly takes an integer value between $0$ and $|T|$, so it must take the value $|T|$, and we have $\chi(s_0,t) = 1$ for every $t \in T$.  That is, the gate specified by $s_0$ returns $1$ whenever exactly $\frac{p+1}{2}$ of the inputs are $1$.  By construction, setting more inputs to $1$ will not alter this outcome, so the gate returns $1$ whenever a majority of the inputs are $1$.  By the symmetry between $1$ and $0$ in each ternary component, the gate returns $0$ whenever a majority of the inputs are $0$.  Thus, $s_0$ defines a $p$-ary majority gate.
\end{proof}

\noindent We illustrate a $5$-ary majority gate of the type obtained in the bureaucracy lemma:
\begin{center} \includegraphics[scale=1.4]{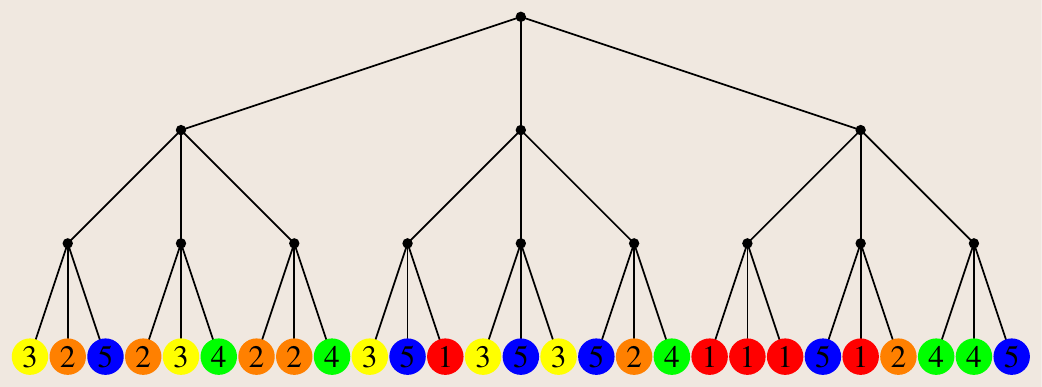} \end{center}

\section{Simulating infinite random sources}\label{infapp}
Say Alice and Bob are both equipped with private, full-strength random sources; they wish to simulate a private, full-strength random source for some other player.

For technical reasons, we will take ``full-strength random source'' to mean ``a random source capable of sampling from any Haar measure.''  This restriction is mostly to avoid venturing into the wilds of set theory.  After all, the pathologies available to probability spaces closely reflect the chosen set-theoretic axioms.  We call these restricted spaces ``Haar spaces.''

\begin{defn} A probability space $P$ is a \textbf{Haar space} if there exists some compact topological group $G$, equipped with its normalized Haar measure, admitting a measure-preserving map to $P$.
\end{defn}

\begin{rmk}
The following probability spaces are all Haar spaces:
any continuous distribution on the real line;
any standard probability space in the sense of Rokhlin \cite{rokhlin};
any Borel space or Borel measure on a Polish space;
any finite probability space;
arbitrary products of the above.
\end{rmk}

\noindent The following construction is an easy generalization of the classical construction given in Proposition \ref{diecon}.

\begin{prop}\label{robustgroups}\label{groupconstruction}
Let $G$ be a compact group with normalized Haar measure.  Now, $p$ players equipped with private sources that sample from $G$ may $(p-1)$-robustly simulate an source that samples from $G$.
\end{prop}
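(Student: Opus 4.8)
The plan is to generalize the construction of Proposition~\ref{diecon} essentially verbatim, replacing the uniform measure on a finite group of order $d$ by the normalized Haar measure $\mu$ on the compact group $G$. Each of the $p$ players uses their private source, which samples from $(G,\mu)$, to produce a $\mu$-distributed element $g_i \in G$; since the sources are private and independent, $g_1,\dots,g_p$ are independent. The simulated source then outputs the product $g := g_1 g_2 \cdots g_p$. Because multiplication $G^p \to G$ is continuous, hence Borel measurable, $g$ is a well-defined $G$-valued random variable, so this really is a finite computation performed on the queried data.

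The key structural fact I would invoke is that the normalized Haar measure on a \emph{compact} group is \emph{bi-invariant}: compact groups are unimodular (the modular function is a continuous homomorphism from $G$ into $\mathbb{R}_{>0}$, so its image is a compact subgroup of $\mathbb{R}_{>0}$ and hence trivial), so the left and right Haar measures coincide and $\mu$ is invariant under both $x \mapsto ax$ and $x \mapsto xb$ for all $a,b \in G$. Using this, $(p-1)$-robustness is checked directly. Fix an index $i$ and alter the joint distribution of the other $p-1$ sources arbitrarily, keeping $g_i$ independent of that block and $\mu$-distributed. Condition on the values of all $g_j$ with $j \neq i$; writing $a := g_1 \cdots g_{i-1}$ and $b := g_{i+1} \cdots g_p$, the output is $g = a\,g_i\,b$, whose conditional law is $\mu$ by bi-invariance, independently of $a$ and $b$. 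Averaging over the conditioning, the law of $g$ is $\mu$ no matter how the remaining $p-1$ sources behave. Hence the distribution of the simulated source is unchanged whenever the joint distribution of any $p-1$ of the inputs is altered, which is exactly the required $(p-1)$-robustness, and that distribution is $\mu$, so the simulated source samples from $G$.

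The only real work is measure-theoretic hygiene: justifying ``condition on the other coordinates'' and the measurability of the maps involved. I would sidestep any regularity concerns by recasting the verification in test-function form, showing that whenever the $i$th factor is $\mu$-distributed and independent of the rest, the expectation of $f(g_1 \cdots g_p)$ equals $\int_G f\, d\mu$ for every continuous $f : G \to \mathbb{R}$; this follows from Fubini's theorem together with left/right invariance, and it pins down the law of $g$ as $\mu$. The main conceptual obstacle, such as it is, is simply recalling unimodularity of compact groups; once that is in hand, the finite argument of Proposition~\ref{diecon} transfers with no change.
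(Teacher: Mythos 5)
Your proposal is correct and follows essentially the same route as the paper: both take the product $g_1g_2\cdots g_p$ of independent Haar samples and deduce $(p-1)$-robustness from the invariance of the normalized Haar measure. Your write-up merely spells out the details the paper leaves implicit (unimodularity/bi-invariance of compact groups and the conditioning/Fubini verification), which is a fine elaboration rather than a different argument.
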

\begin{proof}
We provide a direct construction.  The $i^{\rm th}$ player uses the Haar measure to pick $g_i \in G$ at random.  The output of the simulated source will be the product $g_1g_2 \cdots g_p$.

It follows from the invariance of the Haar measure that any $p$-subset of \be \{g_1, g_2,...,g_p, g_1g_2 \cdots g_p\} \ee is independent!  Thus, this is a $(p-1)$-robust simulation.
\end{proof}

%



\begin{cor}
If $p$ players are equipped with private, full-strength random sources, they may $(p-1)$-robustly simulate may simulate a private, full-strength random source for some other player.
\end{cor}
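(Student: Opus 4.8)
The plan is to deduce this from Proposition \ref{groupconstruction} together with the stipulated meaning of ``full-strength random source'' as one that can sample from any Haar space. So fix an arbitrary Haar space $P$; I want to show the $p$ players can $(p-1)$-robustly simulate a source sampling from $P$. By the definition of Haar space, there is a compact topological group $G$, with its normalized Haar measure, admitting a measure-preserving map $\varphi\colon G \to P$.

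Next I would invoke that each player's full-strength source can in particular sample from the normalized Haar measure on $G$ (this space lies in the collection, since a full-strength source is the strongest one in it). Applying Proposition \ref{groupconstruction} to this group $G$, the $p$ players produce $g_1, g_2, \ldots, g_p \in G$ from their private Haar-distributed sources and output the product $g_1 g_2 \cdots g_p$; this is a $(p-1)$-robust simulation of a source sampling from $G$.

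Finally, I would post-compose with $\varphi$: the players output $\varphi(g_1 g_2 \cdots g_p) \in P$. Since the distribution of $g_1 g_2 \cdots g_p$ is unchanged whenever the joint distribution of any $p-1$ of the $g_i$ is altered, the distribution of its image under the fixed map $\varphi$ is likewise unchanged; hence this is a $(p-1)$-robust simulation of a source sampling from $P$. As $P$ was an arbitrary Haar space, the simulated source is full-strength, which is what we wanted.

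I expect no real obstacle here; the only point requiring a word of care is that robustness is preserved under post-composition with a fixed measurable map, which is immediate from the definition of $r$-robustness (altering $r$ of the inputs leaves the output distribution fixed, so it leaves any deterministic function of the output fixed as well). The entire content of the corollary is thus Proposition \ref{groupconstruction} plus unwinding the definition of full-strength.
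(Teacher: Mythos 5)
Your proposal is correct and is essentially the paper's own argument: simulate the Haar measure on a compact group via Proposition \ref{groupconstruction}, then pass to the quotient space $P$ (your post-composition with $\varphi$ is exactly what the paper means by "sampling from all quotients of such spaces"). Your extra remark that robustness survives post-composition with a fixed measure-preserving map is a harmless, correct elaboration of the same proof.
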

\begin{proof}
By Proposition \ref{groupconstruction}, they may simulate a private random source capable of sampling from any compact group with Haar measure.  But such a random source may also sample from all quotients of such spaces.
\end{proof}

\begin{cor}
If $p$ players are equipped with private random sources capable of sampling from the unit interval, they may $(p-1)$-robustly simulate a random source capable of sampling from any standard probability space---in particular, any finite probability space.
\end{cor}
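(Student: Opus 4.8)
The plan is to reduce the statement to Proposition~\ref{groupconstruction} applied to the circle group, exploiting the fact that the unit interval, as a probability space, already ``is'' that group.

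\emph{Step 1: the interval sources can sample from the circle.} The circle $\mathbb{T} = \mathbb{R}/\mathbb{Z}$, equipped with its normalized Haar measure, is isomorphic as a probability space to $[0,1]$ with Lebesgue measure: both are atomless standard spaces of total mass one, so Rokhlin's classification \cite{rokhlin} identifies them (concretely, the reduction-mod-$1$ map does the job away from a null set). Hence, since a random source is closed under quotients, any source capable of sampling from $[0,1]$ is also capable of sampling from $\mathbb{T}$.

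\emph{Step 2: simulate a circle-valued source.} Apply Proposition~\ref{groupconstruction} with $G = \mathbb{T}$: each of the $p$ players uses the interval source to pick a uniform point of $\mathbb{T}$, and the players $(p-1)$-robustly simulate a source that samples from $\mathbb{T}$. Being a random source, this simulated source then also samples from every measure-preserving image of $\mathbb{T} \cong [0,1]$.

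\emph{Step 3: every standard space is such an image.} It remains to recall the classical fact that each standard probability space---and in particular each finite one---is a measure-preserving image of $[0,1]$ with Lebesgue measure; this is immediate from Rokhlin's structure theorem \cite{rokhlin}, which presents a standard space as a disjoint union of a subinterval and countably many atoms, onto which $[0,1]$ visibly surjects measure-preservingly. Combining the three steps yields the corollary. The main thing to be careful about is the usual bookkeeping with null sets in Steps~1 and~3, which is routine in the ``up to measure zero'' category used throughout the paper; note that one cannot simply invoke the preceding corollary, since an interval source is \emph{not} full-strength (it cannot sample from non-metrizable compact groups such as $\{0,1\}^{\mathbb{R}}$), but it also need not be, because only standard targets are required here.
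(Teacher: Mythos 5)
Your proposal is correct and takes essentially the same route as the paper: the paper's entire proof is ``Immediate from Proposition \ref{groupconstruction},'' and your three steps (identifying $[0,1]$ mod null sets with the circle group under Haar measure, invoking the proposition for $G=\mathbb{R}/\mathbb{Z}$, and passing to quotients to reach all standard spaces) are precisely the elaboration of that immediacy. Your closing remark that the interval source need not be full-strength, so the preceding corollary cannot simply be cited, is a correct and sensible observation consistent with the paper's intent.
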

\begin{proof}
Immediate from Proposition \ref{groupconstruction}.
\end{proof}

\bibliographystyle{alpha}{}
\bibliography{references}

\end{document}